\keywords{Adhesive categories, rule algebras, Double Pushout (DPO) rewriting, stochastic mechanics}
\newcolumntype{L}[1]{>{\raggedright\let\newline\\\arraybackslash\hspace{0pt}}m{#1}}
\newcolumntype{C}[1]{>{\centering\let\newline\\\arraybackslash\hspace{0pt}}m{#1}}
\newcolumntype{R}[1]{>{\raggedleft\let\newline\\\arraybackslash\hspace{0pt}}m{#1}}
\newcolumntype{D}[2]{%
    >{\adjustbox{angle=#1,lap=\width-(#2)}\bgroup}
    l%
    <{\egroup}%
}
\newcommand*\rot{\multicolumn{1}{D{45}{1em}}}
\newcommand{\YES}{\ensuremath{\checkmark}}
\newcommand{\MAYBE}{\ensuremath{?}}
\long\def\ifnodedefined#1#2#3{
  \@ifundefined{pgf@sh@ns@#1}{#3}{#2}}
\newcommand\aeundefinenode[1]{
  \expandafter\ifx\csname pgf@sh@ns@#1\endcsname\relax
  \else
    \global\expandafter\let\csname pgf@sh@ns@#1\endcsname\relax
  \fi
}
\newcommand\aeundefinethesenodes[1]{
  \foreach \myn  in {#1} 
    {
      \expandafter\aeundefinenode\expandafter{\myn}
    }
}
\tikzset{
	every picture/.append style={baseline={([yshift=-.5ex]current bounding box.center)},
	every loop/.style={}
	}
	}
\tikzset{
   commutative diagrams/.cd,
   arrow style=tikz,
   diagrams={>=latex}}
\tikzset{
    	lablRot/.style={
		anchor=center,
		rotate=#1,
		inner sep=.5mm},
	lablRot/.default=135,
	shortA/.style={
		shorten >=5
	},
	veryShortA/.style={
		shorten >=10
	},
	veryVeryShortA/.style={
		shorten >=3mm
	},
	partial base/.style={
		draw,
		densely dotted, thin
	},
	partialDel/.style={
		partial base,
		-{Rays[scale=0.5]},
	},
	partialId/.style={
		partial base,
	},
	partialCrea/.style={
		partial base,
		{Rays[scale=0.5]}-,
	},
	edge base/.style={
		draw=black!70,
		thick
	},
	undirEdge/.style={
		edge base
	},
	dirEdge/.style={
		edge base,
		decoration={
         	   markings,
         	   mark=at position 0.5 with {\arrow[scale=0.9,xshift=3.333pt]{latex}},
       		 },
      		postaction=decorate
	},
	otherDirEdge/.style={
		edge base,
		decoration={
         	   markings,
         	   mark=at position 0.5 with {\arrow[rotate=180,scale=0.9,xshift=3.333pt]{latex}},
       		 },
      		postaction=decorate
	},
	match/.style={
		draw=blue!40!black,
		thick,
		decoration={
         	   markings,
         	   mark=at position 0.5 with {\arrow[scale=0.9,xshift=3.333pt]{latex}},
       		 },
      		postaction=decorate
	},
	edgeMatch/.style={
		match,
		double
	},
	vertices/.style={
                draw,
                minimum width=1.5mm,
                outer sep=0pt,
                inner sep=0pt,
                align=center,
                circle,
                fill=black
  	},
	posNodes/.style={
		anchor=center,
		outer sep=0pt
	},
	, 
	sd/.style={
		thick
	},
	o/.style={
        shorten <=#1,
        decoration={
            markings,
            mark={
                at position 0
                with {
                    \draw circle [radius=#1]; 
                }
            }
        },
        postaction=decorate
    },
    o/.default=2pt,
    	oe/.style={
        shorten >=#1,
        decoration={
            markings,
            mark={
                at position 1
                with {
                    \draw circle [radius=#1]; 
                }
            }
        },
        postaction=decorate
    },
    oe/.default=2pt,
    , 
    	algMorph/.style={
		fill=white,
		regular polygon,
		regular polygon sides=3,
		shape border rotate=180,
		draw=black,thick,
		inner sep=0pt
	},
	coAlgMorph/.style={
		fill=white,
		regular polygon,
		regular polygon sides=3,
		draw=black,thick
	},
	vertexWL/.style={
		draw=black,
		thick
	},
	vId/.style={
		vertexWL
	},
	vC/.style={
		vertexWL,
		Rays-
	},
	vA/.style={
		vertexWL,
		-Rays
	},
	, 
	vCprob/.style={
		vertexWL,
		Rays[red]-
	},
	vAprob/.style={
		vertexWL,
		-Rays[red]
	},
	vCrep/.style={
		vertexWL,
		Rays[blue]-
	},
	vArep/.style={
		vertexWL,
		-Rays[blue]
	},
	vM/.style={ 
		vertexWL,
		draw=blue!70!black
	},
	edgeWL/.style={
		draw=black,
		thick
		},
	eId/.style={
		edgeWL,
		|-|,
		decorate,
		decoration={
			snake,
			pre length=1.4mm,
			post length=0.7mm,
			amplitude=1mm,
      			segment length=4mm}
	},
	eC/.style={
		edgeWL,
		Rays-|,
		decorate,
		decoration={
			snake,
			pre length=1.9mm,
			post length=0.4mm,
			amplitude=1mm,
      			segment length=4mm}
	},
	eA/.style={
		edgeWL,
		|-Rays,
		decorate,
		decoration={
			snake,
			pre length=1.4mm,
			post length=1.4mm,
			amplitude=1mm,
      			segment length=4mm}
	},
	eCunProb/.style={
		edgeWL,
		Rays[blue]-|,
		decorate,
		decoration={
			snake,
			pre length=1.9mm,
			post length=1.4mm,
			amplitude=1mm,
      			segment length=4mm}
	},
	eAunProb/.style={
		edgeWL,
		|-Rays[blue],
		decorate,
		decoration={
			snake,
			pre length=1.4mm,
			post length=1.4mm,
			amplitude=1mm,
      			segment length=4mm}
	},
	eM/.style={ 
		edgeWL,
		draw=blue!70!black,
		|-|
	},
	eEPL/.style={
		draw=black!70!white,
		very thin,
		dashed
	},
	mMN/.style={
		matrix of math nodes,
		ampersand replacement=\&,
		nodes in empty cells,
		row sep=1.5em,column sep=1.5em
		},
		auxWLnode/.style={
			draw,
			densely dotted,
			fill=white,
			regular polygon, regular polygon sides=4,
			inner sep=3pt
		}
}
\newcommand{\createNodeOnDemand}[5]{%
\pgfmathtruncatemacro{\xPos}{#1}
\pgfmathtruncatemacro{\yPos}{#2}
\pgfmathsetmacro{\X}{(\yPos-1)*\colSEP}
\pgfmathsetmacro{\Y}{-(\xPos-1)*\rowSEP} 


\ifnodedefined{m-\xPos-\yPos}{
\aeundefinenode{m-\xPos-\yPos}
}{
}
	\node[vertices,fill=#3] (m-\xPos-\yPos) at (\X,\Y) {};

\expandafter\ifstrequal\expandafter{#5}{t}{%
\node at ($(\X,\Y) +(0,\lblSEP)$) {{\tiny $#4$}};}{
\node at ($(\X,\Y) -(0,\lblSEP)$) {{\tiny $#4$}};} 

}
\newcommand{\vI}[6]{

\createNodeOnDemand{#1}{#2}{#3}{#4}{t}

\createNodeOnDemand{#1+1}{#2}{#5}{#6}{b}

\pgfmathtruncatemacro{\xPos}{#1}
\pgfmathtruncatemacro{\yPos}{#2}

    \begin{pgfonlayer}{bg}    
  	  \draw ($(m-\xPos-\yPos.center)+(0,-\rowSEP)$) edge[partialId] ($(m-\xPos-\yPos.center)+(0,0)$); 
    \end{pgfonlayer}

}
\newcommand{\lA}[5]{%


\pgfmathtruncatemacro{\xPos}{#1}
\pgfmathtruncatemacro{\yPos}{#2}

\ifstrequal{#5}{r}{
	 \begin{pgfonlayer}{bg}    
		\draw (m-\xPos-\yPos.center) 
		edge[bend right,out=-\loopIN,in=-90,undirEdge,draw=#3]  %
		($(m-\xPos-\yPos.center)+(0.4*\colSEP,0)$);
		\draw (m-\xPos-\yPos.center) 
		edge[bend left,out=\loopIN,in=90,undirEdge,draw=#3]  %
		($(m-\xPos-\yPos.center)+(0.4*\colSEP,0)$);
		\draw ($(m-\xPos-\yPos.center)+(0.4*\colSEP,0)$) 
		edge[partialDel] ($(m-\xPos-\yPos.center)+(0.4*\colSEP,0.4*\rowSEP)$);
		\node at ($(m-\xPos-\yPos.center)+(0.4*\colSEP,-\lblSEP)$) {{\tiny{$#4$}}}; 
	\end{pgfonlayer}}{
	\begin{pgfonlayer}{bg}    
		\draw (m-\xPos-\yPos.center) 
		edge[bend right,out=-\loopIN,in=-90,undirEdge,draw=#3]  %
		($(m-\xPos-\yPos.center)-(0.4*\colSEP,0)$);
		\draw (m-\xPos-\yPos.center) 
		edge[bend left,out=\loopIN,in=90,undirEdge,draw=#3]  %
		($(m-\xPos-\yPos.center)-(0.4*\colSEP,0)$);
		\draw ($(m-\xPos-\yPos.center)-(0.4*\colSEP,0)$) 
		edge[partialDel] ($(m-\xPos-\yPos.center)-(0.4*\colSEP,-0.4*\rowSEP)$);
		\node at ($(m-\xPos-\yPos.center)-(0.4*\colSEP,\lblSEP)$) {{\tiny{$#4$}}}; 
	\end{pgfonlayer}}
}
\newcommand{\lC}[5]{%


\pgfmathtruncatemacro{\xPos}{#1}
\pgfmathtruncatemacro{\yPos}{#2}

\ifstrequal{#5}{r}{
	 \begin{pgfonlayer}{bg}    
		\draw (m-\xPos-\yPos.center) 
		edge[bend right,out=-\loopIN,in=-90,undirEdge,draw=#3]  %
		($(m-\xPos-\yPos.center)+(0.4*\colSEP,0)$);
		\draw (m-\xPos-\yPos.center) 
		edge[bend left,out=\loopIN,in=90,undirEdge,draw=#3]  %
		($(m-\xPos-\yPos.center)+(0.4*\colSEP,0)$);
		\draw ($(m-\xPos-\yPos.center)+(0.4*\colSEP,-0.4*\rowSEP)$) 
		edge[partialCrea] ($(m-\xPos-\yPos.center)+(0.4*\colSEP,0)$);
		\node at ($(m-\xPos-\yPos.center)+(0.4*\colSEP,\lblSEP)$) {{\tiny{$#4$}}}; 
	\end{pgfonlayer}}{
	\begin{pgfonlayer}{bg}    
		\draw (m-\xPos-\yPos.center) 
		edge[bend right,out=-\loopIN,in=-90,undirEdge,draw=#3]  %
		($(m-\xPos-\yPos.center)-(0.4*\colSEP,0)$);
		\draw (m-\xPos-\yPos.center) 
		edge[bend left,out=\loopIN,in=90,undirEdge,draw=#3]  %
		($(m-\xPos-\yPos.center)-(0.4*\colSEP,0)$);
		\draw ($(m-\xPos-\yPos.center)+(-0.4*\colSEP,-0.4*\rowSEP)$) 
		edge[partialCrea] ($(m-\xPos-\yPos.center)-(0.4*\colSEP,0)$);
		\node at ($(m-\xPos-\yPos.center)+(-0.4*\colSEP,\lblSEP)$) {{\tiny{$#4$}}}; 
	\end{pgfonlayer}}
}
\newcommand{\lI}[7]{%

%

\pgfmathtruncatemacro{\xPos}{#1}
\pgfmathtruncatemacro{\yPos}{#2}

\ifstrequal{#7}{r}{
	  \begin{pgfonlayer}{bg}    
		\draw (m-\xPos-\yPos.center) 
		edge[bend right,out=-\loopIN,in=-90,undirEdge,draw=#3]  %
		($(m-\xPos-\yPos.center)+(0.4*\colSEP,0)$); %
		\draw (m-\xPos-\yPos.center) 
		edge[bend left,out=\loopIN,in=90,undirEdge,draw=#3]  %
		($(m-\xPos-\yPos.center)+(0.4*\colSEP,0)$);
	 \end{pgfonlayer}
	\begin{pgfonlayer}{aux}    
		\draw ($(m-\xPos-\yPos.center)-(0,\rowSEP)$) 
		edge[bend right,out=-\loopIN,in=-90,undirEdge,draw=#5]  %
		($(m-\xPos-\yPos.center)+(0.4*\colSEP,-\rowSEP)$); %
		\draw ($(m-\xPos-\yPos.center)-(0,\rowSEP)$) 
		edge[bend left,out=\loopIN,in=90,undirEdge,draw=#5]  %
		($(m-\xPos-\yPos.center)+(0.4*\colSEP,-\rowSEP)$); %
		\draw ($(m-\xPos-\yPos.center)+(0.4*\colSEP,0)$) 
		edge[partialId] ($(m-\xPos-\yPos.center)+(0.4*\colSEP,-\rowSEP)$);
		\node at ($(m-\xPos-\yPos.center)+(0.4*\colSEP,\lblSEP)$) {{\tiny{$#4$}}}; 
		\node at ($(m-\xPos-\yPos.center)+(0.4*\colSEP,-\rowSEP-\lblSEP)$) {{\tiny{$#6$}}}; 
	\end{pgfonlayer}}{
	\begin{pgfonlayer}{bg}    
		\draw (m-\xPos-\yPos.center) 
		edge[bend right,out=-\loopIN,in=-90,undirEdge,draw=#3]  %
		($(m-\xPos-\yPos.center)+(-0.4*\colSEP,0)$); %
		\draw (m-\xPos-\yPos.center) 
		edge[bend left,out=\loopIN,in=90,undirEdge,draw=#3]  %
		($(m-\xPos-\yPos.center)+(-0.4*\colSEP,0)$);
	 \end{pgfonlayer}
	\begin{pgfonlayer}{aux}    
		\draw ($(m-\xPos-\yPos.center)-(0,\rowSEP)$) 
		edge[bend right,out=-\loopIN,in=-90,undirEdge,draw=#5]  %
		($(m-\xPos-\yPos.center)+(-0.4*\colSEP,-\rowSEP)$); %
		\draw ($(m-\xPos-\yPos.center)-(0,\rowSEP)$) 
		edge[bend left,out=\loopIN,in=90,undirEdge,draw=#5]  %
		($(m-\xPos-\yPos.center)+(-0.4*\colSEP,-\rowSEP)$); %
		\draw ($(m-\xPos-\yPos.center)+(-0.4*\colSEP,0)$) 
		edge[partialId] ($(m-\xPos-\yPos.center)+(-0.4*\colSEP,-\rowSEP)$);
		\node at ($(m-\xPos-\yPos.center)+(-0.4*\colSEP,\lblSEP)$) {{\tiny{$#4$}}}; 
		\node at ($(m-\xPos-\yPos.center)+(-0.4*\colSEP,-\rowSEP-\lblSEP)$) {{\tiny{$#6$}}}; 
	\end{pgfonlayer}}
}
\newcommand{\eA}[5]{%

%

\pgfmathtruncatemacro{\xPos}{#1}
\pgfmathtruncatemacro{\yPos}{#2}


\ifstrequal{#3}{<}{\def\eType{otherDirEdge}}{}
\ifstrequal{#3}{=}{\def\eType{undirEdge}}{}
\ifstrequal{#3}{>}{\def\eType{dirEdge}}{}

\begin{pgfonlayer}{bg}    
	\draw ($(m-\xPos-\yPos.center)+(0,0)$) edge[\eType,draw=#4]  ($(m-\xPos-\yPos.center)+(\colSEP,0)$); 
	\draw ($(m-\xPos-\yPos.center)+(0.5*\colSEP,0)$) 
		edge[partialDel]  ($(m-\xPos-\yPos.center)+(0.5*\colSEP,0.4*\rowSEP)$);
\end{pgfonlayer}

\node at ($(m-\xPos-\yPos.center)+(0.5*\colSEP,-\lblSEP)$) {{\tiny{$#5$}}}; 
}
\newcommand{\eC}[5]{%

%

\pgfmathtruncatemacro{\xPos}{#1}
\pgfmathtruncatemacro{\yPos}{#2}


\ifstrequal{#3}{<}{\def\eType{otherDirEdge}}{}
\ifstrequal{#3}{=}{\def\eType{undirEdge}}{}
\ifstrequal{#3}{>}{\def\eType{dirEdge}}{}

\begin{pgfonlayer}{bg}    
	\draw ($(m-\xPos-\yPos.center)+(0,0)$) edge[\eType,draw=#4]  ($(m-\xPos-\yPos.center)+(\colSEP,0)$); 
	\draw ($(m-\xPos-\yPos.center)+(0.5*\colSEP,-0.4*\rowSEP)$) 
		edge[partialCrea]  ($(m-\xPos-\yPos.center)+(0.5*\colSEP,0)$);
\end{pgfonlayer}

\node at ($(m-\xPos-\yPos.center)+(0.5*\colSEP,\lblSEP)$) {{\tiny{$#5$}}}; 
}
\newcommand{\eI}[7]{%

%
%
%

\pgfmathtruncatemacro{\xPos}{#1}
\pgfmathtruncatemacro{\yPos}{#2}


\ifstrequal{#3}{<}{\def\eType{otherDirEdge}}{}
\ifstrequal{#3}{=}{\def\eType{undirEdge}}{}
\ifstrequal{#3}{>}{\def\eType{dirEdge}}{}

\begin{pgfonlayer}{aux}    
	\draw ($(m-\xPos-\yPos.center)+(0,-\rowSEP)$) 
	edge[\eType,draw=#6]  ($(m-\xPos-\yPos.center)+(\colSEP,-\rowSEP)$);
\end{pgfonlayer}

\begin{pgfonlayer}{bg}    
	\draw ($(m-\xPos-\yPos.center)+(0.5*\colSEP,-\rowSEP)$) 
		edge[partialId]  ($(m-\xPos-\yPos.center)+(0.5*\colSEP,0)$);
	\draw ($(m-\xPos-\yPos.center)+(0,0)$) edge[\eType,draw=#4]  ($(m-\xPos-\yPos.center)+(\colSEP,0)$); 
\end{pgfonlayer}

\node at ($(m-\xPos-\yPos.center)+(0.5*\colSEP,\lblSEP)$) {{\tiny{$#5$}}}; 
\node at ($(m-\xPos-\yPos.center)+(0.5*\colSEP,-\rowSEP-\lblSEP)$) {{\tiny{$#7$}}}; 
}
\newcommand{\eExtraA}[5]{%

%

\pgfmathtruncatemacro{\xPos}{#1}
\pgfmathtruncatemacro{\yPos}{#2}


\ifstrequal{#3}{<}{\def\eType{otherDirEdge}}{}
\ifstrequal{#3}{=}{\def\eType{undirEdge}}{}
\ifstrequal{#3}{>}{\def\eType{dirEdge}}{}

 \begin{pgfonlayer}{aux}    
	\draw ($(m-\xPos-\yPos.center)+(0,0)$) 
	edge[bend right,out=-50,in=210,\eType,draw=#4]  %
	($(m-\xPos-\yPos.center)+(0.7*\colSEP,-\xtraEdgeSEP*\rowSEP)$); %
	\draw ($(m-\xPos-\yPos.center)+(0.7*\colSEP,-\xtraEdgeSEP*\rowSEP)$) 
	edge[bend right,out=-35,in=-150,undirEdge,draw=#4]  %
	($(m-\xPos-\yPos.center)+(\colSEP,0)$); %
	\draw ($(m-\xPos-\yPos.center)+(0.7*\colSEP,-\xtraEdgeSEP*\rowSEP)$) 
	edge[partialDel] ($(m-\xPos-\yPos.center)+(0.7*\colSEP,0.4*\rowSEP)$);
 \end{pgfonlayer}

 	\node at ($(m-\xPos-\yPos.center)+(0.7*\colSEP,-\xtraEdgeSEP*\rowSEP-\lblSEP)$) {{\tiny$#5$}}; 

}
\newcommand{\eExtraC}[5]{%

%

\pgfmathtruncatemacro{\xPos}{#1}
\pgfmathtruncatemacro{\yPos}{#2}


\ifstrequal{#3}{<}{\def\eType{otherDirEdge}}{}
\ifstrequal{#3}{=}{\def\eType{undirEdge}}{}
\ifstrequal{#3}{>}{\def\eType{dirEdge}}{}

 \begin{pgfonlayer}{aux}    
	\draw ($(m-\xPos-\yPos.center)+(0,0)$) 
	edge[bend right,out=50,in=-210,\eType,draw=#4]  %
	($(m-\xPos-\yPos.center)+(0.7\colSEP,\xtraEdgeSEP*\rowSEP)$); %
	\draw ($(m-\xPos-\yPos.center)+(0.7\colSEP,\xtraEdgeSEP*\rowSEP)$) 
	edge[bend right,out=35,in=150,undirEdge,draw=#4]  %
	($(m-\xPos-\yPos.center)+(\colSEP,0)$); %
	\draw ($(m-\xPos-\yPos.center)+(0.7*\colSEP,-0.4*\rowSEP)$) 
	edge[partialCrea] ($(m-\xPos-\yPos.center)+(0.7*\colSEP,\xtraEdgeSEP*\rowSEP)$);
 \end{pgfonlayer}

 	\node at ($(m-\xPos-\yPos.center)+(0.7*\colSEP,\xtraEdgeSEP*\rowSEP+\lblSEP)$) {{\tiny$#5$}}; 

}
\newcommand{\eExtraI}[7]{%

%

\pgfmathtruncatemacro{\xPos}{#1}
\pgfmathtruncatemacro{\yPos}{#2}


\ifstrequal{#3}{<}{\def\eType{otherDirEdge}}{}
\ifstrequal{#3}{=}{\def\eType{undirEdge}}{}
\ifstrequal{#3}{>}{\def\eType{dirEdge}}{}

 \begin{pgfonlayer}{bg}    
	\draw ($(m-\xPos-\yPos.center)+(0,0)$) 
	edge[bend right,out=50,in=-210,\eType,draw=#4]  %
	($(m-\xPos-\yPos.center)+(0.7*\colSEP,\xtraEdgeSEP*\rowSEP)$); %
	\draw ($(m-\xPos-\yPos.center)+(0.7*\colSEP,\xtraEdgeSEP*\rowSEP)$) 
	edge[bend right,out=35,in=150,undirEdge,draw=#4]  %
	($(m-\xPos-\yPos.center)+(\colSEP,0)$); %
 \end{pgfonlayer}

 	\node at ($(m-\xPos-\yPos.center)+(0.7*\colSEP,\xtraEdgeSEP*\rowSEP+\lblSEP)$) {{\tiny$#5$}}; 

%

\pgfmathtruncatemacro{\xPos}{#1+1}
\pgfmathtruncatemacro{\yPos}{#2}

 \begin{pgfonlayer}{aux}    
	\draw ($(m-\xPos-\yPos.center)+(0,0)$) 
	edge[bend right,out=-50,in=210,\eType,draw=#6]  %
	($(m-\xPos-\yPos.center)+(0.7*\colSEP,-\xtraEdgeSEP*\rowSEP)$); %
	\draw ($(m-\xPos-\yPos.center)+(0.7*\colSEP,-\xtraEdgeSEP*\rowSEP)$) 
	edge[bend right,out=-35,in=-150,undirEdge,draw=#6]  %
	($(m-\xPos-\yPos.center)+(\colSEP,0)$); %
	\draw ($(m-\xPos-\yPos.center)+(0.7*\colSEP,-\xtraEdgeSEP*\rowSEP)$) 
	edge[partialId] ($(m-\xPos-\yPos.center)+(0.7*\colSEP,\xtraEdgeSEP*\rowSEP+\rowSEP)$);
 \end{pgfonlayer}

 	\node at ($(m-\xPos-\yPos.center)+(0.7*\colSEP,-\xtraEdgeSEP*\rowSEP-\lblSEP)$) {{\tiny$#7$}}; 

}
\gdef\tpScale{0.7}
\newcommand{\tP}[1]{\ensuremath{\vcenter{\hbox{\begin{tikzpicture}[transform shape, scale=\tpScale]\normalsize#1\end{tikzpicture}}}}} 
\newcommand{\tPL}[1]{\begin{tikzpicture}[transform shape, scale=\tpScale]\normalsize#1\end{tikzpicture}} 
\gdef\rowSEP{0.8}
\gdef\colSEP{0.8}
\gdef\loopIN{25} 
\gdef\lblSEP{0.2} 
\gdef\xtraEdgeSEP{0.3} 
\colorlet{h1color}{blue!70!black} 
\colorlet{h2color}{orange!90!black} 
\colorlet{h3color}{blue!40!white} 
\colorlet{h4color}{green!40!black} 
\def\bstr{b}
\def\bfstr{bf}
\def\cstr{c}
\def\fstr{f}
\def\lst{A,B,C,D,d,E,F,G,H,I,J,K,L,M,N,O,P,Q,R,S,T,U,V,W,X,Y,Z}
\newcommand{\MkB}[1]{\expandafter\def\csname\bstr#1\endcsname{\mathbb{#1}}}
\lst\do{%
    \expandafter\MkB \i     } 
\newcommand{\MkBF}[1]{\expandafter\def\csname\bfstr#1\endcsname{\mathbf{#1}}}
\lst\do{%
    \expandafter\MkBF \i     } 
\newcommand{\MkCal}[1]{\expandafter\def\csname\cstr#1\endcsname{\mathcal{#1}}}
\lst\do{%
    \expandafter\MkCal \i     } 
\newcommand{\MkFrak}[1]{\expandafter\def\csname\fstr#1\endcsname{\mathfrak{#1}}}
\lst\do{%
    \expandafter\MkFrak \i     } 
\newcommand{\pB}[1]{\mathsf{PB}(#1)} 
\newcommand{\pO}[1]{\mathsf{PO}(#1)} 
\newcommand{\mono}[1]{\mathsf{mono}(#1)} 
\newcommand{\mor}[1]{\mathsf{mor}(#1)} 
\newcommand{\obj}[1]{\mathsf{obj}(#1)} 
\newcommand{\Lin}[1]{\mathsf{Lin}(#1)} 
\newcommand{\Prob}[1]{\mathsf{Prob}(#1)} 
\newcommand{\Stoch}[1]{\mathsf{Stoch}(#1)} 
\newcommand{\Match}[2]{\mathsf{M}_{#1}(#2)} 
\newcommand{\RMatch}[2]{\mathbf{M}_{#1}(#2)} 
\newcommand{\cSquare}[1]{\Box(#1)}
\newcommand{\comp}[3]{#1 \stackrel{#2}{\blacktriangleleft} #3}
\newcommand{\GRule}[3]{#1\xleftharpoonup{#2}#3}
\newcommand{\grule}[3]{#1\xLeftarrow{#2}#3}
\newcommand{\from}{\mathrel{:}}
\newcommand{\mspan}[1]{\mathsf{Span}_{\cM}(#1)} 
\newcommand{\bra}[1]{\left\langle#1\right\vert}
\newcommand{\ket}[1]{\left\vert#1\right\rangle}
\newcommand{\braket}[2]{\left\langle\left.#1\right\vert#2\right\rangle}
\newcommand{\mIO}{\mathbb{I}}
\newcommand{\OneVertG}[1][]{%
\ifthenelse{\equal{#1}{}}{%
\tP{%
\node[vertices] (a) at (1,1) {};}}{%
\tP{%
\node[vertices,#1] (a) at (1,1) {};}}}
\newcommand{\TwoVertG}[1][]{%
\ifthenelse{\equal{#1}{}}{%
\tP{%
\node[vertices] (a) at (1,1) {};
\node[vertices] (b) at (1.5,1) {};}}{%
\tP{%
\node[vertices,#1] (a) at (1,1) {};
\node[vertices,#1] (b) at (1.5,1) {};}}}
\newcommand{\TwoVertGL}[3][]{%
\ifthenelse{\equal{#1}{}}{%
\tP{%
\node[vertices] (a) at (1,1) {#2};
\node[vertices] (b) at (1.5,1) {#3};}}{%
\tP{%
\node[vertices,#1] (a) at (1,1) {#2};
\node[vertices,#1] (b) at (1.5,1) {#3};}}}
\newcommand{\TwoVertEdgeG}[1][]{%
\ifthenelse{\equal{#1}{}}{\tP{%
\node[vertices] (a) at (1,1) {};
\node[vertices] (b) at (1.5,1) {};
\draw (a) edge (b);}}{\tP{
\node[vertices,#1] (a) at (1,1) {};
\node[vertices,#1] (b) at (1.5,1) {};
\draw (a) edge[#1] (b);}}} 
\newcommand{\TwoVertEdgeGLb}[3]{%
\raisebox{-0.3em}{\tPL{%
\node[vertices,#1] (a) at (1,1) {};
\node[vertices,#1] (b) at (1.5,1) {};
\node at ($(a.south) -(0,\lblSEP)$) {{\tiny $#2$}}; 
\node at ($(b.south) -(0,\lblSEP)$) {{\tiny $#3$}}; 
\draw (a) edge[#1] (b);}}} 
\gdef\mycdScale{1}
\def\temp{&} \catcode`&=\active \let&=\temp 
\begin{document}

\title[Rule Algebras for Adhesive Categories]{Rule Algebras for Adhesive Categories\rsuper*}
\titlecomment{{\lsuper*}A short version of this work has appeared in the Proceedings of the 27th EACSL Annual Conference on Computer Science Logic, CSL 2018~\cite{Behr2018}}

\author[N.~Behr]{Nicolas Behr\rsuper{a}}
\address{\lsuper{a}IRIF, Universit\'{e} Paris-Diderot, Paris, France}
\email[corresponding author]{nicolas.behr@irif.fr}
\thanks{The work of N.B.\ was supported by a \emph{Marie Sk\l{}odowska-Curie Individual Fellowship} (Grant Agreement No.~753750 --- RaSiR).}

\author[P.~Soboci\'{n}ski]{Pawe{\l} Soboci\'{n}ski\rsuper{b}}
\address{\lsuper{b}Tallinn University of Technology, Tallinn, Estonia}
\email{pawel@cs.ioc.ee}
\thanks{P.S.\ was supported by the ESF funded Estonian IT Academy research measure (project 2014-2020.4.05.19-0001).} 


\begin{abstract}
  \noindent We demonstrate that the most well-known approach to rewriting graphical structures, the Double-Pushout (DPO) approach, possesses a notion of sequential compositions of rules along an overlap that is associative in a natural sense. Notably, our results hold in the general setting of $\cM$-adhesive categories. This observation complements the classical Concurrency Theorem of DPO rewriting. We then proceed to define rule algebras in both settings, where the most general categories permissible are the finitary (or finitary restrictions of) $\cM$-adhesive categories with $\cM$-effective unions. If in addition a given such category possess an $\cM$-initial object, the resulting rule algebra is unital (in addition to being associative). We demonstrate that in this setting a canonical representation of the rule algebras is obtainable, which opens the possibility of applying the concept to define and compute the evolution of statistical moments of observables in stochastic DPO rewriting systems.
\end{abstract}

\maketitle

\section*{Introduction}\label{S:one}

Double pushout graph (DPO) rewriting~\cite{Ehrig1976} is the most well-known and influential approach to algebraic graph transformation. The rewriting mechanics are specified in terms of the universal properties of pushouts---for this reason, the approach is domain-independent and instantiates across a number of concrete notions of graphs and graph-like structures. Moreover, the introduction of adhesive, quasi-adhesive, weak adhesive and $\cM$-adhesive categories~\cite{lack2005adhesive,garner2012axioms,ehrig2004adhesive}---which, roughly speaking, ensure that the pushouts involved are ``well-behaved'', i.e.\ they satisfy similar exactness properties as pushouts in the category of sets and functions---entails that a standard corpus of theorems~\cite{DBLP:conf/gg/1997handbook} that ensures the ``good behaviour'' of DPO rewriting holds if the underlying ambient category is (quasi-, weak,\,$\cM$-)adhesive. 

An important classical theorem of DPO rewriting is the \emph{Concurrency Theorem}~\cite{ehrig:2006aa}, which involves an analysis of \emph{two} DPO productions applied in series. Given a \emph{dependency relation} (which, intuitively, determines how the right-hand side of the first rule overlaps with the left-hand side of the second), a purely category-theoretic construction results in a \emph{composite} rule which applies the two rules simultaneously. The Concurrency Theorem then states that the two rules can be applied in series in a way consistent with the relevant dependency relation if and only if the composite rule can be applied, yielding the same result.

The operation that takes two rules together with a dependency relation and produces a composite rule
can be considered as an algebraic operation on the set of DPO productions for a given category.
From this viewpoint, it is natural to ask whether this composition
operation is associative. It is remarkable that this appears to have been open until recently: an elementary proof of this, in the context of adhesive categories, was announced by us in the conference version~\cite{Behr2018} of this article.

In this extended version we:
\begin{itemize}[label=$\triangleright$]
\item generalise the associativity result to the setting of various notions of $\cM$-adhesive categories, giving a careful account of the precise technical conditions that are involved in the proof, which is given in its entirety here for the first time;
\item tie the proof of associativity to the classical Concurrency Theorem~\cite{ehrig:2006aa}, showing the relevant categorical constructions that are shared by the two results
\item give a more complete and detailed account of how the associativity theorem leads to the rule algebra framework, on which we elaborate below.
\end{itemize}

Indeed, associativity is advantageous for a number of reasons. In~\cite{bdg2016,bdgh2016}, the first author and his team developed the \emph{rule algebra} framework for a concrete notion of multigraphs. Inspired by a standard construction in mathematical physics, the operation of rule composition along a common interface yields an associative algebra: given a free vector space with basis the set of DPO rules, the product of the associative algebra takes two basis elements to a formal sum, over all possible dependency relations, of their compositions. This associative algebra is useful in applications, being the formal carrier of combinatorial information that underlies \emph{stochastic} interpretations of rewriting. The most famous example in mathematical physics is the Heisenberg-Weyl algebra~\cite{blasiak2010combinatorial,blasiak2011combinatorial}, which served as the starting point for~\cite{bdg2016}. Indeed,~\cite{bdg2016,bdgh2016} generalised the Heisenberg-Weyl construction from mere set rewriting to multigraph rewriting. Our work, since it is expressed abstractly in terms of $\cM$-adhesive categories, entails that the Heisenberg-Weyl and the DPO graph rewriting rule algebra can \emph{both} be seen as two instances of the same construction, expressed in abstract categorical terms.

\textbf{Structure of the paper.} The necessary categorical preliminaries are collected in Section~\ref{sec:mAdh}. Our main original results are collected
in Section~\ref{sec:acDPO}: following a brief recap of the DPO framework we first return to the classic Concurrency Theorem in Section~\ref{sec:concur}, then prove our main associativity result (Theorem~\ref{thm:assocDPO}) in Section~\ref{sec:assoc}. We devote Section~\ref{sec:ACDrd} to developing the rule algebra framework in the abstract setting, and proceed to give a number of applications:  Heisenberg-Weyl algebra in Section~\ref{sec:HW}, applications to combinatorics in Section~\ref{sec:RT} and stochastic mechanics in Section~\ref{sec:SM}. Our concluding remarks are in Section~\ref{sec:conclusion}.

\newpage

\section{Background: \texorpdfstring{$\cM$}{M}-adhesive categories}%
\label{sec:mAdh}

We briefly review standard material, following mostly~\cite{lack2005adhesive} (see~\cite{DBLP:conf/gg/CorradiniMREHL97,DBLP:conf/gg/1997handbook} for further references).

\medskip

\begin{defi}[Van Kampen (VK) squares~\cite{lack2005adhesive}]\label{def:VK}
In a category $\bfC$, a pushout square $\star$ (below left) is a \emph{Van Kampen (VK) square} whenever the following \emph{VK condition} holds: in every commutative cube over the pushout square such as the one below right in which the back and right faces are pullbacks, the top face is a pushout if and only if the front and left faces are pullbacks.
\begin{equation*}\gdef\mycdScale{0.85}
\vcenter{\hbox{\begin{mycd}
&
C
  \ar[dl, "n" description]
  \ar[dr, phantom, "\star"{sloped}]
& [-25pt]
&
A
  \ar[ll, "f" description]
  \ar[dl,  "m" description]\\
D
&
&
B
  \ar[ll, "g" description]
&\\
\end{mycd}}}\qquad\qquad
\vcenter{\hbox{\begin{mycd}
&
C'
  \ar[dl, "n'" description]
  \ar[dd,  near start, "c" description]
  & [-25pt]
&
A'
  \ar[ll, "f'" description]
  \ar[dd, "a" description]
  \ar[dl,  "m'" description]\\
D'
  \ar[dd, near start, "d" description] & &
B'
  \ar[ll,crossing over,  "g'" description] &\\
&
C
  \ar[dl, "n" description]
  \ar[dr, phantom, "\star"{sloped}] & &
A
  \ar[ll, "f" description]
  \ar[dl,  "m" description]\\
D &&
B
  \ar[ll, "g" description]
  \ar[uu,leftarrow,crossing over, near end, "b" description] &\\
\end{mycd}}}
\end{equation*}
As an important variant (cf.\ e.g.~\cite{Grochau_Azzi_2019}), given a class of morphisms $\cM\subset \mor{\bfC}$, we let a \emph{weak horizontal (weak vertical) $\cM$-VK square} be defined as a pushout square $\star$ where the VK condition is only required to hold for those commutative cubes where all horizontal (all vertical) morphisms are in $\cM$. Vertical $\cM$-VK squares are alternatively referred to as \emph{\textbf{$\mathbf{\cM}$-VK squares}}.
\end{defi}

Various notions of adhesive categories of importance to rewriting theories are known in the literature, with a number of different naming conventions.  We opt here to follow the traditional convention~\cite{ehrig2010categorical}.
\begin{defi}[Variants of adhesive categories]\label{def:adhc}
Let $\bfC$ be a category.
\begin{itemize}[label=$\triangleright$]
  \item $\mathbf{C}$ is an \textbf{\emph{adhesive category}}~\cite{lack2005adhesive} if
  \begin{enumerate}[(I)]
      \item $\bfC$ has pullbacks along arbitrary morphisms,
      \item $\bfC$ has pushouts along monomorphisms, and
      \item pushouts along monomorphisms are VK squares.
  \end{enumerate}
  \item Let $\cM\subset \mono{\bfC}$ be a class of monomorphisms.
  \begin{itemize}[label=$\triangleright$]
  \item $(\bfC,\cM)$ is an \textbf{\emph{adhesive HLR category}}~\cite{ehrig:2006aa,ehrig2010categorical} if
  \begin{enumerate}[(I')]
      \item $\bfC$ has pullbacks along $\cM$-morphisms,
      \item $\bfC$ has pushouts along $\cM$-morphisms,
      \item pushouts along $\cM$-morphisms are VK squares, and
      \item $\cM$ contains all isomorphisms and is stable under composition, pullback and pushout.
  \end{enumerate}
  \item If instead of axiom~(III') above pushouts along $\cM$-morphisms are only required to be horizontal VK squares (vertical weak VK squares), $(\bfC,\cM)$ is referred to~\cite{Grochau_Azzi_2019} as a \textbf{\emph{horizontal weak (vertical weak) adhesive HLR category}}. If $\cM$-pushouts are both horizontal \emph{and} vertical VK squares, $\bfC$ is called \textbf{\emph{weak adhesive HLR category}}.
  \end{itemize}
\end{itemize}
As proposed in~\cite{ehrig2010categorical}, we will alternatively refer to  vertical weak adhesive HLR categories simply as \textbf{\emph{$\mathbf{\cM}$-adhesive categories}}.
\end{defi}

\begin{table}[htpb]
\centering
\vspace{2em}
{\setlength{\extrarowheight}{5pt}
  \begin{tabular}{C{4.5cm}ccccccccc}
Category\newline(underlying data type)\newline&
\rot{$\cM=\mathsf{mono}(\mathbf{C})$} &
 \rot{adhesive} &
\rot{adhesive HLR} &
\rot{horizontal weak adh.\ HLR} &
\rot{vertical weak adh.\ HLR}&
\rot{$\mathcal{M}$-initial object}&
\rot{$\mathcal{M}$-effective unions}&
\rot{references}
    \\[-0.5em] \toprule
$\mathbf{Set}$\newline (sets)      &
    \YES      &   \YES    &  \YES &
    \YES & \YES  & \YES & \YES & \cite{lack2005adhesive} \\ 
$\mathbf{Graph}$\newline(\emph{directed} multigraphs) &
    \YES      &   \YES    &  \YES &
    \YES & \YES  & \YES & \YES & \cite{lack2005adhesive}\\ 
$\hat{\mathbf{S}}$\newline(presheaves on category $\mathbf{S}$) &
    \YES      &   \YES    &  \YES &
    \YES & \YES  & \YES & \YES & \cite{Lack2006,Grochau_Azzi_2019}\\ 
$\mathbf{HyperGraph}$\newline(hypergraphs) &
    \YES      &       &  \YES &
    \YES & \YES  & \YES & \YES & \cite{ehrig:2006aa}\\ 
$\mathbf{AGraph}_{\Sigma}$\newline(attributed graphs over signature $\Sigma$) &
         &       &  \YES &
    \YES & \YES &  & \YES & \cite{ehrig:2006aa,Braatz:2010aa,Grochau_Azzi_2019}\\ 
$\mathbf{SymbGraph}_{D}$\newline(symbolic graphs over $\Sigma$-algebra $D$) &
         &       &  \YES &
    \YES & \YES  &  & \YES & \cite{orejas2010symbolic,Grochau_Azzi_2019}\\ 
$\mathbf{uGraph}$\newline(finite \emph{undirected} multigraphs) &
    \YES      &       &  &
    \YES & \YES  & \YES & \YES & $\begin{array}{c}\text{\cite{padberg2017towards}}\\\text{Section~\ref{sec:RT}}\end{array}$\\ 
$\mathbf{PTnets}$\newline(place/transition nets) &
          &       &   &
    \YES & \YES  & \YES & \YES & \cite{ehrig:2006aa,Braatz:2010aa}\\ 
$\mathbf{ElemNets}$\newline(elementary Petri nets) &
          &       &   &
    \YES & \YES  & \YES & \YES & \cite{ehrig:2006aa,Braatz:2010aa}\\ 
$\mathbf{Spec}$\newline(algebraic specifications) &
          &       &   &
    \YES & \YES  & \MAYBE & \MAYBE & \cite{ehrig:2006aa}\\ 
$\mathbf{lSets}$\newline(list sets) &
          &       &   &
     & \YES & \MAYBE  & \MAYBE & \cite{Heindel_2010}\\[1.5em] 
\bottomrule
\end{tabular}
}
\caption{\label{tab:adh}Examples of categories exhibiting various forms of adhesivity, with additional properties relevant to this paper and references for further technical details provided. The symbol \MAYBE~indicates when a certain property is (to the best of our knowledge) not known to hold.}
\vspace{2em}
\end{table}

\begin{exa}
  In order to illustrate the utility of the various different notions of adhesive categories, we list in Table~\ref{tab:adh} examples for each of these types quoted from the rewriting literature. The table lists additional properties of relevance to the present paper that will be subsequently introduced below. The construction principles used for many of the more sophisticated examples listed in the table are rooted in notions of slice, coslice, functor and comma categories (cf.\ e.g.~\cite{ehrig:2006aa,Braatz:2010aa} for further details). As an example of a comma category construction, we introduce in Section~\ref{sec:RT} the category $\mathbf{uGraph}$ of finite undirected multigraphs.
\end{exa}

\begin{rem}
We would like to emphasise the hierarchy among concepts as implied by Definition~\ref{def:adhc}, in that every adhesive category is an adhesive HLR category (for $\cM=\mono{\bfC}$), while every adhesive HLR category is a weak adhesive HLR category. By definition, weak adhesive HLR categories are both horizontal and vertical weak adhesive HLR categories. As discussed in detail in~~\cite{ehrig2010categorical}, horizontal weak adhesive HLR categories lack certain properties relevant to rewriting, in contrast to the vertical weak variant, which is why the latter is typically considered as the most general type of rewriting-compatible category. Henceforth, we will thus follow the traditional convention to refer to a vertical weak adhesive category as an $\cM$-adhesive category.
\end{rem}

In many applications of interest, the data structures to be rewritten satisfy a certain notion of finiteness:

\begin{defi}[{Finitary $\cM$-adhesive categories, cf.~\cite[Defs.~2.8 and~4.1]{Braatz:2010aa}}]
Let $(\bfC,\cM)$ be an $\cM$-adhesive category. $(\bfC,\cM)$ is called \emph{\textbf{finitary}} if each object is \emph{finite} (i.e.\ has only finitely many $\cM$-subobjects). The \emph{finitary restriction} $(\bfC_{{\rm fin}},\cM_{{\rm fin}})$ of $(\bfC,\cM)$ is defined as the restriction of $\bfC$ to the full subcategory $\bfC_{{\rm fin}}$ of finite objects, and with $\cM_{{\rm fin}}:=\cM\cap \bfC_{{\rm fin}}$.
\end{defi}

The following result guarantees that every $\cM$-adhesive category will give rise to a finitary variant via a form of restriction (i.e.\ which in particular preserves the adhesivity properties).

\begin{thmC}[{\cite[Thm.~4.6]{Braatz:2010aa}}]\label{thm:finRes}
The finitary restriction $(\bfC_{{\rm fin}},\cM_{{\rm fin}})$ of an $\cM$-adhesive category $(\bfC,\cM)$ is a finitary $\cM$-adhesive category.
\end{thmC}

An important concept used throughout the paper is that of (isomorphism classes of) spans of $\cM$-morphisms. A span consists of two $\cM$-morphisms with a common source $C\xhookleftarrow{b}B\xhookrightarrow{a}A$. A homomorphism of spans from
$C\xhookleftarrow{b}B\xhookrightarrow{a}A$ to $C\xhookleftarrow{b'}B'\xhookrightarrow{a'}A$ consists of a morphism $h\from B\rightarrow B'$ such
that the two resulting triangles commute. The spans are said to be isomorphic when $h$ is an isomorphism. The following shows that isomorphism classes of spans are composable, and so are the arrows of a category $\mspan{\bfC}$
with the same objects as $\bfC$.
\begin{lem}\label{lem:spans}
  Let $(\bfC,\cM)$ be an $\cM$-adhesive category, and let $R= (C\xhookleftarrow{b}B\xhookrightarrow{a}A)$ and $S=(E\xhookleftarrow{d}D\xhookrightarrow{c}C)$ be two composable spans with $a,b,c,d\in \cM$. Then their \emph{composition} $S\circ R$, calculated via taking the pullback marked $\mathsf{PB}$ below,
  \begin{equation}\label{eq:PBC}
  \begin{mycd}
    & &
    F\ar[dl,"f"']\ar[dr,"e"]\ar[dd,phantom,"\mathsf{PB}"] & &\\
    & D \ar[dl,"d"'] \ar[dr,"c"] & &
    B \ar[dl,"b"']\ar[dr,"a"] & \\
    E & & C & & A
  \end{mycd}\qquad S\circ R:= (E\xleftarrow{d\circ f}F\xrightarrow{a\circ e}A)\,,
  \end{equation}
  is also a span of $\cM$-morphisms (i.e. $d\circ f,a\circ e\in \cM$).
  \begin{proof}
    The proof follows from the $\cM$-adhesivity properties, i.e.\ from stability of the class $\cM$ under pullback and composition.
  \end{proof}
\end{lem}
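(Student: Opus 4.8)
The plan is to read off the result directly from the defining axioms of $\cM$-adhesive categories, unwinding the cited one-line argument into its constituent steps. First I would establish that the object $F$ actually exists: it is defined as the pullback of the cospan $D\xrightarrow{c}C\xleftarrow{b}B$, and since both $b$ and $c$ lie in $\cM$ by hypothesis, axiom~(I') (existence of pullbacks along $\cM$-morphisms, which is retained unchanged in the $\cM$-adhesive / vertical weak adhesive HLR setting) guarantees that this pullback exists. This produces the projections $f\from F\to D$ and $e\from F\to B$ forming the commuting square $c\circ f = b\circ e$ depicted as $\mathsf{PB}$ in~\eqref{eq:PBC}.

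Next I would argue that the two new projections are themselves in $\cM$, using stability of $\cM$ under pullback (part of axiom~(IV')). In the pullback square, $e$ is the pullback of $c$ along $b$ and $f$ is the pullback of $b$ along $c$; since $c,b\in\cM$, stability under pullback yields $e,f\in\cM$. Finally, invoking stability of $\cM$ under composition (also axiom~(IV')), the two outer legs $d\circ f$ and $a\circ e$ of the composed span are obtained as composites of $\cM$-morphisms ($d,f\in\cM$ and $a,e\in\cM$ respectively), hence both lie in $\cM$. This shows that $S\circ R = (E\xleftarrow{d\circ f}F\xrightarrow{a\circ e}A)$ is again a span of $\cM$-morphisms, as required.

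There is essentially no substantive obstacle here, the lemma being a bookkeeping consequence of the axioms; the only point demanding a moment of care is to confirm that the relevant axioms survive the passage from (full) adhesive HLR categories to the weaker $\cM$-adhesive setting. Since only axiom~(III') is weakened (pushouts along $\cM$-morphisms need only be vertical weak VK squares), while axioms~(I') and~(IV') governing the existence of $\cM$-pullbacks and the closure of $\cM$ under pullback and composition are unaffected, the argument goes through verbatim. I would also note, though it is not needed for the statement, that well-definedness of composition on isomorphism classes follows from the universal property of the pullback, which determines $F$ uniquely up to isomorphism compatibly with span homomorphisms.
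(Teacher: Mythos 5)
Your proof is correct and follows exactly the route the paper intends: the paper's own proof is the one-line remark that the result ``follows from the $\cM$-adhesivity properties, i.e.\ from stability of the class $\cM$ under pullback and composition,'' and your write-up simply unpacks that into the existence of the $\cM$-pullback, stability of $\cM$ under pullback (giving $e,f\in\cM$), and closure under composition. No gaps; the extra remark on well-definedness up to isomorphism is a correct bonus observation.
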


Note in particular that the pullback composition operation $\circ$ on spans behaves in complete analogy to the composition operations of functions as well as on linear operators, at least if considering the following convention\footnote{This convention is standard in much of the mathematics literature; however, traditionally the opposite convention of reading spans ``left-to-right'' is encountered in the literature on graph rewriting. Since in our framework we will eventually assign linear operators to spans, the ``right-to-left'' convention offers the more convenient encoding.}:

\begin{conv}\label{nc}
  We read spans in the \textbf{\emph{``right-to-left'' convention}}, such that if we consider spans $R,S$ as above to encode partial functions $r\from A\rightharpoonup C,\, s\from C\rightharpoonup E$, then function composition and span composition are compatible (i.e.\ $s\circ r$ is computed via $S\circ R$).
\end{conv}

\subsection{Some useful technical results}

We recall first some basic pasting properties of pushouts and pullbacks that hold in any category.
\begin{lem}%
\label{lem:pushoutpullback}
Given a commutative diagram
\begin{equation*}\gdef\mycdScale{0.85}
\begin{mycd}
A\ar[r]\ar[d] & B \ar[r]\ar[d] &E\ar[d]\\
C\ar[r] & D\ar[r] & F\\
\end{mycd}\,,
\end{equation*}
\begin{itemize}[label=$\triangleright$]
\item (pullback version) if the right square is a pullback then the left square is a pullback if and only if the entire exterior rectangle is a pullback;
\item (pushout version) if the left square is a pushout then the right square is a pushout if and only if the entire exterior rectangle is a pushout.
\end{itemize}
\end{lem}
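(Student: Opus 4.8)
The plan is to prove the pullback version by a direct appeal to universal properties and then to obtain the pushout version for free by duality, since pushouts in $\bfC$ are pullbacks in $\bfC^{\mathrm{op}}$ and the reflected diagram has exactly the shape required by the pullback statement. No adhesivity hypotheses enter here; the result holds in an arbitrary category, so throughout I would invoke only the defining universal properties of pullbacks. To fix notation for the chase I would write the top row as $A\xrightarrow{f}B\xrightarrow{g}E$ and denote by $q\colon B\to D$ the middle vertical map, so that the exterior rectangle has top leg $g\circ f$ and its ``right square is a pullback'' hypothesis says that $B$, together with $g$ and $q$, is the pullback of the right-hand cospan.

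Under the standing assumption that the right square is a pullback, I would treat the two implications separately. For the direction ``left square a pullback $\Rightarrow$ exterior a pullback'', given any test cone over the exterior rectangle I would first factor its leg into $E$ together with the evident induced map into $D$ through the right-hand pullback to obtain a map into $B$, and then factor that map together with the given leg into $C$ through the left-hand pullback to obtain the desired map into $A$. Commutativity of the two squares guarantees that each factorisation is legitimate, the two defining equations of the exterior cone follow by composing, and uniqueness of the induced map follows by chaining uniqueness in the two pullbacks. This direction is essentially pure bookkeeping.

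The converse, ``exterior a pullback $\Rightarrow$ left square a pullback'', is where the single genuinely delicate step occurs. Given a test cone over the left square, I would push its leg into $B$ forward along $g$ to produce a cone over the exterior rectangle (checking the required equation via commutativity of the right square together with the cone condition), and then induce a map $w\colon X\to A$ from the exterior universal property. A priori $w$ satisfies only the \emph{exterior} equations, namely $g\circ f\circ w$ equals the original leg into $E$ and $p\circ w$ equals the leg into $C$; it remains to verify the left-square equation $f\circ w = u$, where $u\colon X\to B$ is the given test map. I expect this verification to be the crux of the whole argument. The trick is to compare $f\circ w$ and $u$ by post-composing with the two projections $g$ and $q$ of the right-hand pullback: agreement after $g$ is immediate from the exterior equation, and agreement after $q$ follows from commutativity of the left square ($q\circ f = h\circ p$) combined with $p\circ w = v$ and the cone hypothesis $q\circ u = h\circ v$. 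Since $B$ is a pullback, equality after both projections forces $f\circ w = u$. This is precisely where the hypothesis on the right square is used, and once it is in hand, uniqueness of $w$ is inherited from the exterior pullback. The pushout version then follows verbatim in $\bfC^{\mathrm{op}}$, so no separate chase is needed.
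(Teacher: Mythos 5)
Your proof is correct and complete: both directions of the pullback version are handled by the standard universal-property chase, you correctly identify the crux of the harder direction (using that the right-hand pullback's projections are jointly monic to force $f\circ w = u$), and the reduction of the pushout version to the pullback version in $\bfC^{\mathrm{op}}$ is legitimate. The paper itself offers no proof to compare against --- it recalls this pasting lemma as a classical fact holding in any category --- and your argument is precisely the standard one that justifies that claim.
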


\begin{lem}\label{lem:auxPOPBcat}
In any category, given commutative diagrams of the form
 \begin{equation}
    \begin{mycd}
      A \ar[r,"f"] \ar[d,equal]\ar[dr,phantom,"(A)"] & B \ar[d,equal]\\
      A \ar[r,"f"'] & B
    \end{mycd}\qquad \begin{mycd}
      A \ar[r,equal]
      \ar[d,equal]\ar[dr,phantom,"(B)"] & A \ar[d,"g"]\\
      A \ar[r,"g"'] & B
    \end{mycd}\qquad \begin{mycd}
      A \ar[r,"f"]
      \ar[d,equal]\ar[dr,phantom,"(C)"] &
      B \ar[d,"g"]\\
      A \ar[r,"g\circ f"'] & C
    \end{mycd}\,,
  \end{equation}
it holds that
\begin{enumerate}[(I)]
\item the square marked $(A)$ is a pushout for arbitrary morphisms $f$,
\item the square marked $(B)$ is a pullback if and only if the morphism $g$ is a monomorphism
\item the square marked $(C)$ is a pullback for arbitrary morphisms $f$ if $g$ is a monomorphism.
\end{enumerate}
In addition, if the category is an $\cM$-adhesive category, statements (I), (III) and the ``if'' part of (II) hold for ``monomorphisms'' replaced by ``$\cM$-morphisms''.
\end{lem}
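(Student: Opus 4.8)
The plan is to verify each of the three universal properties by an elementary diagram chase, exploiting the fact that every one of the squares $(A)$, $(B)$, $(C)$ carries a pair of parallel identity edges, and then to read off the $\cM$-adhesive addendum directly from the standing inclusion $\cM\subseteq\mono{\bfC}$ of Definition~\ref{def:adhc}. None of the three claims is an existence statement; each merely asserts that a \emph{given} square satisfies a universal property, so no appeal to the (co)completeness properties of the ambient category is needed, and the proofs are valid in any category verbatim.

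For (I) I would check the pushout property of $(A)$ directly. A cocone on the span $B\xleftarrow{f}A\xrightarrow{{\rm id}}A$ is a pair $u\from B\to X$, $v\from A\to X$ with $u\circ f=v$; commutation with the right-hand identity edge forces the mediating morphism $B\to X$ to equal $u$, whereupon the only remaining triangle collapses to the cocone equation $u\circ f=v$. Existence and uniqueness are thus immediate, for arbitrary $f$. The symmetric chase shows that $(A)$ is simultaneously a pullback, a fact I will reuse in (III).

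For (II) I would recognise $(B)$ as the square expressing monicity of $g$. If $(B)$ is a pullback and $g\circ p=g\circ q$ for $p,q\from Z\to A$, then $(p,q)$ is a cone on $A\xrightarrow{g}B\xleftarrow{g}A$, so its mediator $w$ satisfies $p={\rm id}\circ w=q$, forcing $p=q$; conversely, if $g$ is monic then any such cone already has $p=q$, and $w:=p=q$ is the unique mediator, so $(B)$ is a pullback. For (III) I would give the pasting argument: decompose $(C)$ horizontally as a left square of type $(A)$ (horizontal edges $f$, vertical edges ${\rm id}_A,{\rm id}_B$), which is a pullback by the symmetric chase above, followed by a right square that is an instance of $(B)$ with $A$ replaced by $B$ (top and left edges ${\rm id}_B$, right and bottom edges $g$), which is a pullback precisely because $g$ is monic; the pullback version of Lemma~\ref{lem:pushoutpullback} then makes the outer rectangle, namely $(C)$, a pullback for arbitrary $f$. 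Alternatively one chases $(C)$ directly: a cone $u\from Z\to B$, $v\from Z\to A$ with $g\circ u=g\circ f\circ v$ forces $u=f\circ v$ by monicity of $g$, whence $w:=v$ is the unique mediator.

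For the $\cM$-adhesive addendum I would simply invoke $\cM\subseteq\mono{\bfC}$: every $\cM$-morphism is monic, so replacing ``$g$ monic'' by ``$g\in\cM$'' only \emph{strengthens} the hypotheses, and the conclusions of (I), of (III), and of the ``if'' direction of (II) transfer unchanged. The one point demanding genuine care---and the reason the statement is phrased asymmetrically---is why the ``only if'' half of (II) is deliberately excluded: its $\cM$-analogue would assert that any monomorphism fitting into the pullback square $(B)$ already lies in $\cM$, and this fails as soon as $\cM$ is a proper subclass of the monomorphisms. The sole ``obstacle'' here is therefore conceptual bookkeeping---tracking which direction of the biconditional in (II) survives the restriction from monomorphisms to $\cM$-morphisms---while the underlying diagram chases are entirely routine.
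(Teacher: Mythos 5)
Your proposal is correct and follows essentially the same route as the paper: the paper likewise omits the (classical) chases for (I) and (II) and proves (III) by splitting $(C)$ into a left square of type $(A)$ and a right square of type $(B)$ and invoking the pullback-pasting part of Lemma~\ref{lem:pushoutpullback}. The only (harmless, arguably cleaner) difference is that you justify the left square being a pullback by a direct chase valid in any category, whereas the paper observes it is a pushout along an isomorphism and appeals to Lemma~\ref{lem:convenient}.
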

\begin{proof}
The statements $(I)$ and $(II)$ are classical, whence their proof is omitted for brevity. In order to prove the statement $(III)$, it suffices to combine $(I)$ and $(II)$ to conclude that a square $(C)$ as in the diagram below left
\begin{equation}
  \begin{mycd}
      A \ar[r,"f"]
      \ar[d,equal]\ar[dr,phantom,"(C)"] &
      B \ar[d,"g"]\\
      A \ar[r,"g\circ f"'] & C
    \end{mycd}\qquad \begin{mycd}
  A \ar[r,"f"'] \ar[rr, bend left, "f"] \ar[d,equal]
  \ar[dr,phantom,"(D)"]&
  B \ar[r,equal] \ar[d,equal]
  \ar[dr,phantom,"(E)"] &
  B \ar[d,"g"]\\
  A \ar[r,"f"] \ar[rr,bend right, "g\circ f"'] &
  B \ar[r,"g"] & C
\end{mycd}
\end{equation}
is a pullback square if $f$ is an arbitrary morphism and $g$ a monomorphism, since the square $(C)$ may be obtained as the composition of a pushout square $(D)$ along an isomorphism (whence a monomorphism), which is according to Lemma~\ref{lem:convenient} also a pullback, and a pullback square $(E)$, by pullback composition (Lemma~\ref{lem:pushoutpullback}) $(D)+(E)$ and thus $(C)$ is a pullback. As for the specialisation of the statements to the setting of $\cM$-adhesive categories, the claims follow trivially for (I) (no modification), and also for (III) and the ``if'' part of (II), since $\cM$ is assumed to be a class of monomorphisms.
\end{proof}

Next, we recall a number of useful properties of pushouts and pushout complements in $\cM$-adhesive categories.
\begin{lemC}[{\cite[Lemma~2.6]{EHRIG:2014ma}}]%
\label{lem:convenient}
In any $\cM$-adhesive category:
\begin{enumerate}[(I)]
\item Pushouts along $\cM$-monomorphisms are also pullbacks.
\item ($\cM$-pushout-pullback decomposition) if, in the following diagram
\begin{equation}\label{lem:poPbDec}
\vcenter{\hbox{\begin{mycd}
A
  \ar[r,"b"']
  \ar[d,"c"']
  \ar[rr, bend left, "="'] &
B
  \ar[r,"e"']\ar[d,"a"'] &
E
  \ar[d]\\
C
  \ar[r,"d"]
  \ar[rr, bend right, "="] &
D
  \ar[r,"f"] & F
\end{mycd}}}
\end{equation}
the exterior face is a pushout, the right face is a pullback, and $f\in \cM$ and ($b\in \cM$ or $c\in \cM$), then the left and right squares are both pushouts and pullbacks.
\item (uniqueness of pushout complements) given $A\hookrightarrow C$ in $\cM$ and $C\rightarrow D$, the respective pushout complement $A\xrightarrow{b} B \xhookrightarrow{a} D$ (if it exists) is unique up to isomorphism, and with $b\in \cM$ (due to stability of $\cM$-morphisms under pushouts).
\end{enumerate}
\end{lemC}

\noindent
Note that in~\eqref{lem:poPbDec} by virtue of stability of $\cM$-morphisms under pushout and pullback, these conditions entail that since $f\in \cM$, we also have that $e\in \cM$, while $b\in \cM$ means that $d\in \cM$ (and $c\in \cM$ that $a\in \cM$).

\subsection{Additional category-theoretical prerequisites}

Passing from adhesive categories to $\cM$-adhesive categories on the one hand permits to study rewriting in the most general setting for DPO rewriting known to date, yet it comes at the price of a number of technicalities that are necessary to ensure certain associativity properties for the rewriting as introduced in the main part of the paper. The first such property concerns the existence of the analogue of the empty set in the category $\mathbf{Set}$ or the empty graph in the category $\mathbf{Graph}$, referred to as \emph{$\cM$-initial object} for a general $\cM$-adhesive category (the existence of which is not guaranteed by $\cM$-adhesivity, cf.\ Table~\ref{tab:adh}). The second requirement concerns the property of an $\cM$-adhesive category possessing \emph{$\cM$-effective unions}, analogous (to a certain extent) to the notion of union of sets and related properties.

\begin{defi}[{$\cM$-initial object;~\cite[Def.~2.5]{Braatz:2010aa}}]\label{def:Minit}
An object $\mIO$ of an $\cM$-adhesive category $(\bfC,\cM)$ is defined to be an \emph{$\cM$-initial object} if for each object $A\in \obj{\bfC}$ there exists a unique monomorphism $i_A:\mIO\hookrightarrow A$, which is moreover required to be in $\cM$.
\end{defi}

\begin{lemC}[{\cite[Fact~2.6]{Braatz:2010aa}}]\label{lem:binaryCoproducts}
  If an $\cM$-adhesive category $(\bfC,\cM)$ possesses an $\cM$-initial object $\mIO\in \obj{\bfC}$, the category has \emph{finite coproducts}, and the coproduct injections are in $\cM$.
\end{lemC}
\begin{proof}
  We quote the proof from~\cite{Braatz:2010aa} for illustration of this important property: it suffices to consider the case of binary coproducts. One may construct the coproduct $A+B$ of two objects $A,B\in \obj{\bfC}$ via taking the pushout
  \begin{equation}
    \begin{mycd}
        & \mIO\ar[dl,"i_A"']\ar[dr,"i_B"]\ar[dd,phantom,"\mathsf{PO}"] &\\
        A\ar[dr,"in_A"'] & & B\ar[dl,"in_B"]\\
        & A+B
    \end{mycd}\,.
  \end{equation}
  Since the underlying category is assumed to be $\cM$-adhesive, according to Definition~\ref{def:adhc} the above pushout is guaranteed to exist since $i_A,i_B\in \cM$ via the assumption of $\mIO$ being an $\cM$-initial object, and by virtue of stability of $\cM$-morphisms under pushouts, we may moreover conclude that indeed $in_A,in_B\in \cM$.
\end{proof}

The second main property required for our construction of rule algebras concerns a certain compatibility property relating pushouts and pullbacks along $\cM$-morphisms:
\begin{defi}[$\cM$-effective unions]
  An $\cM$-adhesive category $(\bfC,\cM)$ is said to possess \emph{$\cM$-effective unions} if the following property holds: given a commutative diagram as below with $b_1,b_2,c_1,c_2,d_1,d_2\in\cM$, where $(1)$ is a pushout, the outer square a pullback, and where $x$ is the unique morphism induced by the universal property of the pushout,
\begin{equation}\label{eq:eu}
\begin{mycd}
A
  \ar[d,hook,"b_2"']\ar[r,hook,"b_1"]\ar[dr,phantom,"(1)"] & [-5pt]
B_1
  \ar[d,hook,"c_1"] \ar[ddr,hook,bend left,"d_1"] &[-15pt] \\
B_2
  \ar[r,hook,"c_2"']\ar[drr,hook,bend right,"d_2"'] & D
  \ar[dr,near start,"x"]&\\[-15pt]
  & & E
\end{mycd}
\end{equation}
then $x\in \cM$.
\end{defi}

As indicated in Table~\ref{tab:adh}, while the property of $\cM$-effective unions is traditionally well-known in a range of important examples, it is nonetheless in general a difficult task to establish this property. Many of the positive examples may be derived via the following (variant of a) result of~\cite{lack2005adhesive}:
\begin{thm}[{Variant of~\cite[Thm.~5.1]{lack2005adhesive}}]\label{thm:euAux}
  Let $(\bfC,\cM)$ be a horizontal weak adhesive HLR category. Then in a commutative diagram of the form~\eqref{eq:eu}, $x\in \mono{\bfC}$ is a monomorphism \emph{(not necessarily in $\cM$ if $\cM\neq\mono{\bfC}$)}.
\end{thm}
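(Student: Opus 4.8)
The plan is to show that $x$ is monic by first establishing, for $i\in\{1,2\}$, that $B_i$ equipped with the legs $(\mathrm{id}_{B_i},c_i)$ is the pullback of $x$ along $d_i$ — equivalently, that the section $s_i\colon B_i\to B_i\times_E D$ induced by $x\circ c_i=d_i$ is an isomorphism — and then to deduce monicity by a covering argument. I deliberately avoid forming the kernel pair $D\times_E D$ of $x$, since in a (horizontal weak) adhesive HLR category only pullbacks along $\cM$-morphisms are guaranteed and $x$ need not lie in $\cM$. Every pullback used below is taken along one of the $\cM$-morphisms $b_1,b_2,c_1,c_2,d_1,d_2$; in particular $c_1,c_2\in\cM$ by stability of $\cM$ under pushout (Definition~\ref{def:adhc}), as $(1)$ is the pushout of $b_1,b_2\in\cM$.

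For the first step I would form $P_1:=B_1\times_E D$ (the pullback of $d_1$ along $x$, which exists since $d_1\in\cM$, with $\cM$-projection $p_1\colon P_1\to D$) and build the commutative cube over the $\cM$-pushout $(1)$ whose vertical map at the $D$-corner is $p_1$ and whose remaining three top corners are the pullbacks of $c_1$, $c_2$ and $c_1 b_1$ along $p_1$. By construction the faces over $c_1$ and $c_2$ are pullbacks, and by pullback pasting (Lemma~\ref{lem:pushoutpullback}) so are the two faces over $b_1,b_2$; hence the horizontal VK condition applies — the side condition that all horizontal morphisms lie in $\cM$ holds because the bottom maps are in $\cM$ and the top maps are pullbacks of $\cM$-morphisms — and the top face is therefore a pushout. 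The crux is to identify the three top corners: using that $d_1$ is monic one gets $B_1\times_D P_1\cong B_1$ and $A\times_D P_1\cong A$, while the hypothesis that the outer square is a pullback, $A=B_1\times_E B_2$, gives $B_2\times_D P_1\cong B_2\times_E B_1\cong A$. Tracking the induced maps, the top span is $B_1\xhookleftarrow{b_1}A\xrightarrow{\ =\ }A$, whose pushout is $B_1$; comparing injections shows the resulting isomorphism $B_1\xrightarrow{\sim}P_1$ is exactly the section $s_1$. Thus $P_1\cong B_1$, i.e.\ $(B_1;\mathrm{id}_{B_1},c_1)$ is the pullback of $x$ along $d_1$, and symmetrically $(B_2;\mathrm{id}_{B_2},c_2)$ is the pullback of $x$ along $d_2$.

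For the second step, let $u,v\colon Z\to D$ satisfy $x u=x v$; I must show $u=v$. Pulling the pushout $(1)$ back along $u$ in an analogous cube (again valid, the top maps being pullbacks of the $\cM$-morphisms $b_1,b_2,c_1,c_2$) presents $Z$ as a pushout $Z_1+_{Z_0}Z_2$ with $Z_i:=B_i\times_D Z$ and structure maps $z_i\colon Z_i\to Z$, $w_i\colon Z_i\to B_i$ satisfying $u z_i=c_i w_i$; in particular $z_1,z_2$ are jointly epic, so it suffices to check $v z_i=u z_i$. For $i=1$ one computes $x(v z_1)=x(u z_1)=x c_1 w_1=d_1 w_1$, so $(v z_1,w_1)$ is a cone over $D\xrightarrow{x}E\xhookleftarrow{d_1}B_1$; by the first step this pullback is $(B_1;\mathrm{id}_{B_1},c_1)$, whose universal property forces $v z_1=c_1 w_1=u z_1$. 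The case $i=2$ is symmetric, and joint epimorphicity of $z_1,z_2$ then yields $u=v$, so $x\in\mono{\bfC}$.

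The main obstacle is the first step: the non-trivial content is the identification of the three top corners of the first cube, where the pullback hypothesis on the outer square enters decisively through $B_2\times_E B_1\cong A$, and where one must verify that the induced top span has one leg equal to $b_1$ and the other an isomorphism (so that its pushout collapses to $B_1$). A secondary but pervasive technicality, forced by working in the merely horizontal-weak setting, is to ensure at every VK application that all horizontal morphisms lie in $\cM$; this is precisely what dictates pulling back only $\cM$-morphisms and treating $p_1$ and $u$ as the unconstrained vertical maps. Finally, note that the argument concludes only $x\in\mono{\bfC}$ and nowhere places $x$ in $\cM$, in agreement with the statement.
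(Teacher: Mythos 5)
Your proof is correct. The paper itself gives no argument beyond the assertion that the proof of \cite[Thm.~5.1]{lack2005adhesive} uses only the horizontal weak adhesive HLR axioms; what you have written is exactly that argument spelled out --- a VK cube over the pushout $(1)$ identifying $(B_i;\mathrm{id}_{B_i},c_i)$ as the pullback of $x$ along $d_i$, then a second cube presenting a test object as a pushout so that joint epimorphicity of the injections yields monicity --- together with the bookkeeping the paper leaves implicit, namely that every pullback is taken along one of the $\cM$-morphisms $b_i,c_i,d_i$ (so the kernel pair of $x$, a pullback along a map not known to be in $\cM$, is never formed) and that each VK application meets the horizontal $\cM$-side condition by stability of $\cM$ under pullback and pushout.
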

\begin{proof}
It may be easily verified that the proof provided in~\cite{lack2005adhesive} in the setting of adhesive category in fact utilises only the axioms of horizontal weak adhesive HLR categories, and thus directly generalises to this setting.
\end{proof}
As the inspection of Table~\ref{tab:adh} reveals, $\mathbf{lSet}$ is the only known example in which the horizontal, but not also the vertical weak adhesive HLR properties hold, while simultaneously in this category $\cM_{\mathbf{lSet}}\neq \mono{\mathbf{lSet}}$. Thus Theorem~\ref{thm:euAux} effectively identifies weak adhesive HLR categories with $\cM=\mono{\bfC}$ as natural examples of categories with ($\mono{\bfC}$)-effective unions. This includes the original statement that all adhesive categories have effective unions, but also covers examples of (weak) adhesive HLR categories such as $\mathbf{HyperGraph}$ and $\mathbf{uGraph}$. As recently demonstrated in~\cite{Grochau_Azzi_2019}, for $\cM$-adhesive categories such as $\mathbf{AGraph}_{\Sigma}$ or $\mathbf{SymbGraph}_{D}$ where the class $\cM$ of monomorphisms has additional structure, it is possible to use the statement of Theorem~\ref{thm:euAux} in order to manually prove the existence of $\cM$-effective unions. On the other hand, the latter two categories fail to possess an $\cM$-initial object, which prevents them from satisfying all assumptions necessary in order to support a unital rule algebra construction (albeit they \emph{do} support the concurrency and associativty property of DPO rule compositions). We leave a more in-depth investigation into these matters of admissibility for future work.

\section{Double-pushout (DPO) rewriting}\label{sec:acDPO}

We now recall  \emph{Double-Pushout (DPO) rewriting} for $\cM$-adhesive categories (adapted according to the results of~\cite{Braatz:2010aa} and to our \emph{notational convention}~\ref{nc}).

\begin{asm}\label{as:cats}
Throughout the remainder of this paper, we fix in each definition an \emph{$\cM$-adhesive category} $(\bfC,\cM)$ (typically just written as $\bfC$ for brevity) that is assumed to possess an $\cM$-initial object and $\cM$-effective unions.
\end{asm}

\begin{defiC}[{\cite[Def.~7.1]{lack2005adhesive}}]\label{def:prod}
A span $p$ of morphisms (with $O$utput, $K$ontext, $I$nput)
\begin{equation}
\label{eq:prod}
O\xleftarrow{o} K\xrightarrow{i} I
\end{equation}
is called a \emph{production}. $p$ is said to be 
\emph{linear} if both $i$ and $o$ are monomorphisms in $\cM$.  We denote the \emph{set of linear productions} by $\Lin{\bfC}$. We will also frequently make use of the \emph{alternative notation} $\GRule{O}{p}{I}$ where $p=(O\xleftarrow{o} K\xrightarrow{i} I)\in \Lin{\bfC}$.
\end{defiC}
A \emph{homomorphism of productions} $p\rightarrow p'$ consists of arrows, $O \rightarrow O'$, $K \rightarrow K'$ and $I\rightarrow I'$, such that the obvious diagram commutes. A homomorphism is an isomorphism when all of its components are isomorphisms. We do not distinguish between isomorphic productions. Note that the notion of morphism of productions is different than that for general spans.

\begin{defiC}[{\cite[Def.~7.2]{lack2005adhesive}}]\label{def:gc}
Given a production $p$ as in~\eqref{eq:prod}, a \emph{match} of $p$ in an object $X\in \obj{\bfC}$ is a morphism $m:I\rightarrow X$. A match is said to satisfy the \emph{gluing condition} if there exists an object $E$ and morphisms $k:K\rightarrow \overline{K}$ and $x:\overline{K}\rightarrow X$ such that~\eqref{eq:gluing} is a pushout.
\begin{equation}\gdef\mycdScale{0.85}
\label{eq:gluing}
\begin{mycd}
K\ar[d,"k"',dashed]\ar[dr,phantom,"\mathsf{PO}"] &
I \ar[l,leftarrow,"i"']\ar[d,"m"]\\
\overline{K} & X\ar[l,leftarrow,dashed,"x"]
\end{mycd}\,.
\end{equation}
More concisely, the \emph{gluing condition} holds if there is a \emph{pushout complement} of $K\xrightarrow{i}I\xrightarrow{m}X$.
\end{defiC}

From here on, we will focus solely on \emph{linear productions} and matches that are $\cM$-morphisms, which entails due to the above statements a number of practical simplifications, and which allows us to simplify also the notations as follows:
\begin{conv}
Unless mentioned otherwise, henceforward \emph{all} arrows are understood to be morphisms of the class $\cM$ of the underlying $\cM$-adhesive category $\bfC$, whence we will use the notation $\rightarrow$ of ``ordinary'' arrows (instead of $\hookrightarrow$) to denote arrows of $\cM$ in all diagrams and formulae.
\end{conv}

\begin{defi}[{compare~\cite[Def.~7.3]{lack2005adhesive}}]%
\label{def:DPOr}
Given an object $X\in \obj{\bfC}$ and a linear production $p\in \Lin{\bfC}$, we define the \emph{set of admissible matches} $\Match{p}{X}$ as the set of monomorphisms $m:I\rightarrow X$ in $\cM$ for which $m$ satisfies the \emph{gluing condition}. As a consequence, there exist objects and morphisms such that in the diagram below both squares are pushouts (where the square marked $\mathsf{POC}$ is constructed as a pushout complement):
\begin{equation}\label{eq:DPOr}\gdef\mycdScale{0.85}
\begin{mycd}
O \ar[d,"{m^{*}}"'] &
 K \ar[l,"o"']\ar[r,"i"]\ar[d,"k"']
    \ar[dl,phantom, "{\mathsf{PO}}"]\ar[dr,phantom,"{\mathsf{POC}}"] &
 I \ar[d,"m"]
 \\
 {X'} & {\overline{K}} \ar[l,"o'"]\ar[r,"i'"'] & X\\
\end{mycd}
\end{equation}
We write $p_m(X):=X'$ for the object ``produced'' by the above diagram. The process is called \emph{derivation} of $X$ along production $p$ and admissible match $m$, and denoted $p_m(X)\xLeftarrow[p,m]{} X$.
\end{defi}

Note that by virtue of Lemma~\ref{lem:convenient}, the object $p_m(X)$ produced via a given derivation of an object $X$ along a linear production $p$ and an admissible match $m$ is \emph{unique up to isomorphism}. From here on, we will refer to linear productions as \emph{linear (rewriting) rules}. Next, we recall the concept of \emph{(concurrent) composition} of linear rules.


\subsection{Concurrent composition and concurrency theorem}%
\label{sec:concur}

Given two linear productions $p_1,p_2\in \Lin{\bfC}$ and an object $X\in \obj{\bfC}$, it is intuitively clear that one may consider acting with $p_2$ on a produced object $X'=p_{1_{m_1}}(X)$ (for some admissible match $m_1$). However, there exists also the interesting possibility to consider first composing the \emph{rules} in a certain sense, and then applying the \emph{sequential composite} to the object $X$. To this end, consider the following well-known definition.

\begin{defi}[DPO-type concurrent composition~\cite{lack2005adhesive}]\label{def:DPOcc}
  Let $p_1,p_2\in \Lin{\bfC}$ be two linear productions. Then a span ${\color{h1color}\mathbf{m}}=(I_2{\color{h1color}\xleftarrow{m_2} M_{21}\xrightarrow{m_1}}O_1)$ with $m_1,m_2\in \cM$---where we use the {\color{h1color}blue colouring} to signify the overlap of $p_1$ and $p_2$---is called\footnote{In the DPO rewriting literature, admissible matches of rules are also referred to as \emph{dependency relations}.} an \emph{admissible match of $p_2$ into $p_1$}, denoted $\mathbf{m}\in \RMatch{p_2}{p_1}$, if in the diagram below the squares marked $\mathsf{POC}$ are constructable as pushout complements (where the cospan $I_2\xrightarrow{n_2}N_{21}\xleftarrow{n_1}O_1$ is obtained by taking the pushout marked ${\color{h1color}\mathsf{PO}}$):
  \begin{equation}\label{eq:DPOccomp}
    \begin{mycd}
      O_2\ar[d,"n_2^{*}"'] &
      K_2 \ar[l,"o_2"']\ar[r,"i_2"]\ar[d,"k_2"']
        \ar[dl,phantom,"\mathsf{PO}"] &
      I_2 \ar[dr,h1color,bend right,"n_2"]\ar[dl,phantom,"\mathsf{POC}"] &
      {\color{h1color}M_{21}}
        \ar[l,h1color,"m_2"']\ar[r,h1color,"m_1"]\ar[d,h1color,phantom,"\mathsf{PO}"] &
      O_1 \ar[dl,h1color,bend left,"n_1"']\ar[dr,phantom,"\mathsf{POC}"]&
      K_1 \ar[l,"o_1"']\ar[r,"i_1"]\ar[d,"k_1"]
        \ar[dr,phantom,"\mathsf{PO}"] &
      I_1\ar[d,"n_1^{*}"]\\
      {\color{h2color}O_{21}} &
      \overline{K}_2\ar[l,"o_2'"]\ar[rr,"i_2'"'] & &
      {\color{h1color}N_{21}} & &
      \overline{K}_1\ar[ll,"o_1'"]\ar[r,"i_1'"] & {\color{h2color}I_{21}}\\
    \end{mycd}
  \end{equation}
  In this case, we write\footnote{It follows from the properties of $\cM$-adhesive categories (i.e.\ stability of $\cM$-morphisms under pushouts) that all morphisms in~\eqref{eq:DPOccomp} are $\cM$-morphisms, whence the span $\comp{p_2}{\mathbf{m}}{p_1}$ is a span of $\cM$-morphisms, and thus indeed an element of $\Lin{\bfC}$.} $\comp{p_2}{\mathbf{m}}{p_1}\in \Lin{\bfC}$ for the \emph{composite} of $p_2$ with $p_1$ along the admissible match $\mathbf{m}$, defined as
  \begin{equation}
    \comp{p_2}{\mathbf{m}}{p_1}\equiv {\color{h2color}(O_{21}\xleftarrow{o_{21}}K_{21}\xrightarrow{i_{21}}I_{21})}:=
    ({\color{h2color}O_{12}}\xleftarrow{o_2'}\overline{K}_2\xrightarrow{i_2'}{\color{h1color}N_{21}})\circ
    ({\color{h1color}N_{21}}\xleftarrow{o_1'}\overline{K}_1\xrightarrow{i_1'}{\color{h2color}I_{21}})\,
  \end{equation}
  where we have used the {\color{h2color}orange colouring} to emphasise the components of the composite production, and where $\circ$ denotes the operation of span composition (cf.\ Lemma~\ref{lem:spans}).
\end{defi}

The following theorem is a refinement of a well-known result from the literature, where the novel feature of our version that will prove quintessential in the following is the specification of the theorem via \emph{admissible matches of linear rules} (rather than the less specific notion of $E$-concurrent derivations as in the work of Ehrig et al.~\cite{EHRIG:2014ma}). In the adhesive category setting, this approach had already been investigated in~\cite{lack2005adhesive}. The reason our modification (which hinges on  Assumption~\ref{as:cats}) provides a strong improvement over the traditional results resides in the fact that in the synthesis step (see below), one is not only led to derive a certain \emph{cospan} encoding the causal interaction of the two sequentially applied rules, but in fact a \emph{span} of $\cM$-morphisms that is unique up to isomorphism, and that thus in a certain sense provides a \emph{minimal} encoding of said causal interaction. Besides practical advantages, this result is in particular strictly necessary in order to lift the notion of associativity of sequential compositions, DPO-type rule algebras and canonical representations of DPO-type rule algebras from the adhesive to the $\cM$-adhesive setting.

\begin{thm}[{DPO-type Concurrency Theorem; modification of~\cite[Thm.~4.17]{EHRIG:2014ma}, compare~\cite[Thm.~7.11]{lack2005adhesive}}]\label{thm:concur}
Let $\bfC$ be an $\cM$-adhesive category satisfying Assumption~\ref{as:cats}. Let $p_1,p_2\in \Lin{\bfC}$ be two linear rules and $X_0\in \obj{\bfC}$ an object.
\begin{itemize}
\item \textbf{Synthesis:} Given a two-step sequence of derivations
\begin{equation*}
X_2\xLeftarrow[p_2,m_2]{} X_1\xLeftarrow[p_1,m_1]{}X_0\,,
\end{equation*}
with $X_1:=p_{1_{m_1}}(X_0)$ and $X_2:=p_{2_{m_2}}(X_1)$, there exists a composite rule $q=\comp{p_2}{\mathbf{n}}{p_1}$
for a unique $\mathbf{n}\in \RMatch{p_2}{p_1}$,
 and a unique admissible match $n\in \Match{q}{X}$, such that
 \begin{equation*}
  q_n(X)\xLeftarrow[q,n]{} X_0\qquad \text{and}\qquad q_n(X_0)\cong X_2\,.
 \end{equation*}
\item \textbf{Analysis:} Given an admissible match $\mathbf{n}\in \RMatch{p_2}{p_1}$ of $p_2$ into $p_1$ and an admissible match $n\in \Match{q}{X}$ of the composite $q=\comp{p_2}{\mathbf{n}}{p_1}$ into $X$, there exists a unique pair of admissible matches $m_1\in \Match{p_1}{X_0}$ and $m_2\in \Match{p_2}{X_1}$ (with $X_1:=p_{1_{m_1}}(X_0)$) such that
\begin{equation*}
    X_2\xLeftarrow[p_2,m_2]{} X_1 \xLeftarrow[p_1,m_1]{} X_0\qquad \text{and}\qquad
    X_2\cong q_n(X)\,.
\end{equation*}
\end{itemize}
\begin{proof}
  --- \textbf{Synthesis:} %
  Consider the setting presented in~\eqref{eq:CTs1}. %
  Here, we have obtained the candidate match ${\color{h1color}\mathbf{n}}=(I_2{\color{h1color}\leftarrow M_{21}\rightarrow} O_1)$ via pulling back the cospan $(I_2{\color{h1color}\rightarrow} X_1{\color{h1color}\leftarrow} O_1)$. %
  Next, we construct ${\color{h1color}N_{21}}$ via taking the pushout of ${\color{h1color}\mathbf{n}}$, which induces a unique arrow ${\color{h1color}N_{21}\rightarrow}X_1$. Crucially, it follows from Assumption~\ref{as:cats} that this arrow is in the class $\cM$. %
  The diagram in~\eqref{eq:CTs2} is obtained by taking the pullbacks of the spans $\overline{K}_i\rightarrow X_1{\color{h1color}\leftarrow N_{21}}$ (obtaining the objects $K_i'$, for $i=1,2$), followed by letting ${\color{h2color}O_{21}}:=\pO{O_2\leftarrow K_2\rightarrow K_2'}$ and ${\color{h2color}I_{21}}:=\pO{O_1\leftarrow K_1\rightarrow K_1'}$. %
  By virtue of pushout-pullback (Lemma~\ref{lem:convenient}) and pushout-pushout decomposition (Lemma~\ref{lem:pushoutpullback}), respectively, the resulting squares are all pushouts. The final step as depicted in~\eqref{eq:CTs3} consists in constructing ${\color{h2color}K_{21}}=\pB{K_2'\rightarrow {\color{h1color}N_{21}}\leftarrow K_1'}$ and ${\color{orange}\overline{K}_{21}}=\pB{\overline{K_2}\rightarrow X_1\leftarrow \overline{K_1}}$, which by universality of pullbacks induces a unique arrow ${\color{h2color}K_{21}\rightarrow \overline{K}_{21}}$. %
  By invoking pullback decomposition (Lemma~\ref{lem:convenient}) and the $\cM$-van Kampen property (cf.\ Def.~\ref{def:adhc}) twice, one may demonstrate that the squares $\cSquare{{\color{h1color}K_{21}},{\color{h1color}\overline{K}_{21}},\overline{K}_i,K_i'}$ (for $i=1,2$) are pushouts. Thus the claim follows by invoking pushout pasting according to Lemma~\ref{lem:pushoutpullback} twice in order to obtain the pushout squares $\cSquare{{\color{h1color}K_{21}},{\color{h1color}\overline{K}_{21}},X_2,{\color{h1color}O_{21}}}$ and  $\cSquare{{\color{h1color}K_{21}},{\color{h1color}\overline{K}_{21}},X_0,{\color{h1color}I_{21}}}$.

  --- \textbf{Analysis:} Given the setting as depicted in~\eqref{eq:CTa1}, we may obtain the configuration of~\eqref{eq:CTa2} by letting $\overline{K}_i=\pO{K_i'\leftarrow{\color{h2color}K_{21}\rightarrow \overline{K}_{21}}}$ (for $i=1,2$). By virtue of pushout decomposition (Lemma~\ref{lem:pushoutpullback}), the resulting new squares are all pushouts. Next, by constructing\footnote{Since the construction is entirely symmetric in this step, we could have equivalently chosen to define $X_1=\pO{\overline{K}_2\leftarrow K_2'\rightarrow {\color{h1color}N_{21}}}$.} $X_1=\pO{\overline{K}_1\leftarrow K_1'\rightarrow {\color{h1color}N_{21}}}$, we obtain the diagram in~\eqref{eq:CTa3}. Since $\cSquare{{\color{h1color}K_{21}},{\color{h1color}\overline{K}_{21}},X_1,{\color{h1color}N_{21}}}$ and $\cSquare{{\color{h1color}K_{21}},{\color{h1color}\overline{K}_{21}},\overline{K}_2,K_2'}$ are pushouts, by pushout decomposition so is $\cSquare{K_2',\overline{K}_2,X_1,{\color{h1color}N_{21}}}$. Thus we finally arrive at the configuration in~\eqref{eq:CTa4} via compositions of pushout squares, thus concluding the proof.
\end{proof}
\end{thm}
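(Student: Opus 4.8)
The plan is to treat synthesis and analysis as the two directions of a single span-valued construction, relying throughout on the fact that every object produced by a DPO derivation is pinned down by universal properties (pushouts and pushout complements) and is hence unique up to isomorphism. For the synthesis direction I would start from the two derivation squares witnessing $X_1=p_{1_{m_1}}(X_0)$ and $X_2=p_{2_{m_2}}(X_1)$; these provide, into the intermediate object, two $\cM$-morphisms $O_1\to X_1$ (arising as a pushout of an $\cM$-morphism in the square producing $X_1$) and $m_2=(I_2\to X_1)$. The candidate overlap $\mathbf{n}=(I_2\leftarrow M_{21}\to O_1)$ is then obtained by setting $M_{21}:=\pB{I_2\to X_1\leftarrow O_1}$, and stability of $\cM$ under pullback (cf.\ Definition~\ref{def:adhc}) makes $\mathbf{n}$ a span of $\cM$-morphisms. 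Taking $N_{21}:=\pO{I_2\leftarrow M_{21}\to O_1}$ yields, by the universal property of the pushout, a unique comparison arrow $N_{21}\to X_1$.

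The genuinely new step, and the one I expect to be the conceptual obstacle, is to verify that this comparison arrow lies in $\cM$. This is precisely the configuration of diagram~\eqref{eq:eu}: $M_{21}$ is the pullback and $N_{21}$ the pushout of one and the same $\cM$-span, while the outer cospan $I_2\to X_1\leftarrow O_1$ furnishes the pullback square, so the assumption of $\cM$-effective unions (Assumption~\ref{as:cats}) forces $N_{21}\to X_1\in\cM$. This is exactly what sharpens the classical cospan-valued concurrency theorem into the span-valued statement claimed here: without it one could only record the cospan $I_2\to X_1\leftarrow O_1$, whereas here $\mathbf{n}$ is a bona fide minimal admissible match with $\mathbf{n}\in\RMatch{p_2}{p_1}$.

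With $N_{21}\to X_1$ secured, I would reconstruct the remaining components of $q=\comp{p_2}{\mathbf{n}}{p_1}$: pull back $\overline{K}_i\to X_1\leftarrow N_{21}$ to obtain $K_i'$ for $i=1,2$, and form $O_{21}:=\pO{O_2\leftarrow K_2\to K_2'}$ and $I_{21}:=\pO{O_1\leftarrow K_1\to K_1'}$. Factoring the pushout-complement squares of the two given derivations through $N_{21}$ and applying the $\cM$-pushout--pullback decomposition (Lemma~\ref{lem:convenient}) shows that each newly created square is at once a pushout and a pullback, after which pushout pasting (Lemma~\ref{lem:pushoutpullback}) propagates these facts. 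Finally, setting $K_{21}:=\pB{K_2'\to N_{21}\leftarrow K_1'}$ and $\overline{K}_{21}:=\pB{\overline{K}_2\to X_1\leftarrow \overline{K}_1}$ induces, by universality of pullbacks, a comparison $K_{21}\to\overline{K}_{21}$; invoking the $\cM$-van Kampen property (Definition~\ref{def:adhc}) on the two cubes sitting over these pullbacks, together with pullback decomposition, identifies the squares $\cSquare{K_{21},\overline{K}_{21},\overline{K}_i,K_i'}$ as pushouts. Pasting these twice produces the two pushouts $\cSquare{K_{21},\overline{K}_{21},X_2,O_{21}}$ and $\cSquare{K_{21},\overline{K}_{21},X_0,I_{21}}$, which exhibit a single derivation $X_2\cong q_n(X_0)\xLeftarrow[q,n]{}X_0$ with admissible match $n=(I_{21}\to X_0)$ read off from the right-hand square. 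I regard the correct orchestration of these van Kampen cubes and decompositions as the main labour of the proof.

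For analysis I would run the same diagram backwards: given a derivation of $X_0$ along $q$, set $\overline{K}_i:=\pO{K_i'\leftarrow K_{21}\to\overline{K}_{21}}$ and then reconstruct the intermediate object as $X_1:=\pO{\overline{K}_1\leftarrow K_1'\to N_{21}}$ (symmetrically via $\overline{K}_2,K_2'$). Pushout decomposition (Lemma~\ref{lem:pushoutpullback}) again certifies that the reconstituted squares are pushouts, and pasting assembles the two-step derivation $X_2\Leftarrow X_1\Leftarrow X_0$ with $X_1=p_{1_{m_1}}(X_0)$. Uniqueness in both directions is then automatic: $\mathbf{n}$ arises as a pullback, the matches $m_1,m_2,n$ arise as pushout complements, and pushout complements along $\cM$-morphisms are unique up to isomorphism by Lemma~\ref{lem:convenient}(III); since isomorphic productions and derivations are not distinguished, this delivers the claimed uniqueness.
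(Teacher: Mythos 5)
Your proposal follows essentially the same route as the paper's proof: the overlap span obtained by pullback, the pushout $N_{21}$ with its comparison arrow into $X_1$ shown to be in $\cM$ via $\cM$-effective unions (which you correctly single out as the step that upgrades the classical cospan-valued statement to a span-valued one), the reconstruction of $K_i'$, $O_{21}$, $I_{21}$, $K_{21}$, $\overline{K}_{21}$ by pullbacks and pushouts, the two $\cM$-van Kampen cubes, and the final pushout pastings, with the analysis direction run as the symmetric reverse construction. The argument is correct and matches the paper's proof step for step.
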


The details of the above proof permit to easily derive the following technical corollary:
\begin{cor}\label{lem:adm1}
  Under the assumptions of Theorem~\ref{thm:concur}, every configuration such as in the lower part of the diagram in~\eqref{eq:CTa1}, whence the commutative sub-diagram form by the two pushout squares below,
  \begin{equation}
    \begin{mycd}
      {\color{h2color}O_{21}}
        \ar[d,h2color] &
      {\color{h2color}K_{21}}
        \ar[l,h2color]\ar[r,h2color]\ar[d,h2color]
        \ar[dl,h2color,phantom,"\mathsf{PO}"]
        \ar[dr,h2color,phantom,"\mathsf{PO}"]&
      {\color{h2color}I_{21}}
        \ar[d,h2color] \\
      X_2 &
      {\color{h2color}\overline{K}_{21}}
        \ar[l,h2color]\ar[r,h2color]&
      X_0
    \end{mycd}\,,
  \end{equation}
   uniquely induce the configuration of four adjacent pushout squares presented the lower back part of~\eqref{eq:CTa3}, whence
   \begin{equation}
    \begin{mycd}
      {\color{h2color}O_{21}}
        \ar[d,h2color] &
        K_2' \ar[l]\ar[r,h1color]\ar[d]
        \ar[dl,phantom,"\mathsf{PO}"]
        \ar[dr,phantom,"\mathsf{PO}"]&
      {\color{h1color}N_{21}} \ar[d,h1color]
        &
        K_1' \ar[r]\ar[l,h1color]\ar[d]\ar[dl,phantom,"\mathsf{PO}"]
        \ar[dr,phantom,"\mathsf{PO}"]&
      {\color{h2color}I_{21}}
        \ar[d,h2color] \\
      X_2 &
      \overline{K}_2 \ar[l]\ar[r] &
      X_1 &
      \overline{K}_1 \ar[l]\ar[r] &
      X_0
    \end{mycd}\,,
   \end{equation}
   and vice versa.
\end{cor}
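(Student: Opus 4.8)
The plan is to read off both implications directly from the pushout-pasting and pushout-decomposition manipulations already carried out in the synthesis and analysis parts of the proof of Theorem~\ref{thm:concur}, now specialised to the bottom and lower-back faces that appear in the corollary. The two directions of the claimed equivalence correspond precisely to synthesis (four squares $\to$ two squares) and to analysis (two squares $\to$ four squares), and in both cases the only structural tool beyond universal constructions is Lemma~\ref{lem:pushoutpullback}.

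For the direction ``four squares $\Rightarrow$ two squares'' I would first define $K_{21}:=\pB{K_2'\to N_{21}\leftarrow K_1'}$ and $\overline{K}_{21}:=\pB{\overline{K}_2\to X_1\leftarrow \overline{K}_1}$, letting $K_{21}\to\overline{K}_{21}$ be the arrow induced by the universal property of the latter pullback. The crux is to establish that the two ``bridge'' squares $\cSquare{K_{21},\overline{K}_{21},\overline{K}_i,K_i'}$ (for $i=1,2$) are pushouts; this is exactly the step of the synthesis argument that invokes pullback decomposition (Lemma~\ref{lem:convenient}) together with the $\cM$-van Kampen property, and I would reuse it verbatim. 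Once the bridge squares are available, each target square follows from a single application of the pushout version of Lemma~\ref{lem:pushoutpullback}: pasting the leftmost given square $\cSquare{O_{21},K_2',X_2,\overline{K}_2}$ with the bridge square for $i=2$ yields $\cSquare{O_{21},K_{21},X_2,\overline{K}_{21}}$, and symmetrically the rightmost given square together with the bridge square for $i=1$ yields $\cSquare{K_{21},I_{21},\overline{K}_{21},X_0}$.

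For the converse ``two squares $\Rightarrow$ four squares'' I would proceed as in the analysis argument, using that the composite rule carries the internal cospan $K_2'\to N_{21}\leftarrow K_1'$ whose pullback is $K_{21}$. I would construct $\overline{K}_i:=\pO{K_i'\leftarrow K_{21}\to\overline{K}_{21}}$ and then $X_1:=\pO{\overline{K}_1\leftarrow K_1'\to N_{21}}$. Here the bridge squares are pushouts \emph{by construction}, so no van Kampen input is required; instead each of the four target squares is forced by Lemma~\ref{lem:pushoutpullback} with a bridge square placed in the ``left'' slot. Concretely, the given outer squares $\cSquare{O_{21},K_{21},X_2,\overline{K}_{21}}$ and $\cSquare{K_{21},I_{21},\overline{K}_{21},X_0}$ being pushouts forces the two outer squares $\cSquare{O_{21},K_2',X_2,\overline{K}_2}$ and $\cSquare{K_1',I_{21},\overline{K}_1,X_0}$ to be pushouts, while the pushout defining $X_1$ together with the bridge squares forces the two inner squares in the same manner.

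Finally, for the uniqueness clause (``uniquely induce'') and the mutual inverseness implicit in ``and vice versa'', I would observe that every object introduced in either direction is pinned down up to canonical isomorphism as a pullback or a pushout, and that pushout complements along $\cM$-morphisms are unique up to isomorphism by Lemma~\ref{lem:convenient}(III); composing the two constructions therefore returns the original configuration up to isomorphism. I expect the main obstacle to be not any new computation—the genuinely hard van Kampen input is already discharged in the proof of Theorem~\ref{thm:concur}—but rather the bookkeeping of which square occupies the ``left'' versus the ``right'' slot in each invocation of Lemma~\ref{lem:pushoutpullback}, since the pushout version of that lemma only propagates ``right is a pushout'' from ``left and outer are pushouts''; arranging the bridge squares into the ``left'' slot for the analysis direction, and allowing the given squares to take that slot in the synthesis direction, is precisely what makes both pastings go through.
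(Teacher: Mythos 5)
Your proposal is correct and follows essentially the same route as the paper: the paper derives this corollary directly from the synthesis part (four squares to two, via the pullback constructions of $K_{21}$ and $\overline{K}_{21}$, the $\cM$-van Kampen argument for the bridge squares, and pushout pasting) and the analysis part (two squares to four, via the pushout constructions of $\overline{K}_i$ and $X_1$ and pushout decomposition) of the proof of Theorem~\ref{thm:concur}, with uniqueness coming from the universal properties and Lemma~\ref{lem:convenient}. Your observation about which square must occupy the ``left'' slot in each invocation of Lemma~\ref{lem:pushoutpullback} is exactly the bookkeeping the paper's argument relies on.
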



\begin{figure}[ht!]
\begin{subequations}
\begin{align}
  \vcenter{\hbox{\includegraphics[scale=0.6,page=1]{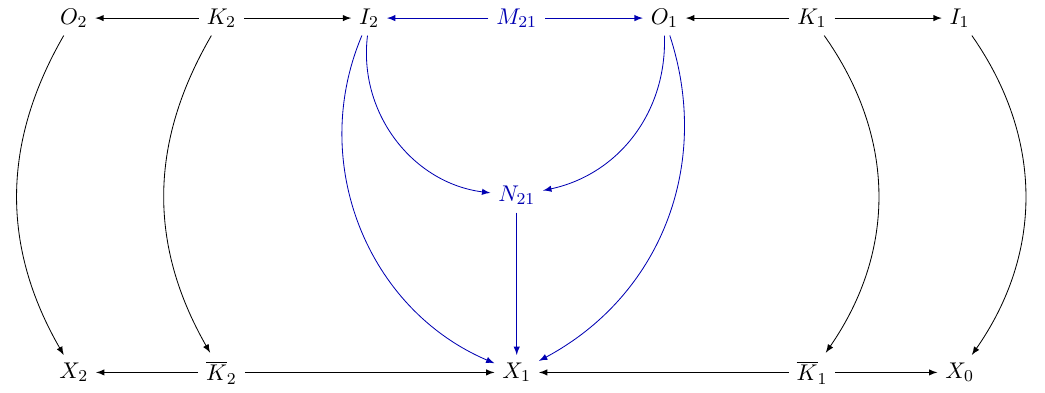}}}\label{eq:CTs1}\\
  \vcenter{\hbox{\includegraphics[scale=0.6,page=2]{images/concurrency-proof.pdf}}}\label{eq:CTs2}\\
  \vcenter{\hbox{\includegraphics[scale=0.6,page=3]{images/concurrency-proof.pdf}}}\label{eq:CTs3}
\end{align}
\end{subequations}
\caption{\label{fig:CTs} \emph{Synthesis} part of the concurrency theorem.}
\end{figure}

\begin{figure}[ht!]
\begin{subequations}
\begin{align}
  \vcenter{\hbox{\includegraphics[scale=0.6,page=1]{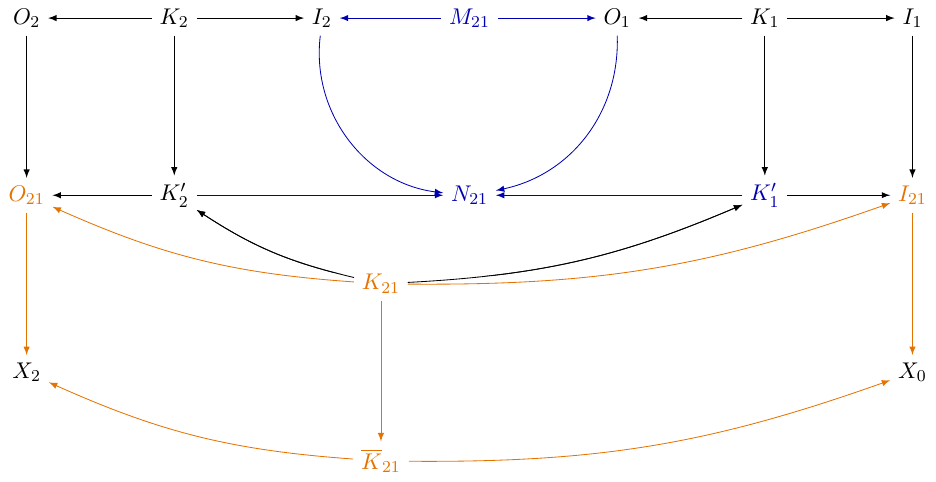}}}\label{eq:CTa1}\\
  \vcenter{\hbox{\includegraphics[scale=0.6,page=2]{images/concurrency-proof-Analysis.pdf}}}\label{eq:CTa2}\\
  \vcenter{\hbox{\includegraphics[scale=0.6,page=3]{images/concurrency-proof-Analysis.pdf}}}\label{eq:CTa3}\\
  \vcenter{\hbox{\includegraphics[scale=0.6,page=4]{images/concurrency-proof-Analysis.pdf}}}\label{eq:CTa4}
\end{align}
\end{subequations}
\caption{\label{fig:CTa} \emph{Analysis} part of the concurrency theorem.}
\end{figure}


\subsection{Concurrent composition and associativity}%
\label{sec:assoc}


While the concurrency theorem (Theorem~\ref{thm:concur}) for DPO rewriting is classical, to the best of our knowledge the following result is new. It states a certain form of associativity for compositions of linear productions.

\begin{thm}[DPO-type associativity theorem]\label{thm:assocDPO}
  Let $\bfC$ be an $\cM$-adhesive category that satisfies Assumption~\ref{as:cats}. %
  Then the composition operation $\comp{.}{.}{.}$ on linear productions of $\bfC$ is \emph{associative} in the following sense: %
given linear productions $p_1,p_2,p_3\in \Lin{\bfC}$, there exists a bijective correspondence
between pairs of admissible matches $(\mathbf{m}_{21},\mathbf{m}_{3(21)})$ and $(\mathbf{m}_{32},\mathbf{m}_{(32)1})$ such that
\begin{equation}\label{eq:THMassoc}
  \comp{p_3}{\mathbf{m}_{3(21)}}{\left(\comp{p_2}{\mathbf{m}_{21}}{p_1}\right)}\; \cong \;
  \comp{\left(\comp{p_3}{\mathbf{m}_{32}}{p_2}\right)}{\mathbf{m}_{(32)1}}{p_1}\,.
\end{equation}
\begin{proof}
Since DPO derivations are symmetric, it suffices to show one side of the correspondence. Our proof is constructive, demonstrating how, given a pair of admissible matches
\begin{equation}
\begin{aligned}
  {\color{h1color}\mathbf{m}_{21}}
    &=(O_2{\color{h1color}\leftarrow M_{21}\rightarrow }I_1)\in \RMatch{p_2}{p_1}\\
  {\color{h1color}\mathbf{m}_{3(21)}}
    &=(O_3{\color{h1color}\leftarrow M_{3(21)}\rightarrow }I_{21})\in \RMatch{p_3}{p_{21}}\,,\qquad p_{21}=\comp{p_2}{{\color{h1color}\mathbf{m}_{21}}}{p_1}\,,
\end{aligned}
\end{equation}
one may uniquely (up to isomorphisms) construct from this information a pair of admissible matches
\begin{equation}
\begin{aligned}
  {\color{h1color}\mathbf{m}_{32}}
    &=(O_3{\color{h1color}\leftarrow M_{32}\rightarrow }I_2)\in \RMatch{p_3}{p_2}\\
  {\color{h1color}\mathbf{m}_{(32)1}}
    &=(O_{32}{\color{h1color}\leftarrow M_{(32)1}\rightarrow }I_{1})\in \RMatch{p_{32}}{p_1}\,,\qquad p_{32}=\comp{p_3}{{\color{h1color}\mathbf{m}_{32}}}{p_2}\,,
\end{aligned}
\end{equation}
and such that the property described in~\eqref{eq:THMassoc} holds. We begin by forming the composite rule $p_{3(21)}=\comp{p_3}{\mathbf{m}_{3(21)}}{p_{21}}$, which results in the diagram
\begin{equation}
\vcenter{\hbox{\includegraphics[scale=0.475,page=1]{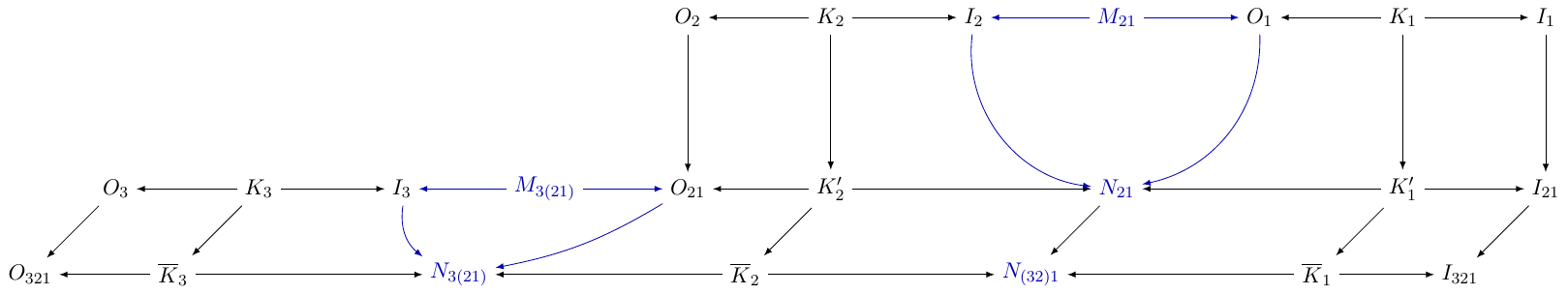}}}
\end{equation}
if we invoke Corollary~\ref{lem:adm1} to construct the four rightmost squares on the bottom. Constructing the pullback ${\color{h1color}M_{32}}=\pB{{\color{h1color}M_{3(21)}\rightarrow}O_{21}\leftarrow O_2}$ (which by universality of pullbacks also leads to an arrow ${\color{h1color}M_{32}\rightarrow}I_3$) and forming the three additional vertical squares on the far left in the evident fashion in the diagram below
\begin{equation}
\vcenter{\hbox{\includegraphics[scale=0.475,page=2]{images/assocProof.pdf}}}
\end{equation}
allows us to construct ${\color{h1color}N_{32}}=\pO{I_3{\color{h1color}\leftarrow M_{32}\rightarrow}O_2}$, which in turn via universality of pushouts uniquely induces an arrow ${\color{h1color}N_{32}\rightarrow N_{3(21)}}$:
\begin{equation}
\vcenter{\hbox{\includegraphics[scale=0.475,page=3]{images/assocProof.pdf}}}
\end{equation}
Here, the rightmost three squares on the top are formed in the evident fashion (and are pushouts according to Lemma~\ref{lem:auxPOPBcat}), while the other arrows of the above diagram are constructed as follows:
\begin{equation}
\begin{aligned}
  K_3'&=\pB{\overline{K}_3\rightarrow{\color{h1color}N_{3(21)}\leftarrow N_{32}}}\,,\quad & &
  O_{32}&=\pO{K_3'\leftarrow K_3\rightarrow O_3}\\
  K_2''&=\pB{{\color{h1color}N_{32}\rightarrow N_{3(21)}\leftarrow}\overline{K}_2}\,,\quad & &
  I_{32}&=\pO{K_2''\leftarrow K_2\rightarrow I_2}
\end{aligned}
\end{equation}
Invoking pushout-pullback and pushout-pushout decomposition repeatedly, it may be verified that all squares thus created on the top and in the front are pushout squares. Defining the pullback object ${\color{h1color}M_{(32)1}}=\pB{I_{32}{\color{h1color}\rightarrow N_{3(21)}\leftarrow}O_1}$, thus inducing an arrow ${\color{h1color}M_{21}\rightarrow M_{3(21)}}$,
\begin{equation}\label{eq:AssocLMstep}
\vcenter{\hbox{\includegraphics[scale=0.475,page=4]{images/assocProof.pdf}}}
\end{equation}
it remains to verify that the square $\cSquare{{\color{h1color}M_{3(21)}},I_{32},{\color{h1color}N_{3(21)}},O_1}$ is not only a pullback, but also a pushout square. To this end, construct the auxiliary diagram depicted in Figure~\ref{fig:DPOrevAuxDiag}, where the top, back, bottom and front cubes that are formed via the newly added arrows compared to~\eqref{eq:AssocLMstep} are also drawn separately for clarity, with suitable 3d rotations applied such as to facilitate the application of further steps in the proof based upon the $\cM$-van Kampen property\footnote{On a philosophical note, it might be worth observing that while sequential compositions of rules are essentially described by two-dimensional commutative diagrams, this final step of the associativity proof appears to have an inherently \emph{three-dimensional} character, in that the properties of the commutative cubes in question delicately rely on each other as described in the proof.}. Note in particular that the four additional new arrows exist due to universality of pullbacks.

\afterpage{%
  \begin{landscape}
    \begin{figure}%
    \centering
\includegraphics[scale=0.7,page=5]{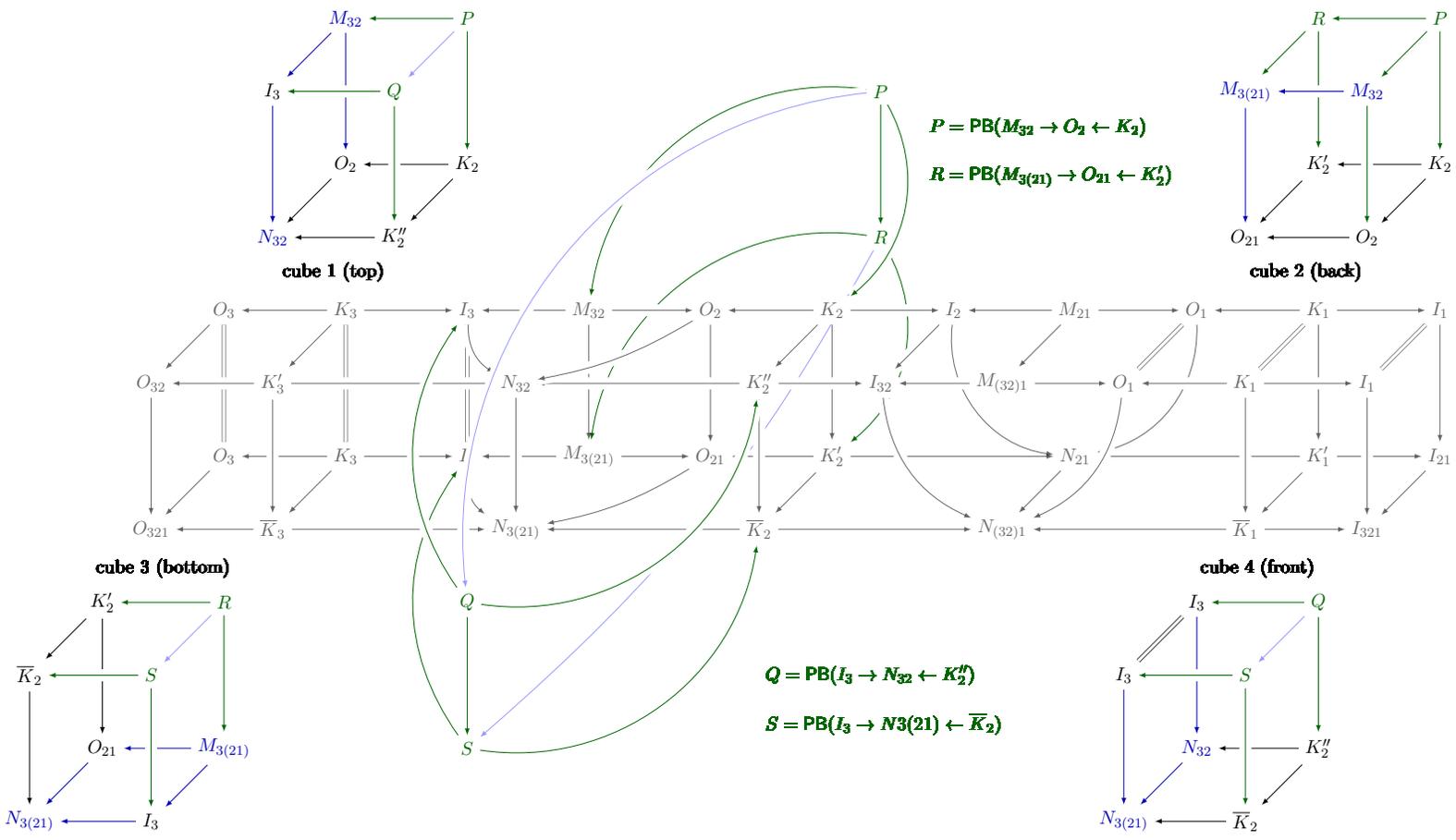}
\caption{\label{fig:DPOrevAuxDiag}Auxiliary diagram for the second part of the DPO associativity proof.}
    \end{figure}
  \end{landscape}%
}
Invoking pullback decomposition as well as the $\cM$-van Kampen property repeatedly, the new commutative cube on the top (i.e.\ the one sitting over the two pushout squares $\cSquare{M_{32},I_3,N_{32},O_2}$ and $\cSquare{K_2,O_2,N_{32},K_2''}$) and the new commutative cube on the bottom (i.e.\ the one sitting under the two pushout squares $\cSquare{M_{3(21)},I_3,N_{3(21)},O_{(21)}}$ and $\cSquare{K_2',O_{21},N_{3(21)},\overline{K}_2}$) have pushouts on all of their faces.

As for the new cubes in the front and back, note first that by virtue of Lemma~\ref{lem:auxPOPBcat} the back left square $\cSquare{I_3,I_3,N_{3(21)},N_{32}}$ of the front cube is a pullback, while the square $\cSquare{M_{32},M_{3(21)},O_{21},O_2}$ had been constructed as a pullback in the main part of the proof. Thus invoking pullback decomposition twice, we may conclude that also the squares $\cSquare{Q,S,\overline{K}_2,K_2''}$ in the front and $\cSquare{P,R,I_{32},K_2}$ in the back are pullbacks, whence invoking the $\cM$-van Kampen twice allows to conclude that the squares $\cSquare{Q,S,I_3,I_3}$ in the front left and $\cSquare{P,R,M_{3(21)},M_{32}}$ in the back left are pushouts. Moreover, since isomorphisms are stable under pushouts by virtue of Lemma~\ref{lem:auxPOPBcat}, we may conclude that $Q\cong S$. We collect all of this information into the following diagram:
\begin{equation*}
\vcenter{\hbox{\includegraphics[scale=0.75]{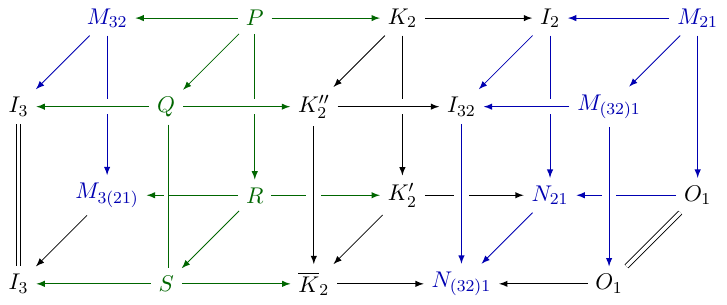}}}
\end{equation*}
To prepare the final steps, let us perform the following ``splitting'' of the above diagram:
\begin{equation*}
\vcenter{\hbox{\includegraphics[scale=0.75]{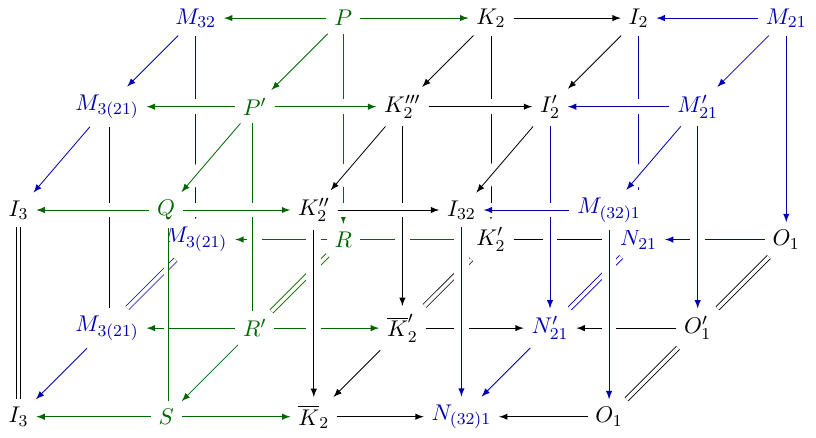}}}
\end{equation*}
We start the construction from the very left: evidently $\cSquare{I_3,M_{3(21)},M_{3(21)},I_3}$ is both a pullback and a pushout. %
Next, construct the pullbacks $P'=\pB{M_{3(21)}\rightarrow I_3\leftarrow Q}$ and $R'=\pB{M_{3(21)}'\rightarrow I_3\leftarrow S}$; by pushout-pullback decomposition, they split the pushout square $\cSquare{P,Q,I_3,M_{23}}$ on the top and $\cSquare{R,S,I_3,M_{3(21)}}$ into two pushout squares each. The latter also implies that $R'\cong R$.

Pasting pushouts, we have that $\cSquare{M_{3(21)}',R',S,I_3}$ is a pushout, whence by pushout-pushout decomposition so is $\cSquare{P',R',S,Q}$ (and thus $P'\cong R'$).

Next, construct the two pushouts $K_3''=\pO{P'\leftarrow P\rightarrow K_2}$ in the top and $\overline{K}_2'=\pO{R'\leftarrow R\rightarrow K_2'}$ on the bottom (which implies $\overline{K}_2'\cong K_2'$). Pushout-pushout decomposition then entails that $\cSquare{P',K_2''',K_2'',Q}$ and $\cSquare{R',\overline{K}_2',\overline{K}_2,S}$ are pushouts, and consequently so is the square $\cSquare{K_2''',\overline{K}_2',\overline{K}_2,K_2''}$.

We repeat the construction of the previous step and construct the pushouts $I_2'=\pO{K_2'''\leftarrow K_2\rightarrow I_2}$ and $N_{21}'=\pO{\overline{K}_2'\leftarrow K_2'\rightarrow N_{21}}$ (which implies that $N_{21}'\cong N_{21}$). Pushout-pushout decomposition then yields the pushout squares $\cSquare{K_2''',I_2',I_{32},K_2''}$ and $\cSquare{\overline{K}_2',N_{21}',N_{(32)1},\overline{K}_2}$, and thus also $\cSquare{I_2',N_{21}',N_{(32)1},I_{32}}$ is a pushout.

Next, split the pullback squares $\cSquare{M_{21},M_{(32)1},I_{32},I_2}$ in the top and \[\cSquare{O_1,O_1,N_{(32)1},N_{21}}\] on the bottom into two pullback squares each via pullback-pullback decomposition, whence via letting $M_{21}'=\pB{I_2'\rightarrow I_{32}\leftarrow M_{(32)1}}$ and $O_1'=\pB{N_{21}'\rightarrow N_{3(21)}\leftarrow O_1}$. By virtue of Lemma~\ref{lem:auxPOPBcat} (i.e.\ stability of isomorphisms under pullbacks), this entails that $O_1'\cong O_1$, whence the square $\cSquare{O_1,O_1',N_{21}',N_{21}}$ is a pushout.

By pullback-pullback decomposition, the square $\cSquare{M_{21}',O_1',N_{21}',I_2'}$ is a pullback. By pushout-pullback decomposition, since by virtue of the previous step the square \[\cSquare{M_{21},O_1',N_{21}',I_2}\] is a pushout and $\cSquare{M_{21}',O_1',N_{21}',I_2'}$ a pullback, $\cSquare{M_{21}',O_1',N_{21}',I_2'}$ is also a pushout.

Finally, since by pushout pasting the square $\cSquare{M_{21}',O_1',N_{(32)1},I_{32}}$ is a pushout, and since $\cSquare{M_{(32)1},O_1,N_{(32)1},I_{32}}$ is by construction a pullback, pushout-pullback decomposition entails that $\cSquare{M_{(32)1},O_1,N_{(32)1},I_{32}}$ is also a pushout, which concludes the proof.
\end{proof}
\end{thm}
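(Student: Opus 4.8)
The plan is to give a \emph{constructive} proof that exhibits one direction of the claimed correspondence explicitly. Since both the span composition $\circ$ and the DPO composition $\comp{.}{.}{.}$ are symmetric under reversing the roles of inputs and outputs (a linear production $O\leftarrow K\rightarrow I$ has a reverse $I\leftarrow K\rightarrow O$, and DPO derivations are reversible), the opposite direction is obtained by running the identical construction with the input/output roles interchanged. Mutual invertibility up to isomorphism then follows from the uniqueness clauses already available, namely uniqueness of pushout complements (Lemma~\ref{lem:convenient}) and the universal properties of the pushouts and pullbacks used in the construction. Hence it suffices to start from a pair $(\mathbf{m}_{21},\mathbf{m}_{3(21)})$ and manufacture a pair $(\mathbf{m}_{32},\mathbf{m}_{(32)1})$ satisfying~\eqref{eq:THMassoc}.

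First I would assemble the left-hand composite into a single large commutative diagram: form the inner rule $p_{21}=\comp{p_2}{\mathbf{m}_{21}}{p_1}$, then compose with $p_3$ along $\mathbf{m}_{3(21)}$ to obtain the two outer pushout squares over $O_{21}$ and $I_{21}$. Invoking Corollary~\ref{lem:adm1}, I would refine these two squares into the four adjacent pushout squares that re-expose the internal objects $\overline{K}_2, K_2', N_{21}, \overline{K}_1, K_1'$ created by the inner composition. This is the stage from which the reassociation can be read off, since all the combinatorial data of the triple overlap is now visible in one diagram.

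Next I would \emph{extract} the reassociated matches by pulling back along the shared objects. Setting $M_{32}=\pB{M_{3(21)}\rightarrow O_{21}\leftarrow O_2}$, with the arrow to $I_3$ supplied by universality of pullbacks, yields the overlap $\mathbf{m}_{32}\in\RMatch{p_3}{p_2}$; taking the pushout $N_{32}=\pO{I_3\leftarrow M_{32}\rightarrow O_2}$ together with the evident pushout complements builds $p_{32}=\comp{p_3}{\mathbf{m}_{32}}{p_2}$ and a comparison arrow $N_{32}\rightarrow N_{3(21)}$. A further pullback $M_{(32)1}=\pB{I_{32}\rightarrow N_{3(21)}\leftarrow O_1}$ supplies the candidate outer overlap $\mathbf{m}_{(32)1}$. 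Throughout this step Assumption~\ref{as:cats} is essential: $\cM$-effective unions is precisely what guarantees that the comparison morphisms induced out of these pushouts into the ambient objects (such as $N_{32}\rightarrow N_{3(21)}$) land in $\cM$, which is a prerequisite for the van Kampen arguments that follow.

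The hard part will be the final step: although $M_{(32)1}$ is a pullback \emph{by construction}, certifying $\mathbf{m}_{(32)1}$ as a genuine admissible match with $N_{(32)1}\cong N_{3(21)}$ requires showing that the square $\cSquare{M_{(32)1},O_1,N_{(32)1},I_{32}}$ is also a \emph{pushout}, and I expect this to have an inherently three-dimensional character. My plan here is to erect an auxiliary diagram of stacked commutative cubes over the two-dimensional diagram already built, in which the back and right faces are pullbacks and selected top and bottom faces are pushouts by the earlier steps, and then to propagate pushout-ness onto the target face by alternating three tools: the $\cM$-van Kampen property (Definition~\ref{def:adhc}), pullback decomposition (Lemma~\ref{lem:pushoutpullback}), and the recognition/decomposition results of Lemmas~\ref{lem:convenient} and~\ref{lem:auxPOPBcat} (including stability of isomorphisms under pushout and pullback). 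Concretely I would use van Kampen, with $\cM$-effective unions supplying membership in $\cM$, to promote certain pullback faces to pushouts, then \emph{split} the large square along auxiliary pullbacks into a grid of smaller squares, establish each as a pushout by repeated pushout--pushout decomposition, and paste them back together; a closing application of $\cM$-pushout--pullback decomposition (Lemma~\ref{lem:convenient}) upgrades the remaining pullback $\cSquare{M_{(32)1},O_1,N_{(32)1},I_{32}}$ to a pushout. This yields~\eqref{eq:THMassoc} and completes the construction of the correspondence.
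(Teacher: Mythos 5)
Your proposal follows essentially the same route as the paper's proof: the same reduction to one direction by symmetry, the same use of Corollary~\ref{lem:adm1} to expose the internal pushout squares, the same construction of $M_{32}$, $N_{32}$, the comparison arrow $N_{32}\rightarrow N_{3(21)}$ and $M_{(32)1}$ via pullbacks and pushouts (with $\cM$-effective unions securing $\cM$-membership of the induced arrows), and the same closing strategy of certifying $\cSquare{M_{(32)1},O_1,N_{(32)1},I_{32}}$ as a pushout by a three-dimensional van Kampen/decomposition argument. The only difference is one of detail rather than substance: the paper executes the final cube-splitting argument explicitly, whereas you outline it as a plan, but the tools and the order in which they are applied coincide.
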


In summary, the associativity property manifests itself in the following form, whereby the data provided along the path highlighted in orange below permits to uniquely compute the data provided along the path highlighted in blue (with both sets of overlaps computing the same ``triple composite'' production):
\begin{equation}
\vcenter{\hbox{\includegraphics[scale=0.475,page=1]{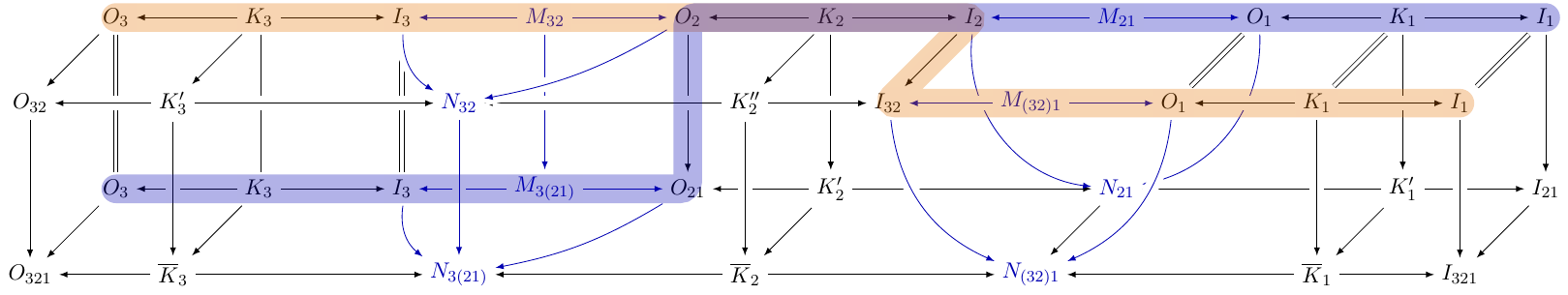}}}
\end{equation}


\section{From associativity of concurrent derivations to rule algebras}%
\label{sec:ACDrd}


In DPO rewriting, each linear rewriting rule has a non-deterministic effect when acting on a given object, in the sense that there generically exist multiple possible choices of admissible match of the rule into the object. One interesting way of incorporating this non-determinism into a mathematical rewriting framework is motivated by the physics literature:
\begin{itemize}[label=$\triangleright$]
\item Each linear rule is lifted to an element of an abstract \emph{vector space}.
\item Concurrent composition of linear rules is lifted to a \emph{bilinear multiplication operation} on this abstract vector space, endowing it with the structure of an \emph{algebra}.
\item The action of rules on objects is implemented by mapping each linear rule (seen as an element of the abstract algebra) to an endomorphism on an abstract vector space whose basis vectors are in bijection with the objects of the adhesive category.
\end{itemize}
While this recipe might seem somewhat ad hoc, we will demonstrate in Section~\ref{sec:HW} that it recovers in fact one of the key constructions of quantum physics and enumerative combinatorics, namely we recover the well-known Heisenberg-Weyl algebra and its canonical representation.

\medskip
Let us first fix the precise type of categories for which our constructions are well-defined. A very general class of such categories is covered by the following set of assumptions:

\begin{asm}\label{ass:RAdpo}
  We assume from hereon that $\bfC$ is a \emph{finitary $\cM$-adhesive category} with $\cM$-effective unions.
\end{asm}

Let us next quote some concepts of general mathematics of key relevance to the material presented in this section:
\begin{defi}[cf.\ e.g.\ \cite{hazewinkel2004algebras}]
  Let $V\equiv(V,+,\cdot)$ be a \emph{$\bK$-vector space}, with $\bK$ a field (e.g.\ $\bK=\bR$). Then a \emph{$\bK$-algebra} $(V,*)$ is defined via equipping the $\bK$-vector space $V$ with a \emph{bilinear binary operation} $*:V\times V\rightarrow V$. Here, bilinearity entails that
  \begin{equation}
  \forall a,b\in V\,,\;\alpha,\beta\in \bK:\quad (\alpha\cdot a)*(\beta\cdot b)=(\alpha\beta)\cdot(a*b)\,.
  \end{equation}
  The algebra $(V,*)$ is called \emph{associative} if
  \begin{equation}
    \forall a,b,c\in V:\quad a*(b*c)=(a*b)*c\,.
  \end{equation}
  It is called \emph{unital} if there exists an element $u\in V$ (then referred to as a \emph{unit element}) such that
  \begin{equation}
    \forall v\in V:\quad u*v=v*u=v\,.
  \end{equation}
  Let $W\equiv(W,+,\cdot)$ be another $\bK$-vector space (over the same field $\bK$ as the $\bK$-vector space $V$), and denote by $End_{\bK}(W)$ the \emph{algebra of endomorphisms over $W$}. Then a \emph{representation} \[
  \rho: (V,*)\rightarrow End_{\bK}(W)
  \]
  of the associative $\bK$-algebra $(V,*)$ is defined as an \emph{algebra homomorphism}, such that
  \begin{equation}
    \forall a,b\in V:\quad \rho(a*b)=\rho(a)\rho(b)\,.
  \end{equation}
  If $(V,*)$ is in addition unital, we also require that $\rho$ maps the unit element $u\in V$ to the identity element of $End_{\bK}(W)$,
  \begin{equation}
    \rho(u)=\mathbb{1}_{End_{\bK}(W)}\,.
  \end{equation}
\end{defi}

\begin{defi}
Let $\cR_{\bfC}\equiv(\cR_{\bfC},+,\cdot)$ be the free $\bR$-vector space on $\Lin{\bfC}$, defined concretely via a bijection $\delta:\Lin{\bfC}_{\cong}\rightarrow \cR_{\bfC}$ from \emph{isomorphism classes of linear productions} in $\Lin{\bfC}$ to the set of basis vectors of $\cR_{\bfC}$. In order to distinguish between elements of $\Lin{\bfC}$ and of $\cR_{\bfC}$, we introduce the notation
\begin{equation}
(\grule{O}{r}{I}):=\delta\left(O\xleftharpoonup{r}I\right)\,.
\end{equation}
We will later refer to $\cR_{\bfC}$ as the $\bR$-vector space of \emph{rule algebra elements}.
\end{defi}

\begin{defi}\label{def:RADPO}
Define the \emph{DPO rule algebra product} $*_{\cR_{\bfC}}$ on an $\cM$-adhesive category $\bfC$ satisfying Assumption~\ref{ass:RAdpo} as the binary operation
\begin{equation}
*_{\cR_{\bfC}}:\cR_{\bfC}\times \cR_{\bfC}\rightarrow \cR_{\bfC}:(R_1,R_2)\mapsto R_1*_{\cR_{\bfC}} R_2\,,
\end{equation}
where for two basis vectors $R_i=\delta(p_i)$ encoding the linear rules $p_i\in \Lin{\bfC}$ ($i=1,2$),
\begin{equation}\label{eq:defRcomp}
R_2*_{\cR_{\bfC}}R_1
:=\sum_{\mathbf{m}\in \RMatch{p_2}{p_1}}\delta\left(\comp{p_2}{\mathbf{m}}{p_1}\right)\,.
\end{equation}
Here, we take the notational convention that $\sum_{\emptyset}\dotsc=0_{\cR_{\bfC}}$ (i.e.\ the summation over an empty set of admissible matches evaluates to the zero element of the vector space $\cR_{\bfC}$). The definition is extended to arbitrary (finite) linear combinations of basis vectors by bilinearity, whence for $p_i,p_j\in \Lin{\bfC}$ and $\alpha_i,\beta_j\in \bR$,
\begin{equation}
\left(\sum_i \alpha_i\cdot\delta(p_i)\right)*_{\cR_{\bfC}}\left(\sum_j\beta_j\cdot \delta(p_j)\right):=\sum_{i,j}(\alpha_i\cdot\beta_j)\cdot \left(\delta(p_i)*_{\cR_{\bfC}}\delta(p_j)\right)\,.
\end{equation}
We refer to $\cR_{\bfC}\equiv(\cR_{\bfC},*_{\cR_{\bfC}})$ as the \textbf{\emph{rule algebra}} (of linear DPO-type rewriting rules over the $\cM$-adhesive category $\bfC$).
\end{defi}

It is worthwhile noting that if the category $\bfC$ possesses an $\cM$-initial object, the ``trivial match'' of two linear productions $p_j=(O_j\leftarrow K_j\rightarrow I_j)$ (for $j=1,2$), i.e.\ $\mathbf{m}_{\mIO}=(I_2\leftarrow\mIO\rightarrow O_1)$, may be verified to be always an admissible match according to the definition of the DPO-type concurrent composition of productions (Definition~\ref{def:DPOcc}) and by virtue of Lemma~\ref{lem:binaryCoproducts}.

\begin{thm}
For every category $\bfC$ satisfying Assumption~\ref{ass:RAdpo}, the associated DPO-type rule algebra $\cR_{\bfC}\equiv(\cR_{\bfC},*_{\cR_{\bfC}})$ is an \emph{associative algebra}. If $\bfC$ in addition possesses an \emph{$\cM$-initial object} $\mIO\in \obj{\bfC}$, $\cR_{\bfC}$ is in addition a \emph{unital algebra}, with unit element $R_{\mIO}:=(\grule{\mIO}{id_{\mIO}}{\mIO})$.
\end{thm}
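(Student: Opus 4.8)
The plan is to establish the two assertions separately: associativity will follow almost immediately from Theorem~\ref{thm:assocDPO}, whereas unitality requires a short direct computation with the trivial match. Throughout I write $p_{\mIO}=(\mIO\xleftarrow{id_{\mIO}}\mIO\xrightarrow{id_{\mIO}}\mIO)$, so that $R_{\mIO}=\delta(p_{\mIO})$.

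For associativity, by the bilinearity imposed in Definition~\ref{def:RADPO} it suffices to check the identity on basis elements, i.e.\ to show $\left(\delta(p_3)*_{\cR_{\bfC}}\delta(p_2)\right)*_{\cR_{\bfC}}\delta(p_1)=\delta(p_3)*_{\cR_{\bfC}}\left(\delta(p_2)*_{\cR_{\bfC}}\delta(p_1)\right)$ for $p_1,p_2,p_3\in\Lin{\bfC}$. First I would expand each side by applying~\eqref{eq:defRcomp} twice: the right-hand side becomes $\sum_{\mathbf{m}_{21}}\sum_{\mathbf{m}_{3(21)}}\delta\!\left(\comp{p_3}{\mathbf{m}_{3(21)}}{(\comp{p_2}{\mathbf{m}_{21}}{p_1})}\right)$, summed over pairs with $\mathbf{m}_{21}\in\RMatch{p_2}{p_1}$ and $\mathbf{m}_{3(21)}\in\RMatch{p_3}{p_{21}}$, while the left-hand side becomes the analogous double sum over pairs $(\mathbf{m}_{32},\mathbf{m}_{(32)1})$. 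Theorem~\ref{thm:assocDPO} provides exactly a bijection between these two index sets under which the corresponding triple composites are isomorphic. Since $\delta$ is defined on isomorphism classes, isomorphic composites map to the same basis vector, so reindexing one sum along the bijection identifies it term-by-term with the other. The empty-sum convention $\sum_{\emptyset}\dots=0_{\cR_{\bfC}}$ is respected automatically, since the bijection restricts to a bijection of the (possibly empty) ranges.

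For unitality I would show $R_{\mIO}$ is a two-sided unit, again reducing to a basis element $\delta(p)$ with $p=(O\xleftarrow{o}K\xrightarrow{i}I)$. The crucial preliminary is that, up to isomorphism, the \emph{only} admissible match relating $p_{\mIO}$ and $p$ is the trivial match $\mathbf{m}_{\mIO}$ recorded in the remark preceding the theorem. This follows from $\cM$-initiality: a match in $\RMatch{p_{\mIO}}{p}$ has underlying span $\mIO\xleftarrow{m_2}M\xrightarrow{m_1}O$ with $m_2\in\cM$, and composing $m_2$ with the unique $\cM$-morphism $i_M\colon\mIO\to M$ of Definition~\ref{def:Minit} (using that $\cM$ is closed under composition) gives $m_2\circ i_M=id_{\mIO}$ by uniqueness; since $m_2$ is monic, also $i_M\circ m_2=id_M$, so $M\cong\mIO$ and the match collapses to $\mathbf{m}_{\mIO}$. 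It then remains to compute $\comp{p_{\mIO}}{\mathbf{m}_{\mIO}}{p}$. Tracing~\eqref{eq:DPOccomp}: the pushout of $\mIO\xleftarrow{id}\mIO\xrightarrow{m_1}O$ along its identity leg yields $N_{21}\cong O$; uniqueness of pushout complements (Lemma~\ref{lem:convenient}(III)) then forces $\overline{K}_2\cong O$ with $i_2'$ an isomorphism and $\overline{K}_1\cong K$ with $k_1$ an isomorphism, while the two outer pushouts return $O_{21}\cong O$ and $I_{21}\cong I$. The final span composition thus has the identity span on $O$ as its left factor, so the defining pullback returns $(O\xleftarrow{o}K\xrightarrow{i}I)\cong p$, giving $R_{\mIO}*_{\cR_{\bfC}}\delta(p)=\delta(p)$; the mirror-image computation (with $p_{\mIO}$ as second argument, noting $\RMatch{p}{p_{\mIO}}$ likewise contains only $\mathbf{m}_{\mIO}$) gives $\delta(p)*_{\cR_{\bfC}}R_{\mIO}=\delta(p)$.

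I expect the associativity half to be essentially bookkeeping once the iterated sums are written out, the substantive content having been discharged in Theorem~\ref{thm:assocDPO}. The main obstacle is the unitality half: specifically, confirming that every object appearing in the composite of $p_{\mIO}$ with $p$ is pinned down (up to canonical isomorphism) by $\cM$-initiality together with uniqueness of pushout complements, so that the composite genuinely recovers $p$ rather than merely some rule sharing its input and output interfaces.
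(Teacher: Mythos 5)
Your proposal is correct and follows essentially the same route as the paper: associativity is reduced via bilinearity to the bijection of match-pairs supplied by Theorem~\ref{thm:assocDPO}, and unitality is verified by computing $\comp{p_{\mIO}}{\mathbf{m}_{\mIO}}{p}$ through the composition diagram, recovering $p$ exactly as in the paper's displayed calculation. Your explicit argument that $\cM$-initiality forces the trivial match to be the \emph{only} element of $\RMatch{p_{\mIO}}{p}$ (and of $\RMatch{p}{p_{\mIO}}$) is a detail the paper leaves implicit, and it is indeed needed for the sum defining $*_{\cR_{\bfC}}$ to collapse to a single term.
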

\begin{proof}
Associativity follows immediately from the associativity of the operation $\comp{.}{.}{.}$ proved in Theorem~\ref{thm:assocDPO}. The claim that $R_{\mIO}$ is the unit element of the rule algebra $\cR_{\bfC}$ of an adhesive category $\bfC$ with strict initial object follows directly from the definition of the rule algebra product for $R_{\mIO}*_{\cR_{\bfC}}R$ and $R*_{\cR_{\bfC}}R_{\mIO}$ for $R\in \cR_{\bfC}$. For clarity, we present below the category-theoretic composition calculation that underlies the equation $R_{\mIO}*_{\cR_{\bfC}}R=R$:
\begin{align}\gdef\mycdScale{0.85}\tabularnewline
    &\begin{mycd}
      \mIO\ar[d] &
      \mIO \ar[l,equal]\ar[r,equal]\ar[d]
        \ar[dl,phantom,"\mathsf{PO}"] &
      \mIO \ar[dr,h1color,bend right]
      \ar[dl,phantom,"\mathsf{POC}"] &
      {\color{h1color}\mIO}
        \ar[l,h1color,equal]
        \ar[r,h1color]\ar[d,h1color,phantom,"\mathsf{PO}"] &
      O \ar[dl,h1color,bend left,equal]\ar[dr,phantom,"\mathsf{POC}"]&
      K \ar[l,"o"']\ar[r,"i"]\ar[d,equal]
        \ar[dr,phantom,"\mathsf{PO}"] &
      I\ar[d,equal]\\
      {\color{h2color}O} &
      O\ar[l,equal]\ar[rr,equal] & &
      {\color{h1color}O} & &
      K\ar[ll,"o"]\ar[r,"i"'] & {\color{h2color}I}\\
    \end{mycd}\\
  &({\color{h2color}O}\xleftarrow{id_O}O\xrightarrow{id_O}{\color{h1color}O})\circ ({\color{h1color}O}\xleftarrow{o}K\xrightarrow{i}{\color{h2color}I})
  ={\color{h2color} (O\xleftarrow{o}K\xrightarrow{i}I)}
  \qedhere
  \end{align}
\end{proof}
The property of a rule algebra being unital and associative has the important consequence that one can provide  \emph{representations} for it.
The following definition, given at the level of $\cM$-adhesive categories with $\cM$-initial objects, captures several of the concrete notions
of canonical representations in the physics literature; in particular, it generalises the concept of canonical representation of the Heisenberg-Weyl algebra as explained in Section~\ref{sec:HW}. Intuitively, since a given linear rewriting rule $r\in \Lin{\bfC}$ may in general be applied in multiple different ways to a given object $X\in \obj{\bfC}$, i.e.\ via the different choices of admissible matches $m\in \Match{r}{X}$, one might wish to encode this non-determinism in some form or other. The variant of encoding chosen via the rule algebra and canonical representation (see below) consists heuristically in constructing a \emph{``sum over all outcomes $r_m(X)$ of applying rule $r$ to object $X$ via admissible matches $m$''}. Since a particular DPO derivation from $X$ to $r_m(X)$ according to Definition~\ref{def:DPOr} defines the object $r_m(X)$ only up to universal isomorphism (of the relevant pushout complement and pushout in~\eqref{eq:DPOr}), it is a priori clear that we must make precise the concept as a form of \emph{``sum over all outcomes $r_m(X)$ up to isomorphism''}. At this point, one might in principle envision multiple possible design choices, yet the construction of canonical representation introduced below will have certain practical advantages:
\begin{itemize}[label=$\triangleright$]
  \item Computing the sum over all outcomes possible via first applying rule $r_1$ via all possible admissible matches to an object $X$, followed by summing over all possible ways to apply rule $r_2$ to the outcomes of the first step, one may alternatively via the Concurrency Theorem (Theorem~\ref{thm:concur}) first pre-compute the sum over all ways to compose $r_2$ with $r_1$ (computed concretely via the rule algebra product operation $\delta(r_2)*_{\cR_{\bfC}}\delta(r_1)$), followed by taking the sum over applying the rules encoded in $\delta(r_2)*_{\cR_{\bfC}}\delta(r_1)$ to $X$ along all admissible matches. The canonical representation $\rho_{\cR_{\bfC}}$ defined below is precisely the ``canonical choice'' to make this concept concrete.
  \item One of the key motivations for this particular method of construction was to recover the important example of the Heisenberg-Weyl algebra as a special case of the general DPO algebra and canonical representation construction (see Section~\ref{sec:HW} and further details therein).
\end{itemize}

\begin{defi}\label{def:canRep}
Let $\bfC$ be a category satisfying Assumption~\ref{ass:RAdpo}, and which in addition possesses an $\cM$-initial object $\mIO\in \obj{\bfC}$, and let $\cR_{\bfC}$ be its associated rule algebra of DPO type. Denote by $\hat{\bfC}\equiv(\hat{\bfC},+,\cdot)$ the \emph{$\bR$-vector space of objects of $\bfC$}, defined via a bijection $\ket{.}:\obj{\bfC}_{\cong}\rightarrow \hat{\bfC}$ from isomorphism classes of objects of $\bfC$ to the set of basis vectors of $\hat{C}$. Then the \emph{canonical representation} $\rho_{\bfC}$ of $\cR_{\bfC}$ is defined as the algebra homomorphism
$\rho_{\bfC}:\cR_{\bfC}\rightarrow End(\hat{\bfC})$, with
\begin{equation}\label{eq:canRep}
\rho_{\bfC}(p)\ket{C}:=\begin{cases}
\sum_{m\in \Match{p}{C}}\ket{p_m(C)}\quad &\text{if }\Match{p}{C}\neq \emptyset\\
0_{\hat{\bfC}}&\text{otherwise,}
\end{cases}
\end{equation}
extended to generic elements of $\cR_{\bfC}$ and of $\hat{\bfC}$ by linearity.
\end{defi}
The fact that $\rho_C$ as given in Definition~\ref{def:canRep} is an algebra homomorphism is shown below.
\begin{thm}[Canonical Representation]\label{thm:canRep}
For $\bfC$ a category satisfying Assumption~\ref{ass:RAdpo} and with $\cM$-initial object, $\rho_{\bfC}: \cR_{\bfC} \rightarrow End(\hat{\bfC})$ of Definition~\ref{def:canRep} is a homomorphism of unital associative algebras.
\end{thm}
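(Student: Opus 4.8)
The plan is to verify the three defining properties of a homomorphism of unital associative algebras: $\bR$-linearity, compatibility with the products, and preservation of the unit. Linearity is built in, since $\rho_{\bfC}$ is prescribed on the basis $\{\delta(p)\}_{p\in\Lin{\bfC}}$ of $\cR_{\bfC}$ and extended by linearity. Here it is essential to record that Assumption~\ref{ass:RAdpo} (finitarity) guarantees that $\Match{p}{C}$ is finite for every rule $p$ and object $C$, so that each $\rho_{\bfC}(p)$ genuinely lands in $End(\hat{\bfC})$ (each basis vector $\ket{C}$ maps to a \emph{finite} linear combination of basis vectors), and likewise $\RMatch{p_2}{p_1}$ is finite, so the rule-algebra product is well-defined. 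Since $\rho_{\bfC}$, $*_{\cR_{\bfC}}$ and composition in $End(\hat{\bfC})$ are all (bi)linear, it then suffices to check multiplicativity $\rho_{\bfC}(\delta(p_2)*_{\cR_{\bfC}}\delta(p_1))=\rho_{\bfC}(\delta(p_2))\,\rho_{\bfC}(\delta(p_1))$ on a single basis vector $\ket{C}$ of $\hat{\bfC}$.

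The core step is to expand both sides as double sums and match them via the Concurrency Theorem. Using Definitions~\ref{def:RADPO} and~\ref{def:canRep}, the left-hand side applied to $\ket{C}$ is $\sum_{\mathbf{m}\in\RMatch{p_2}{p_1}}\sum_{n\in\Match{\comp{p_2}{\mathbf{m}}{p_1}}{C}}\ket{(\comp{p_2}{\mathbf{m}}{p_1})_n(C)}$, indexed by pairs consisting of an admissible match of rules together with an admissible match of the composite into $C$, while the right-hand side is $\sum_{m_1\in\Match{p_1}{C}}\sum_{m_2\in\Match{p_2}{p_{1_{m_1}}(C)}}\ket{p_{2_{m_2}}(p_{1_{m_1}}(C))}$, indexed by two-step derivation sequences. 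Because the theorem's hypotheses include both an $\cM$-initial object and $\cM$-effective unions, Assumption~\ref{as:cats} holds and Theorem~\ref{thm:concur} applies: its Synthesis and Analysis parts furnish a bijection between two-step derivation sequences $(m_1,m_2)$ from $C$ and pairs $(\mathbf{m},n)$, under which the terminal objects satisfy $(\comp{p_2}{\mathbf{m}}{p_1})_n(C)\cong p_{2_{m_2}}(p_{1_{m_1}}(C))$. Since $\ket{\cdot}$ is defined on isomorphism classes of objects, isomorphic outputs contribute the identical basis vector, so the two double sums agree term by term; the convention $\sum_{\emptyset}=0_{\hat{\bfC}}$ makes this consistent in the degenerate cases where one of the families of matches is empty, so that no spurious term survives on either side.

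It remains to verify unitality. The unit of $\cR_{\bfC}$ is $R_{\mIO}=(\grule{\mIO}{id_{\mIO}}{\mIO})$, corresponding to the rule $(\mIO\leftarrow\mIO\rightarrow\mIO)$. For any object $C$, an admissible match is a morphism $\mIO\rightarrow C$ in $\cM$ satisfying the gluing condition; since $\mIO$ is $\cM$-initial (Definition~\ref{def:Minit}) there is exactly one such morphism, and the associated pushout complement and pushout in~\eqref{eq:DPOr} are computed along the identity $\mIO\to\mIO$, producing $p_m(C)\cong C$. Hence $\Match{R_{\mIO}}{C}$ is a singleton and $\rho_{\bfC}(R_{\mIO})\ket{C}=\ket{C}$ for every basis vector, so $\rho_{\bfC}(R_{\mIO})=\mathbb{1}_{End(\hat{\bfC})}$. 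The main obstacle is the multiplicativity step: it depends on the bijection of Theorem~\ref{thm:concur} being \emph{exact} (not merely surjective) and on the derived objects being determined only up to isomorphism, which is precisely the reason the refined, span-based formulation of the Concurrency Theorem under Assumption~\ref{as:cats} is required here; the remaining care is bookkeeping, ensuring the empty-match cases on the two sides vanish in lockstep.
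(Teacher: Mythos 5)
Your proposal is correct and follows essentially the same route as the paper's proof: unitality is checked directly on basis vectors using the $\cM$-initial object, and multiplicativity is established by expanding both sides as double sums over matches and invoking the bijection between two-step derivation sequences and pairs (rule-composition match, composite match) furnished by the Synthesis and Analysis parts of the Concurrency Theorem. Your additional remarks on finitarity guaranteeing well-definedness of the sums and on the role of isomorphism classes are consistent with, and slightly more explicit than, the paper's own presentation.
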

\begin{proof}
In order for $\rho_{\bfC}$ to qualify as an algebra homomorphism (of unital associative algebras $\cR_{\bfC}$ and $End(\hat{\bfC})$), we must have (with $R_{\mIO}=\delta(r_{\mIO})$, $r_{\mIO}=\GRule{\mIO}{id_{\mIO}}{\mIO}$)
\begin{equation*}
(i)\; \rho_{\bfC}(R_{\mIO})=\mathbb{1}_{End(\hat{\bfC})}\quad \text{and}\quad (ii)\;\forall R_1,R_2\in \cR_{\bfC}:\; \rho_{\bfC}(R_1*_{\cR_{\bfC}}R_2)=\rho_{\bfC}(R_1)\rho_{\bfC}(R_1)\,.
\end{equation*}
Due to linearity, it suffices to prove the two properties on basis elements $\delta(p),\delta(q)$ of $\cR_{\bfC}$ and on basis elements $\ket{C}$ of $\hat{\bfC}$. Property $(i)$ follows directly from the definition,
\begin{equation*}
\forall C\in \obj{\bfC}_{\cong}:\quad \rho_{\bfC}(R_{\mIO})\ket{C}\overset{\eqref{eq:canRep}}{=}\sum_{m\in\Match{r_{\mIO}}{C}}\ket{{(r_{\mIO})}_m(C)}=\ket{C}\,.
\end{equation*}
Property $(ii)$ follows from Theorem~\ref{thm:concur} (the Concurrency Theorem): for all basis elements $\delta(p),\delta(q)\in \cR_{\bfC}$ (with $p,q\in \Lin{\bfC}$) and for all $C\in \obj{\bfC}_{\cong}$,
\begin{align*}
\rho_{\bfC}\left(\delta(q)*_{\bfC}\delta(p)\right)\ket{C}
&\overset{\eqref{eq:defRcomp}}{=}
  \sum_{\mathbf{d}\in \RMatch{q}{p}}
  \rho_{\bfC}\left(\delta\left(\comp{q}{\mathbf{d}}{p}\right)\right)\ket{C}\\
&\overset{\eqref{eq:canRep}}{=}
  \sum_{\mathbf{d}\in \RMatch{q}{p}}\;
  \sum_{e\in\Match{r_{\mathbf{d}}}{C}}\ket{{(r_{\mathbf{d}})}_e(C)}\quad \tag{$r_{\mathbf{d}}=\comp{q}{\mathbf{d}}{p}$}\\
&=
\sum_{m\in \cM_{p}(C)}\sum_{n\in\Match{q}{p_{m}(C)}}
\ket{q_n(p_m(C))} \tag{via Thm.~\ref{thm:concur}}\\ 
&\overset{\eqref{eq:canRep}}{=}
\sum_{m\in \Match{p}{C}}\rho_{\bfC}\left(\delta(q)\right)\ket{p_m(C)}\\
&\overset{\eqref{eq:canRep}}{=}
\rho_{\bfC}\left(\delta(q)\right)\rho_{\bfC}\left(\delta(p)\right)\ket{C}\,.
\tag*{\qedhere}
\end{align*}%
\end{proof}


\subsection{Recovering the blueprint: the Heisenberg-Weyl algebra}%
\label{sec:HW}


As a first consistency check and interesting special (and arguably simplest) case of rule algebras, consider the adhesive category $\mathbb{F}$ of equivalence classes of finite sets, and functions. This category might alternatively be interpreted as the category $\mathbf{G}_0$ of isomorphism classes of \emph{finite discrete graphs}, whose linear rules are precisely the injective partial morphisms of discrete graphs. Specialising to a subclass or linear rules, namely to those with a trivial context object,
\begin{equation*}
\GRule{O}{\emptyset}{I}\equiv (O\leftarrow \emptyset\rightarrow I)\,,
\end{equation*}
we recover the famous Heisenberg-Weyl algebra and its canonical representation.

\begin{defi}[c.f.\ e.g.\ \cite{blasiak2011combinatorial}]
The associative unital \emph{Heisenberg-Weyl algebra} over $\bR$ is defined as
\begin{equation}
  A_{HW}:=\frac{\bR[\pi^{\dag},\pi]}{\langle [\pi,\pi^{\dag}]-id\rangle}\,,
\end{equation}
where $\bR[\pi^{\dag},\pi]$ denotes the polynomial ring over $
\bR$ with two generators $\pi^{\dag}$ and $\pi$ (the ``creator'' and the ``annihilator'') quotiented by the ideal generated by the \emph{canonical commutation relation}
\begin{equation}
[\pi,\pi^{\dag}]\equiv \pi\pi^{\dag}-\pi^{\dag}\pi=id\,.
\end{equation}
Two well-known constructions of representations of $A_{HW}$ illustrate the important role of this algebra in applications of combinatorics and physics:
\begin{itemize}[label=$\triangleright$]
  \item The \emph{Bargmann-Fock (BF) representation} is defined as the algebra homomorphism from $A_{HW}$ to the space of endomorphisms over $\bR[z]$ (the vector space of polynomials in the formal variable $z$) that maps $\pi^{\dag}$ to the linear operator $\hat{z}$, and $\pi$ to $\tfrac{\partial}{\partial z}$, with
  \begin{equation}
   \forall n\in \bZ_{\geq0}:\quad  \hat{z} z^n:=z^{n+1}\,,\quad
   \tfrac{\partial}{\partial z}z^0:=0\,,\quad \tfrac{\partial}{\partial z}z^n:=nz^{n-1}\; (n>0)\,.
  \end{equation}
  The canonical commutation relation may then be explicitly verified to hold on each basis vector $z^n$:
  \begin{equation}
  \begin{aligned}
    [\tfrac{\partial}{\partial z},\hat{z}]z^0
    &=\tfrac{\partial}{\partial z}\hat{z}z^0-\hat{z}\tfrac{\partial}{\partial z}z^0=\tfrac{\partial}{\partial z}z^1-0=z^0\\
    [\tfrac{\partial}{\partial z},\hat{z}]z^n
    &=\tfrac{\partial}{\partial z}\hat{z}z^n-\hat{z}\tfrac{\partial}{\partial z}z^0=\tfrac{\partial}{\partial z}z^{n+1}-n\hat{z}z^{n-1}=z^n\; (n>0)
    \end{aligned}
  \end{equation}
  \item The so-called \emph{canonical representation} of the HW algebra (see (IV) in Theorem~\ref{thm:HW}), which is manifestly isomorphic to the BF representation via $\hat{z}\leftrightarrow a^{\dag}$, $\tfrac{\partial}{\partial z}\leftrightarrow a$ and $z^n\leftrightarrow \ket{n}$, is the formulation typically encountered in the combinatorics and physics literature (the latter especially in the context of quantum mechanics).
\end{itemize}
\end{defi}

\noindent
We will now provide a realisation of the Heisenberg-Weyl algebra directly within the DPO rule algebra formalism. Reserving the symbol $\cH$ for this realisation of the algebra to avoid confusion, note in particular that we give a very intuitive and \emph{intrinsic} meaning to the ``creator'' and the ``annihilator'' (as simply the rule algebra elements associated to the linear rules of creating and of deleting a vertex, respectively).

\begin{defi}\label{def:HW}
Let $\cR_0\equiv \cR_{\mathbf{G}_0}$ denote the rule algebra of DPO type rewriting for finite discrete graphs. Then the subalgebra $\cH$ of $\cR_0$ is defined as the algebra whose elementary generators are
\begin{equation}
x^{\dag}:=(\grule{\bullet}{\emptyset}{\emptyset})\equiv\delta(\bullet\leftarrow \emptyset\rightarrow \emptyset)\,,\quad
x:=(\grule{\emptyset}{\emptyset}{\bullet})\equiv\delta(\emptyset\leftarrow \emptyset\rightarrow \bullet)\,,
\end{equation}
and whose elements are (finite) linear combinations of words in $x^{\dag}$ and $x$ (with concatenation given by the rule algebra multiplication $*_{\cR_0}$) and of the unit element $R_{\emptyset}=(\grule{\emptyset}{\emptyset}{\emptyset})$. The canonical representation of $\cH$ is the restriction of the canonical representation of $\cR_0$ to $\cH$.
\end{defi}
The following theorem demonstrates how well-known properties of the Heisenberg-Weyl algebra (see e.g.\ \cite{blasiak2011combinatorial,bdgh2016,bdp2017} and references therein) follow directly from the previously introduced constructions of the rule algebra and its canonical representation. This justifies our claim that the Heisenberg-Weyl construction is a special case of our general framework.
\begin{thm}[Heisenberg-Weyl algebra from discrete graph rewriting rule algebra]\label{thm:HW}\hfill
\begin{enumerate}[(I)]
\item For integers $m,n>0$,
\begin{equation}
\underbrace{x^{\dag}*_{\cR_0}\dotsc*_{\cR_0}x^{\dag}}_{\text{$m$ times}}=\underbrace{x^{\dag}\uplus \dotsc\uplus x^{\dag}}_{\text{$m$ times}}\,,\quad \underbrace{x*_{\cR_0}\dotsc*_{\cR_0}x}_{\text{$n$ times}}=\underbrace{x\uplus \dotsc\uplus x}_{\text{$n$ times}}\,,
\end{equation}
where we define for linear rules $p_1,p_2\in Lin(\bfC)$
\begin{equation}
\delta(p_1)\uplus\delta(p_2):=\delta(\comp{p_1}{\emptyset}{p_2})\,.
\end{equation}
\item The generators $x,x^{\dag}\in \cH$ fulfil the \emph{canonical commutation relation}
\begin{equation}
[x,x^{\dag}]\equiv x*_{\cR_0}x^{\dag}-x^{\dag}*_{\cR_0}x=R_{\emptyset}\,,\quad R_{\emptyset}=(\grule{\emptyset}{\emptyset}{\emptyset})\,.
\end{equation}
\item Every element of $\cH$ may be expressed as a (finite) linear combination of so-called \emph{normal-ordered} expressions $x^{\dag\:*r}*x^{*s}$ (with $r,s\in \bZ_{\geq0}$ and $*\equiv *_{\cR_0}$). By convention, $x^{\dag\:*r}:=x^{\dag}*\dotsc *x^{\dag}$ (a product of $r$ generators $x^{\dag}$), and analogously for $x^{*\:s}$.
\item Denoting by $\ket{n}\equiv\ket{\bullet^{\uplus\:n}}$ ($n\in \bZ_{\geq 0}$) the basis vector associated to the discrete graph with $n$ vertices in the vector space $\hat{\bfG}_0$ of isomorphism classes of discrete graphs, the canonical representation of $\cH$ according to Definition~\ref{def:canRep} reads explicitly
\begin{equation}
a^{\dag}\ket{n}=\ket{n+1}\,,\quad a\ket{n}=\begin{cases}
n\cdot\ket{n-1}\quad &\text{if } n>0\\
0_{\hat{G}_0}&\text{else}
\end{cases}\,,
\end{equation}
with $a^{\dag}:=\rho_{\mathbf{G}_0}(x^{\dag})$ (the \emph{creation operator}) and $a:=\rho_{\mathbf{G}_0}(x)$ (the \emph{annihilation operator}).
\end{enumerate}
\end{thm}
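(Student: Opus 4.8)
The plan is to exploit the fact that in $\mathbf{G}_0$ (equivalently, finite sets) every object is a discrete graph $\bullet^{\uplus n}$, every $\cM$-morphism is an injection, and pushout complements along monomorphisms always exist. Consequently each rule-algebra product reduces, via Definition~\ref{def:DPOcc}, to a finite and completely explicit enumeration of admissible overlaps $\mathbf{m}=(I_2\leftarrow M_{21}\rightarrow O_1)$, where $M_{21}$ ranges over the isomorphism classes of monic spans between $I_2$ and $O_1$. I would organise the proof around computing a handful of such overlaps and then bootstrapping the remaining parts. Throughout I will use that the empty (trivial) overlap $M_{21}=\mIO=\emptyset$ is always admissible, as noted after Definition~\ref{def:RADPO}.

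For part~(I) the key observation is that the creator $x^{\dag}$ has empty input ($I=\emptyset$) and every iterated product of annihilators has empty output ($O=\emptyset$); hence in each product of like generators one leg of every candidate overlap maps into the empty graph, forcing $M_{21}=\emptyset$. Thus only the trivial overlap is admissible and the product collapses to the single disjoint-union term $\uplus$. An induction on $m$ (resp.\ $n$) then yields the claimed identities, since the $\uplus$-composite of a rule creating/deleting several vertices with one further generator again creates/deletes exactly one additional vertex.

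Part~(II) is the heart of the argument and the step I expect to demand the most care. I would compute $x *_{\cR_0} x^{\dag}$ and $x^{\dag} *_{\cR_0} x$ directly from the double-pushout-complement construction of Definition~\ref{def:DPOcc}. For $x^{\dag} *_{\cR_0} x$ both $I_2$ and $O_1$ are empty, so again only the trivial overlap occurs and the product equals the single term $x^{\dag}\uplus x=\delta(\bullet\leftarrow\emptyset\rightarrow\bullet)$. For $x *_{\cR_0} x^{\dag}$ one has $I_2=O_1=\bullet$, so there are exactly two overlaps up to isomorphism: the empty one $M_{21}=\emptyset$, with pushout $N_{21}=\bullet\uplus\bullet$ and composite again $x^{\dag}\uplus x$; and the full one $M_{21}=\bullet$, with pushout $N_{21}=\bullet$, forcing $\overline{K}_2=\overline{K}_1=\emptyset$ and hence composite $R_{\emptyset}=(\grule{\emptyset}{\emptyset}{\emptyset})$ (the created vertex being immediately consumed). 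Subtracting gives $[x,x^{\dag}]=(x^{\dag}\uplus x + R_{\emptyset})-(x^{\dag}\uplus x)=R_{\emptyset}$. The delicate points are verifying that these are indeed all admissible overlaps and computing each composite via the prescribed pushout complements.

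Finally, part~(III) follows by a standard normal-ordering induction on word length: any occurrence of a factor $x *_{\cR_0} x^{\dag}$ inside a word may be rewritten, using part~(II), as $x^{\dag} *_{\cR_0} x + R_{\emptyset}$, which strictly reduces the number of annihilator-before-creator inversions and terminates with all creators to the left; part~(I) then merges repeated like generators into the powers $x^{\dag\:*r}*x^{*s}$. For part~(IV) I would apply Definition~\ref{def:canRep} directly: the unique empty match of $x^{\dag}$ into $\bullet^{\uplus n}$ produces $\bullet^{\uplus(n+1)}$, so $a^{\dag}\ket{n}=\ket{n+1}$; and the admissible matches of $x$ into $\bullet^{\uplus n}$ are precisely the $n$ injections $\bullet\hookrightarrow\bullet^{\uplus n}$, each deleting one vertex to yield $\bullet^{\uplus(n-1)}$, whence $a\ket{n}=n\cdot\ket{n-1}$ for $n>0$ and $a\ket{0}=0$ since no such injection exists.
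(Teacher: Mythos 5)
Your proposal is correct and follows essentially the same route as the paper's proof: trivial-only overlaps force the products of like generators to collapse to disjoint unions in part (I), the explicit two-overlap computation of $x *_{\cR_0} x^{\dag}$ versus the single-overlap $x^{\dag} *_{\cR_0} x$ yields the commutator in part (II), normal-ordering by repeated application of (II) gives (III), and direct match-counting in the canonical representation gives (IV). No gaps.
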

\newpage
\begin{proof}~\begin{enumerate}[(I)]
\item Since there is no partial injection possible between the input of one copy and the output of another copy of $x^{\dag}$ other than the trivial match, and similarly for two copies of $x$, the claim follows by induction.
\item Computing the commutator $[x,x^{\dag}]=x*x^{\dag}-x^{\dag}*x$ (with $*\equiv*_{\cR_0}$) explicitly, we find that
\begin{equation}
x*x^{\dag}= x\uplus x^{\dag}+R_{\emptyset}\,,\quad
x^{\dag}*x=x^{\dag}\uplus x\,,
\end{equation}
from which the claim follows due to commutativity of the operation $\uplus$ on $\cR_0$, $x\uplus x^{\dag}=x^{\dag}\uplus x$. Here, the contribution $R_{\emptyset}$ arises from the following sequential composition:
\begin{equation}\label{eq:HWproofAux}\gdef\mycdScale{0.85}
\begin{aligned}
&\begin{mycd}
      \emptyset\ar[d] &
      \emptyset \ar[l]\ar[r]\ar[d]
        \ar[dl,phantom,"\mathsf{PO}"] &
      \OneVertG[] \ar[dr,h1color,bend right]
      \ar[dl,phantom,"\mathsf{POC}"] &
      {\OneVertG[h1color]}
        \ar[l,h1color]
        \ar[r,h1color]\ar[d,h1color,phantom,"\mathsf{PO}"] &
      \OneVertG[] \ar[dl,h1color,bend left]
      \ar[dr,phantom,"\mathsf{POC}"]&
      \emptyset \ar[l]\ar[r]\ar[d]
        \ar[dr,phantom,"\mathsf{PO}"] &
      \emptyset\ar[d]\\
      {\color{h2color}\emptyset} &
      \emptyset\ar[l]\ar[rr] & &
      {\OneVertG[h1color]} & &
      \emptyset\ar[ll]\ar[r] & {\color{h2color}\emptyset}\\
    \end{mycd}\\
  &({\color{h2color}\emptyset}\leftarrow\emptyset\rightarrow \OneVertG[h1color])\circ (\OneVertG[h1color]\leftarrow \emptyset \rightarrow{\color{h2color}\emptyset})
  ={\color{h2color} (\emptyset\leftarrow\emptyset\rightarrow\emptyset)}
  \end{aligned}
  \end{equation}
\item It suffices to prove the statement for basis elements of $\cH$. Consider thus an arbitrary composition of a finite number of copies of the generators $x$ and $x^{\dag}$. Then by repeated application of the commutation relation $[x,x^{\dag}]=R_{\emptyset}$, and since $R_{\emptyset}$ is the unit element for $*_{\cR_0}$, we can convert the arbitrary basis element of $\cH$ into a linear combination of normal-ordered elements. More explicitly, from the viewpoint of the composition operation on linear rules, an expression of the form $x^{\dag\:*r}*x^{*\:s}$ is easily verified to evaluate to
\begin{equation}
x^{\dag\:*r}*x^{*\:s}=\delta\left(\bullet^{\uplus\:r}\leftarrow \emptyset\rightarrow \bullet^{\uplus\:s}\right)\,.
\end{equation}
In full analogy to the computation presented in~\eqref{eq:HWproofAux}, composing $x^{\dag\:*r}*x^{*\:s}$ with $x^{\dag\:*k}*x^{*\:\ell}$ will evaluate to
\begin{equation}
x^{\dag\:*r}*x^{*\:s}*x^{\dag\:*k}*x^{*\:\ell}=\sum_{n=0}^{\min(s,k)}\tfrac{s! k!}{(s-n)!n!(k-n)!}\, x^{\dag\:*(r+k-n)}*x^{*\:(s+\ell-n)}\,.
\end{equation}
Note in particular that the coefficient of a term $x^{\dag\:*(r+k-n)}*x^{*\:(s+\ell-n)}$ in the above sum coincides with the number of ways to match $n$ of the vertices between a discrete graph with $s$ vertices (i.e.\ the input interface of the rule encoded in $x^{\dag\:*r}*x^{*\:s}$) and a discrete graph with $k$ vertices (i.e.\ the output interface of the rule encoded in $x^{\dag\:*k}*x^{*\:\ell}$). Thus the DPO algebra implementation precisely explains the full combinatorics involved in the problem of ``normal-ordering'' expressions in the HW algebra.
\item Note first that by definition $\ket{0}=\ket{\emptyset}$. To prove the claim that for all $n\geq 0$
\begin{equation*}
a^{\dag}\ket{n}=\ket{n+1}\,,
\end{equation*}
we apply Definitions~\ref{def:DPOr} and~\ref{def:canRep} by computing the following diagram (compare~\eqref{eq:DPOr}): there exists precisely one admissible match of the empty graph $\emptyset\in \obj{\mathbf{G}_0}$ into the $n$-vertex discrete graph $\OneVertG[]^{\uplus\:n}$, whence constructing the pushout complement marked with dashed arrows and the pushout marked with dotted arrows we verify the claim: 
\begin{equation*}\gdef\mycdScale{0.85}
\begin{mycd}
\OneVertG[] \ar[d,dotted]
\ar[dr,phantom,"{\mathsf{PO}}"] &
\emptyset \ar[l]\ar[r]\ar[d,dotted]
\ar[dr,phantom,"{\mathsf{POC}}"] &
\emptyset \ar[d,"\exists!"]\\
\OneVertG[]^{\uplus\:(n+1)} & 
\OneVertG[]^{\uplus\:n}\ar[l,dashed]\ar[r,dashed] & 
\OneVertG[]^{\uplus\:n}\\ 
\end{mycd}
\end{equation*}
Proceeding analogously in order to prove the formula for the representation $a=\rho_{\mathbf{G_0}}(x)$,
\begin{equation*}
  a\ket{n}:=\begin{cases}
  n\cdot\ket{n-1}\quad &\text{if } n>0\\
  0_{\hat{G}_0} &\text{else,}
  \end{cases}
\end{equation*}
we find that for $n>0$ there exist $n$ admissible matches of the $1$-vertex graph $\OneVertG[]$ into the $n$-vertex graph $\OneVertG[]^{\uplus\:n}$, for each of which the application of the rule $\GRule{\OneVertG[]}{}{\emptyset}$ along the match results in the graph $\OneVertG[]^{\uplus\:(n-1)}$: 
\begin{equation*}\gdef\mycdScale{0.85}
\begin{mycd}
\emptyset
\ar[d,dotted]\ar[dr,phantom,"{\mathsf{PO}}"] &
\emptyset \ar[l]\ar[r]\ar[d,dotted]\ar[dr,phantom,"{\mathsf{POC}}"] &
\OneVertG[] \ar[d,"\text{$n$ different matches}"]\\
\OneVertG[]^{\uplus\:(n-1)} & 
\OneVertG[]^{\uplus\:(n-1)}\ar[l,dashed]\ar[r,dashed] & 
\OneVertG[]^{\uplus\:n}\\ 
\end{mycd}
\; \Rightarrow\;
\forall n>0:a\ket{\OneVertG[]^{\uplus\: n}}=n\cdot \ket{\OneVertG[]^{\uplus\:(n-1)}} 
\end{equation*}
Finally, for $n=0$, since by definition there exists no admissible match from the $1$-vertex graph $\OneVertG$ into the empty graph $\emptyset$, whence indeed
\begin{equation*}
a\ket{\emptyset}=\rho_{\mathbf{G_0}}\left(\grule{\emptyset}{\emptyset}{\OneVertG[]}\right)\ket{\emptyset}=0_{\hat{\mathbf{G}}_0}\,.
\qedhere
\end{equation*}
\end{enumerate}
\end{proof}


\subsection{Applications of rule algebras to combinatorics}%
\label{sec:RT}

Here we consider an example application, working with undirected multigraphs.
\begin{defi}[Compare e.g.\ \cite{padberg2017towards}]
  Let $Id_{\mathbf{Set}}:\mathbf{Set}\rightarrow\mathbf{Set}$ be the identity functor on $\mathbf{Set}$, and let $\cP^{(1,2)}:\mathbf{Set}\rightarrow\mathbf{Set}$ be the restricted covariant power set functor (which maps a set $S$ to its subsets of cardinality $1$ or $2$). Then the \emph{category of finite undirected multigraphs} $\mathbf{uGraph}$ is defined as the finitary restriction of a comma category~\cite{ehrig:2006aa},
  \begin{equation}
    \mathbf{uGraph}:={(Id_{\mathbf{Set}},\cP^{(1,2)})}_{{\rm fin}}\,.
  \end{equation}
  An object $U$ of $\mathbf{uGraph}$ is specified via the data $U\equiv(V_U,E_U,inc_U)$, where $V_U$ is a set of vertices, $E_U$ a set of edges, and $inc_U:E_U\rightarrow \cP^{(1,2)}(V_U)$ is the incidence function.
\end{defi}

\begin{lem}
  $\mathbf{uGraph}$ is a weak adhesive HLR category, for $\cM_{\mathbf{uGraph}}$ the class of pairs of monomorphisms in $\mathbf{Set}_{{\rm fin}}$. It has an $\cM_{\mathbf{uGraph}}$-initial object (the empty graph $\emptyset\in \obj{\mathbf{uGraph}}$) as well as $\cM_{\mathbf{uGraph}}$-effective unions.
\end{lem}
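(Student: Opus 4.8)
The plan is to present $\mathbf{uGraph}$ as (the finitary restriction of) a comma category and to reduce the weak adhesive HLR property to two functor-preservation conditions via the standard construction theorem for comma categories of weak adhesive HLR categories (cf.~\cite{ehrig:2006aa,padberg2017towards}). Concretely, before finitary restriction, an object $(V,E,inc)$ is an object $inc\colon Id_{\mathbf{Set}}(E)\to \cP^{(1,2)}(V)$ of the comma category $(Id_{\mathbf{Set}}\downarrow \cP^{(1,2)})$, and a morphism is a pair $(\phi_E,\phi_V)$ making the evident incidence square commute. Pushouts and pullbacks in such a comma category are computed componentwise, the structure maps being induced provided the left functor preserves the relevant pushouts and the right functor preserves the relevant pullbacks; the construction theorem thereby reduces the claim to verifying that (a)~$Id_{\mathbf{Set}}$ preserves pushouts along $\cM$-morphisms, and (b)~$\cP^{(1,2)}$ preserves pullbacks along $\cM$-morphisms. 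Granting (a) and (b), every weak VK cube in the comma category---whose edges are $\cM$-morphisms, i.e.\ componentwise monomorphisms---decomposes into a VK cube over edge sets in $\mathbf{Set}$ and a VK cube over vertex sets transported through $\cP^{(1,2)}$, both of which hold because $\mathbf{Set}$ is adhesive.

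Condition (a) is immediate, since $Id_{\mathbf{Set}}$ preserves all colimits. Condition (b) is the main obstacle, and it is precisely where the restriction to the \emph{weak} variant originates: the covariant power set functor does \emph{not} preserve arbitrary pullbacks, so $\mathbf{uGraph}$ is not adhesive HLR. It does, however, preserve pullbacks along monomorphisms, and this survives the cardinality restriction. I would argue (b) directly: if $g\colon B\to D$ is injective and $f\colon A\to D$ is arbitrary, the pullback is the inclusion $f^{-1}(g(B))\hookrightarrow A$, and a short computation with images shows the comparison map $\cP^{(1,2)}(A\times_D B)\to \cP^{(1,2)}(A)\times_{\cP^{(1,2)}(D)}\cP^{(1,2)}(B)$ is a bijection---injectivity because one projection is a subset inclusion, surjectivity because $f(S')\subseteq g(B)$ forces $S'\subseteq f^{-1}(g(B))$---and the argument respects the constraint that subsets have cardinality $1$ or $2$. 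Having thus established weak adhesive HLR for the full comma category, the finitary restriction $\mathbf{uGraph}$ is again weak adhesive HLR by the finitary-restriction results of~\cite{Braatz:2010aa} (cf.\ Theorem~\ref{thm:finRes}), which equally preserve the weak adhesive HLR properties.

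It remains to identify $\cM$ and to treat the two additional properties. First I would check that the componentwise monomorphisms are exactly the monomorphisms of $\mathbf{uGraph}$, so that $\cM_{\mathbf{uGraph}}=\mono{\mathbf{uGraph}}$: the forgetful functor to $\mathbf{Set}\times\mathbf{Set}$ is faithful and hence reflects monos, giving one inclusion, while the converse follows by testing a candidate non-injective $\phi$ against the two morphisms out of a one-vertex graph (respectively a one-edge graph, using that injectivity on vertices forces parallel identifications to have equal incidences). For the $\cM$-initial object, note that $\cP^{(1,2)}(\emptyset)=\emptyset$, so the empty graph carries the empty incidence map; for every object $U$ there is a unique morphism out of it---the pair of empty functions---which is a componentwise monomorphism and hence lies in $\cM$.

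Finally, $\cM$-effective unions follow with no extra work from Theorem~\ref{thm:euAux}: since $\mathbf{uGraph}$ is in particular a \emph{horizontal} weak adhesive HLR category, the morphism $x$ induced in diagram~\eqref{eq:eu} is a monomorphism of $\mathbf{uGraph}$, and since $\cM_{\mathbf{uGraph}}=\mono{\mathbf{uGraph}}$ as just established, we obtain $x\in\cM_{\mathbf{uGraph}}$. This is exactly the route flagged in the discussion following Theorem~\ref{thm:euAux}, and completes the proof.
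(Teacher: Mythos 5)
Your proposal follows essentially the same route as the paper's proof: the comma-category construction theorem reduces everything to $Id_{\mathbf{Set}}$ preserving pushouts and $\cP^{(1,2)}$ preserving pullbacks along monomorphisms, the finitary restriction is handled by Theorem~\ref{thm:finRes}, and $\cM_{\mathbf{uGraph}}$-effective unions follow from Theorem~\ref{thm:euAux} together with the identification $\cM_{\mathbf{uGraph}}=\mono{\mathbf{uGraph}}$. The only substantive difference is that you verify the pullback-preservation property of $\cP^{(1,2)}$ by a direct (and correct) computation instead of citing~\cite{padberg2017towards}, which is a welcome addition of detail rather than a change of method.
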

\begin{proof}
  The identity functor $Id_{\mathbf{Set}}$ trivially preserves pushouts in $\mathbf{Set}$, while $\cP^{(1,2)}$ preserves pullbacks along monomorphisms in $\mathbf{Set}$ (cf.\ e.g.\ Appendix~A, Cor.~7 of~\cite{padberg2017towards}). Therefore, by~\cite[Thm.~4.15]{ehrig:2006aa}, $(Id_{\mathbf{Set}},\cP^{(1,2)})$ is a weak adhesive HLR category, and then by Theorem~\ref{thm:finRes}, so is its finitary restriction $\mathbf{uGraph}$. $\cM_{\mathbf{uGraph}}$-initiality is trivial, while the property of $\cM_{\mathbf{uGraph}}$-effective unions follows by application of Theorem~\ref{thm:euAux}.
\end{proof}

In order to illustrate the intimate interplay of rule algebraic and combinatorial structures, we will now provide an example where the integer coefficients arising in applications of rules of undirected multigraphs yield a known combinatorial integer sequence.
\begin{defi}
We define the algebra $\cA$ as the one generated\footnote{As in the case of the Heisenberg-Weyl algebra, by ``generated'' we understand that a generic element of $\cA$ is a finite linear combination of (finite) words in the generators and of the identity element $R_{\emptyset}$, with concatenation given by the rule algebra composition.} by the rule algebra elements
\begin{equation}\label{eq:Adef}
e_{+}:=\tfrac{1}{2}\cdot\left(\tP{%
  \vI{1}{1}{black}{}{black}{}
  \vI{1}{2}{black}{}{black}{}
  \eC{1}{1}{=}{black}{}}\right)\,,\quad
e_{-}:=\tfrac{1}{2}\cdot\left(\tP{%
  \vI{1}{1}{black}{}{black}{}
  \vI{1}{2}{black}{}{black}{}
  \eA{2}{1}{=}{black}{}}\right)\,, \quad
d:=\tfrac{1}{2}\cdot\left(\tP{%
  \vI{1}{1}{black}{}{black}{}
  \vI{1}{2}{black}{}{black}{}}\right)\quad (e_{+},e_{-},d\in \cR_{\mathbf{uGraph}})\,.
\end{equation}
For convenience, we adopt here a graphical notation (so-called ``rule diagrams''~\cite{bdg2016}) in which we depict a rule algebra basis element $(\grule{O}{f}{I})\in \cR_{\mathbf{uGraph}}$ as the graph of its induced injective partial morphism $(I\xrightharpoonup{f}O)\in Inj(I,O)$ of graphs $I$ and $O$, with the input graph $I$ drawn at the \emph{bottom}, the output graph $O$ at the \emph{top}, and where the structure of the morphism $f$ is indicated with \emph{dotted lines}. In the example above, $e_{+}$ encodes (up to the factor of $\tfrac{1}{2}$ chosen purely for convenience) a linear rule with input interface given by two vertices; in applying the rule, the vertices are to be kept (indicated by the vertical dotted lines), while a new edge between them is created (indicated by the $\times$ and dotted line to the edge, symbolising ``creation''). Dually, $e_{-}$ encodes the deletion of an edge, while $d$ encodes an identity rule on the pattern of two disjoint vertices. In all three rules, the factor $\tfrac{1}{2}$ is chosen such as to compensate for the symmetry of the three linear rules evident from the depictions (i.e.\ along the horizontal mirror axis of the ``rule diagrams'').
\end{defi}

The algebra thus defined may be characterised\footnote{``Characterised'' here refers to the observation that by utilising the commutation relations given in~\eqref{eq:UAcomm}, it is possible to express an arbitrary element of the algebra as a linear combination in ``normal-ordered terms'' $e_{+}^{*\:p}*e_{-}^{*\:m}*d^{*\:n}$ (for $p,m,n\in \bZ_{\geq0}$).} via its \emph{commutation relations}, which read (with $[x,y]:=x*y-y*x$ for $*\equiv*_{\cR_{\mathbf{uGraph}}}$)
\begin{equation}\label{eq:UAcomm}
[e_{-},e_{+}]=d\,,\quad [e_{+},d]=[e_{-},d]=0\,.
\end{equation}
Here, the only nontrivial contribution (i.e.\ the one that renders the first commutator non-zero) may be computed from the DPO-type composition diagram\footnote{Note that the number indices are used solely to specify the precise structure of the match, and are not to be understood as actual vertex labels or types.}
below and its variant for the admissible match $\TwoVertEdgeGLb{}{1}{2}\leftarrow
\TwoVertEdgeGLb{}{12'}{21'}\rightarrow  \TwoVertEdgeGLb{}{1'}{2'} $:
\begin{equation}\gdef\mycdScale{0.85}
\begin{aligned}
&\begin{mycd}
      \TwoVertG[]\ar[d] &
      \TwoVertG[]
        \ar[l]\ar[r]\ar[d]
        \ar[dl,phantom,"\mathsf{PO}"] &
      \TwoVertEdgeGLb{}{1}{2}
        \ar[dr,h1color,bend right]
        \ar[dl,phantom,"\mathsf{POC}"] &
      {\TwoVertEdgeGLb{h1color}{11'}{22'}}
        \ar[l,h1color]
        \ar[r,h1color]
        \ar[d,h1color,phantom,"\mathsf{PO}"] &
      \TwoVertEdgeGLb{}{1'}{2'}
        \ar[dl,h1color,bend left]
        \ar[dr,phantom,"\mathsf{POC}"]&
      \TwoVertG[]
         \ar[l]\ar[r]\ar[d]
        \ar[dr,phantom,"\mathsf{PO}"] &
      \TwoVertG[]\ar[d]\\
      \TwoVertG[h2color] &
      \TwoVertG[]\ar[l]\ar[rr] & &
      \TwoVertEdgeG[h1color] & &
      \TwoVertG[]\ar[ll]\ar[r] &
      \TwoVertG[h2color]\\
    \end{mycd}\\
  &(\TwoVertG[h2color]\leftarrow\TwoVertG[]\rightarrow \TwoVertEdgeG[h1color])\circ (\TwoVertEdgeG[h1color]\leftarrow \TwoVertG[] \rightarrow\TwoVertG[h2color])
  ={\color{h2color} (\TwoVertG[h2color]\leftarrow\TwoVertG[h2color]\rightarrow\TwoVertG[h2color])}
  \end{aligned}
\end{equation}

We find an interesting structure for the representation of $\cA$:
\begin{lem}
Let $E_{\pm}:=\rho(e_{\pm})$ and $D:=\rho(d)$ (for $\rho\equiv\rho_{\mathbf{uGraph}}$), and let $\hat{\bfG}:=\widehat{\mathbf{uGraph}}$. Denote for each non-negative integer $n\in \bZ_{\geq0}$ by $\hat{\bfG}_n\subset \hat{\bfG}$ the linear subspace of $\hat{\bfG}$ spanned by basis vectors indexed by isomorphism classes of undirected graphs with $n$ vertices. Then the linear endomorphisms $\rho(X)$ for $X\in\{e_{+},e_{-},d\}$ possess the vector spaces $\hat{\bfG}_n$ as \emph{invariant subspaces}, resulting in the \emph{decompositions}
\begin{equation}
\rho(X)=\bigoplus_{n\geq 0} \left(\rho(X)\vert_{\hat{\bfG}_n}\right)\,.
\end{equation}
\begin{proof}
The three rules that define the algebra $\cA$ do not modify the number of vertices when applied to a given graph (via the canonical representation). In other words, for each $X\in \{e_{+},e_{-},d\}$ and for a basis vector $\ket{G}$ of the invariant subspace $\hat{\mathbf{G}}_n$ of graphs with $n$ vertices, the image of $\ket{G}$ under application of $\rho(X)$ is a linear combination of basis vectors again in $\hat{\bfG}_n$, i.e.\ $\rho(X)\ket{G}\subset \hat{\mathbf{G}}_n$.
\end{proof}
\end{lem}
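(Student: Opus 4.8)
The plan is to reduce the statement to the single geometric fact that none of the three generating rules alters the number of vertices of a graph, so that the corresponding operators cannot move a basis vector out of its vertex-number stratum.

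First I would record the decomposition of the representation space itself. Every isomorphism class of finite undirected multigraph has a well-defined number of vertices $n\in\bZ_{\geq0}$, so the distinguished basis $\{\ket{G}\}$ of $\hat{\bfG}$ is partitioned according to $n$, giving $\hat{\bfG}=\bigoplus_{n\geq0}\hat{\bfG}_n$ as $\bR$-vector spaces. Granting this, it suffices to prove that each $\rho(X)$ with $X\in\{e_{+},e_{-},d\}$ sends $\hat{\bfG}_n$ into itself; the operator decomposition $\rho(X)=\bigoplus_{n\geq0}\bigl(\rho(X)\vert_{\hat{\bfG}_n}\bigr)$ is then the purely formal statement that a linear map respecting a direct sum of invariant subspaces is the direct sum of its restrictions.

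By linearity and the defining formula~\eqref{eq:canRep} it is then enough to fix one basis vector $\ket{G}$ with $\lvert V_G\rvert=n$ and one admissible match $m\in\Match{p}{G}$, where $p=(O\xleftarrow{o}K\xrightarrow{i}I)$ is the linear rule underlying $X$ (the scalar $\tfrac12$ in~\eqref{eq:Adef} is irrelevant to the support of $\rho(X)\ket{G}$), and to check that $p_m(G)$ again has exactly $n$ vertices. The crucial structural observation, read off the rule diagrams in~\eqref{eq:Adef}, is that for each of $e_{+},e_{-},d$ the vertex sets of $O$, $K$ and $I$ coincide and the legs $o,i$ act as the identity on vertices; the three rules differ only in their edge data. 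I would exploit this through the projection $V\colon\mathbf{uGraph}\to\mathbf{Set}$ onto vertex sets: since $\mathbf{uGraph}$ is the (finitary restriction of the) comma category $(Id_{\mathbf{Set}},\cP^{(1,2)})$, and colimits in a comma category are computed componentwise whenever the left functor preserves them—which holds trivially for $Id_{\mathbf{Set}}$—the projection $V$ preserves pushouts. Both squares of the DPO diagram~\eqref{eq:DPOr} are pushouts, so applying $V$ yields two pushout squares of vertex sets in $\mathbf{Set}$; because the top legs $V(i)$ and $V(o)$ are identities, each of these pushouts is taken along an identity and hence leaves the opposite object unchanged, so that $V_{\overline{K}}\cong V_G$ and then $V_{p_m(G)}\cong V_G$. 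Consequently $\ket{p_m(G)}\in\hat{\bfG}_n$, and therefore $\rho(X)\ket{G}\in\hat{\bfG}_n$, as required.

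I expect the only genuinely load-bearing point to be this middle claim that the vertex count is transported unchanged through the double pushout, and more precisely the justification that the vertex projection preserves the relevant pushouts. This is settled by the standard fact that colimits in a comma category $(F\downarrow G)$ are created on each component whenever the left functor $F$ preserves them, which here is immediate for $F=Id_{\mathbf{Set}}$; once this is in place, the reduction to pushouts along identity legs makes the remaining verifications routine.
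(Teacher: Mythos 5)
Your proposal is correct and takes essentially the same route as the paper's (very terse) proof: the entire content is the observation that the three generating rules leave the vertex set untouched, so each derived object stays in the same stratum $\hat{\bfG}_n$. You additionally supply the rigorous justification --- componentwise computation of pushouts in the comma category $(Id_{\mathbf{Set}},\cP^{(1,2)})$ and the fact that pushouts along isomorphisms are isomorphisms --- which the paper merely asserts in one line.
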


\begin{rem}While at first a rather technical observation, the decomposition of the linear operators $E_{\pm}=\rho(e_{\pm})$ and $D=\rho(d)$ via restriction to their invariant subspaces gives rise to rich combinatorial structures, even though the operators $E_{\pm}$ and $D$ originate from representations of very simple rule algebra elements. Since a subspace $\hat{\bfG}_n$ is characterised by isomorphism classes of finite undirected multigraphs with a fixed number $n$ of vertices, but with an arbitrary (finite) number of edges, each space $\hat{\bfG}_n$ is countably infinite-dimensional. We will exemplify the combinatorial structures that arise for the cases $n=2$ and $n=3$ in the following.
\end{rem}

One may easily verify that the operator $D$ may be equivalently expressed as
\begin{equation}\label{eq:defOv}
D=\tfrac{1}{2}\cdot\rho\left(\tP{%
  \vI{1}{1}{black}{}{black}{}
  \vI{1}{2}{black}{}{black}{}}\right)
  =\tfrac{1}{2}\left(O_{\bullet}O_{\bullet}-O_{\bullet}\right)\,,\quad
  O_{\bullet}:=\rho\left(\tP{\vI{1}{1}{black}{}{black}{}}\right)\,.
\end{equation}
Since the diagonal operator $O_{\bullet}$ when applied to an arbitrary graph state $\ket{G}$ for $G\in \bfG$ effectively counts the number $n_V(G)$ of vertices of $G$,
\begin{equation}
O_{\bullet}\ket{G}=n_V(G)\ket{G}\,,
\end{equation}
one finds that
\begin{equation}
D\ket{G}=\tfrac{1}{2}O_{\bullet}(O_{\bullet}-1)\ket{G}
=\tfrac{1}{2}n_V(G)(n_V(G)-1)\ket{G}\,.
\end{equation}
One may thus alternatively analyse the canonical representation of $\cA$ split into invariant subspaces of $D$. The lowest non-trivial such subspace is the space $\hat{\bfG}_2$ of undirected multigraphs on two vertices. It in fact contains a representation of the Heisenberg-Weyl algebra, with $E_{+}$ and $E_{-}$ taking the roles of the creation and of the annihilation operator, respectively, and with the number vectors $\ket{n}\equiv \ket{\bullet^{\uplus\:n}}$ implemented as follows (with ${(m)}_n:=\Theta(m-n)m!/(m-n)!$): 
\begin{equation}
\begin{aligned}
E^n_{+}\ket{\tP{\node[vertices] (a) at (1,1) {};\node[vertices] (a) at (1.5,1) {};}}
&=\ket{\tP{%
  \node[vertices] (a) at (1,1) {};
  \node[vertices] (b) at (3,1) {};
  \draw (a) edge[bend left=50] (b);
  \draw (a) edge[bend left=40] (b);
  \draw (a) edge[bend left=30] (b);
  \draw (b) edge[bend left=50] (a);
  \node at ($(a.center)!0.5!(b.center)$) {$\vcenter{\hbox{\vdots}} \text{\tiny{$n$ times}}$};}}\,,\quad 
  E_{-}\ket{\tP{%
  \node[vertices] (a) at (1,1) {};
  \node[vertices] (b) at (3,1) {};
  \draw (a) edge[bend left=50] (b);
  \draw (a) edge[bend left=40] (b);
  \draw (a) edge[bend left=30] (b);
  \draw (b) edge[bend left=50] (a);
  \node at ($(a.center)!0.5!(b.center)$) {$\vcenter{\hbox{\vdots}} \text{\tiny{$n$ times}}$};}}= 
  {(n)}_1\ket{\tP{%
  \node[vertices] (a) at (1,1) {};
  \node[vertices] (b) at (3,1) {};
  \draw (a) edge[bend left=50] (b);
  \draw (a) edge[bend left=40] (b);
  \draw (b) edge[bend left=50] (a);
  \node at ($(a.center)!0.5!(b.center)$) {$\vcenter{\hbox{\vdots}} \text{\tiny{$(n-1)$ times}}$};}}\,. 
\end{aligned}
\end{equation}
But already the invariant subspace based on the initial vector $\ket{\tP{%
\node[vertices] (a) at (1,1) {};
\node[vertices] (b) at (1.5,1) {};
\node[vertices] (c) at (2,1) {};}}\in \hat{\bfG}_3$ has a very interesting combinatorial structure:
\begin{equation}
\begin{aligned}
E_{+}\ket{\tP{%
\node[vertices] (a) at (1,1) {};
\node[vertices] (b) at (1.5,1) {};
\node[vertices] (c) at (2,1) {};}}
&=
3\ket{\tP{%
\node[vertices] (a) at (1,1) {};
\node[vertices] (b) at (1.5,1) {};
\node[vertices] (c) at (2,1) {};
\draw (a) edge (b);}}\equiv 3\ket{\{1,0,0\}}\\
E_{+}^2\ket{\tP{%
\node[vertices] (a) at (1,1) {};
\node[vertices] (b) at (1.5,1) {};
\node[vertices] (c) at (2,1) {};}}
&=
3\left(\ket{\tP{%
\node[vertices] (a) at (1,1) {};
\node[vertices] (b) at (1.5,1) {};
\node[vertices] (c) at (2,1) {};
\draw (a) edge[bend left] (b);
\draw (b) edge[bend left] (a);}}
+2\ket{\tP{%
\node[vertices] (a) at (1,1) {};
\node[vertices] (b) at (1.5,1) {};
\node[vertices] (c) at (2,1) {};
\draw (a) edge (b);
\draw (b) edge (c);}}
\right)
\equiv3\left( \ket{\{2,0,0\}}+2\ket{\{1,1,0\}}\right)\\
E_{+}^3\ket{\tP{%
\node[vertices] (a) at (1,1) {};
\node[vertices] (b) at (1.5,1) {};
\node[vertices] (c) at (2,1) {};}}
&=
3\left(\ket{\tP{%
\node[vertices] (a) at (1,1) {};
\node[vertices] (b) at (1.5,1) {};
\node[vertices] (c) at (2,1) {};
\draw (a) edge[bend left] (b);
\draw (a) edge (b);
\draw (b) edge[bend left] (a);}}
+6\ket{\tP{%
\node[vertices] (a) at (1,1) {};
\node[vertices] (b) at (1.5,1) {};
\node[vertices] (c) at (2,1) {};
\draw (a) edge[bend left] (b);
\draw (a) edge[bend right] (b);
\draw (b) edge (c);}}
+2\ket{\tP{%
\node[vertices] (a) at (1,1) {};
\node[vertices] (b) at (1.5,1) {};
\node[vertices] (c) at (2,1) {};
\draw (a) edge[bend left] (c);
\draw (a) edge (b);
\draw (b) edge (c);}}
\right)\\
&\equiv3\left(\ket{\{3,0,0\}}+6\ket{\{2,1,0\}}+2\ket{\{1,1,1\}}\right)\\
&\;\vdots\\
E^n_{+}\ket{\tP{%
\node[vertices] (a) at (1,1) {};
\node[vertices] (b) at (1.5,1) {};
\node[vertices] (c) at (2,1) {};}}&\equiv E^n_{+}\ket{\{0,0,0\}}=3\sum_{k=0}^n T(n,k)\ket{S(n,k)}\,
\end{aligned}
\end{equation}
Here, the state $\ket{\{f,g,h\}}$ with $f\geq g\geq h\geq 0$ and $f+g+h=n$ is the graph state on three vertices with (in one of the possible presentations of the isomorphism class) $f$ edges between the first two, $g$ edges between the second two and $h$ edges between the third and the first vertex. Furthermore,  $T(n,k)$ and $S(n,k)$ are given by the entry \href{https://oeis.org/A286030}{\emph{A286030}} of the OEIS database~\cite{OEISts}. The interpretation of $S(n,k)$ and $T(n,k)$ is that each triple $S(n,k)$ encodes the outcome of a game of three players, counting (without regarding the order of players) the number of wins per player for a total of $n$ games. Here, the second index $k$ in a given $S(n,k)=\{f',g',h'\}$ denotes the position of this triple of integers in the reverse lexicographic order over all triples of integers $\{f,g,h\}$ satisfying the constraints $f\geq g\geq h\geq 0$ and $f+g+h=n$ (see~\cite{OEISts} for further details). Then $T(n,k)/3^{(n-1)}$ gives the probability that a particular pattern $S(n,k)$ occurs in a random sample.

\medskip
While of course the example presented must be seen as just a first proof of concept, many integer sequences as well as certain types of orthogonal polynomials possess an interpretation as being related to the counting of certain graphical structures (cf.\ e.g.\ \cite{strehl2017lacunary} and references therein). It thus appears to be an interesting avenue of future research to investigate the apparently quite intricate interrelations between rule algebra representation theory and combinatorics, which suggests in particular to reinterpret the graphical methods of importance in enumerative combinatorics via a direct encoding within the rule algebra framework.


\subsection{Applications of rule algebras to stochastic mechanics}%
\label{sec:SM}


One of the main motivations that underpinned the development of the rule algebra framework prior to this paper~\cite{bdg2016,bdgh2016} has been the link between associative unital algebras of transitions and continuous-time Markov chains (CTMCs). Famous examples of such particular types of CTMCs include chemical reaction systems (see e.g.\ \cite{bdp2017} for a recent review) and stochastic graph rewriting systems (see~\cite{bdg2016} for a rule-algebraic implementation). With our novel formulation of unital associative rule algebras and their canonical representation for $\cM$-adhesive categories, it is possible to specify a \emph{general stochastic mechanics framework}. While a detailed presentation of the far-reaching consequences of this result is relegated to~\cite{bdg2018}, it suffices here to define the basic framework and to indicate the potential of the idea with a short worked example.

\begin{rem}
  For the readers not familiar with the theory of \emph{continuous-time Markov chains (CTMCs)}, the salient points of the mathematical construction are as follows:
  \begin{itemize}[label=$\triangleright$]
    \item Fixing a suitable $\cM$-adhesive category $\bfC$, the CTMCs to be defined will evolve over a space of \emph{(sub-)probability distributions} indexed by isomorphism classes of objects of $\bfC$ (see~\eqref{eq:defProbC}). 
    \item In general, these distributions may (and in many examples will) have \emph{countably infinite support}. Therefore, one must utilise certain concepts introduced in the mathematical theory of CTMCs (notably \emph{(sub-)stochastic operators} and \emph{Fr\`{e}chet spaces}) in order to explain the passage from linear operators (acting on elements of vector spaces, i.e.\ on finite linear combinations of basis vectors) to the countably infinitely supported setting. 
    \item Specifically, a type of operator called \emph{infinitesimal generator} must be constructed that will play a crucial role as the operator governing the evolution of the CTMC\@. In the setting at hand, a central result of CTMC theory entails that this type of operator is fully characterised by its ``matrix element structure'' (see~\eqref{eq:defHctmc}), i.e.\ its off-diagonal elements must be non-negative, while the diagonal elements must evaluate to minus the sum of the off-diagonal elements in a given row (and this sum must be finite).
    \item Finally, a key feature of CTMC theory for stochastic rewriting systems is its degree of freedom in choosing a set of \emph{observed properties} (typically a choice of patterns to count) of the system at hand in its time-evolution. Unlike in the special setting of CTMCs arising from stochastic rewriting systems on \emph{discrete graphs} with vertices of possibly different types (where naturally the counts of the numbers vertices of the different types is the ``canonical'' choice of observed property), in generic rewriting systems it is not clear from the outset which properties are interesting to observe. We will first introduce the mathematical notion of \emph{observables} as certain diagonal linear operators in Definition~\ref{def:obs}, followed by an illustration in a concrete worked example at the end of this section.
  \end{itemize}
\end{rem}

\noindent
We begin our construction by specialising the general definition of continuous-time Markov chains (see e.g.\ \cite{norris}) to the setting of rewriting systems (compare~\cite{bdg2016,bdp2017}).

\begin{defi}
Consider an $\cM$-adhesive category $\bfC$ with $\cM$-initial object $\mIO\in \obj{\bfC}$ and satisfying Assumption~\ref{ass:RAdpo}, and let $\hat{\bfC}$ denote the free $\bR$-vector space indexed by isomorphism classes of objects of $\bfC$ according to Definition~\ref{def:canRep}. Then we define the space $\Prob{\bfC}$ as the \emph{space of sub-probability distributions} in the following sense:
\begin{equation}\label{eq:defProbC}
\Prob{\bfC}:=\left.\left\{
\ket{\Psi}=\sum_{o\in \obj{\bfC}_{\cong}}\psi_o \ket{o}
\right\vert
\forall o\in \obj{\bfC}_{\cong}: \psi_o\in \bR_{\geq0}
\land
\sum_{o\in \obj{\bfC}_{\cong}}\psi_o\leq 1
\right\}\,.
\end{equation}
Let $\Stoch{\bfC}:=End(\Prob{\bfC})$ be the space of \emph{sub-stochastic operators}, and denote by $\cS_{\bfC}$ the space\footnote{The space $\cS_{\bfC}$ is referred to as the \emph{Fr\`{e}chet space} of real-valued sequences $f\equiv{(f_o)}_{o\in \obj{\bfC}_{\cong}}$ with semi-norms $\|f\|_{o}:=|f_o|$. Strictly speaking, we are thus tacitly assuming here that the category $\bfC$ is \emph{essentially small}, i.e.\ that it possesses a countable set of isomorphism classes.} of real-valued sequences indexed by isomorphism classes of objects of $\bfC$, and where all coefficients are \emph{finite}. Then a \textbf{\emph{continuous-time Markov chain (CTMC)}} is specified in terms of a tuple of data $(\ket{\Psi(0)},H)$, where
\begin{itemize}[label=$\triangleright$]
  \item $\ket{\Psi(0)}\in \Prob{\bfC}$ is the \emph{initial state}, and where
  \item $H\in End_{\bR}(\cS_{\bfC})$ is the \emph{infinitesimal generator} or \emph{Hamiltonian} of the CTMC\@.
\end{itemize}
The linear operator $H$ is required to be an infinitesimal (sub-)stochastic operator, whence to fulfil the following constraints on its ``matrix elements'' $h_{o,o'}$: 
\begin{equation}\label{eq:defHctmc}
\begin{aligned}
H&\equiv {(h_{o,o'})}_{o,o'\in \obj{\bfC}_{\cong}}\quad \forall o,o'\in \obj{\bfC}_{\cong}:\\
&\quad (i)\; h_{o,o}\leq 0\,,\quad
 (ii) \forall o\neq o':\; h_{o,o'}\geq 0\,,\quad
(iii)\; \sum_{o'} h_{o,o'}=0\,.
\end{aligned}
\end{equation}
According to the mathematical theory of CTMCs~\cite{norris}, under the above conditions this data encodes the \emph{evolution semi-group} $\cE:\bR_{\geq 0}\rightarrow \Stoch{\bfC}$ as the (point-wise minimal non-negative) solution of the \emph{Kolmogorov backwards} or \emph{master equation}:
\begin{equation}
\begin{aligned}
\tfrac{d}{dt}\cE(t)&=H\cE(t)\,,\quad \cE(0)=\mathbb{1}_{End_{\bR}(\cS_{\bfC})}
\Rightarrow\quad &\forall t,t'\in \bR_{\geq 0}: \cE(t)\cE(t')=\cE(t+t')\,.
\end{aligned}
\end{equation}
Consequently, the \emph{time-dependent state} $\ket{\Psi(t)}$ of the system is given by
\begin{equation}
\forall t\in \bR_{\geq 0}:\quad \ket{\Psi(t)}=\cE(t)\ket{\Psi(0)}\,.
\end{equation}
\end{defi}

Typically, our interest in analysing a given CTMC will consist in studying the dynamical statistical behaviour of so-called \emph{observables}. We will fist provide the general definition of observables in CTMCs below, followed by a more specific characterisation in stochastic rewriting systems as part of Theorem~\ref{thm:smf}.
\begin{defi}\label{def:obs}
Let $\cO_{\bfC}\subset End_{\bR}(\cS_{\bfC})$ denote the space of \emph{observables}, defined as the space of \emph{diagonal operators},
\begin{equation}
\cO_{\bfC}:=\{O\in End_{\bR}(\cS_{\bfC})\mid \forall o\in \obj{\bfC}_{\cong}:\; O\ket{o}=\omega_O(o)\ket{o}\,,\; \omega_O(o)\in \bR\}\,.
\end{equation}
We furthermore define\footnote{On distributions in $\cS_{\bfC}$ whose sum of coefficients is not finite, we consider $\bra{}$ to be undefined; in practice, we will however only be interested to evaluate moments of observables on (sub-)probability distributions, i.e.\ one may directly verify the finiteness properties in a given calculation.} the so-called \emph{projection operation} $\bra{}:\cS_{\bfC}\rightarrow \bR$ via extending by linearity the definition of $\bra{}$ acting on basis vectors of $\hat{\bfC}$, 
\begin{equation}
\forall o\in \obj{\bfC}_{\cong}:\quad \braket{}{o}:=1_{\bR}\,.
\end{equation}
These definitions induce a notion of \emph{correlators} of observables, defined for $O_1,\dotsc,O_n\in \cO_{\bfC}$ and $\ket{\Psi}\in \Prob{\bfC}$ as
\begin{equation}
\langle O_1,\dotsc,O_n\rangle_{\ket{\Psi}}:=\bra{}O_1,\dotsc,O_n\ket{\Psi}
=\sum_{o\in \obj{\bfC}_{\cong}}\psi_o\cdot\omega_{O_1}(o)\cdots \omega_{O_n}(o)\,.
\end{equation}
\end{defi}

The precise relationship between the notions of CTMCs and DPO rewriting rules as encoded in the rule algebra formalism is established in the form of the following theorem (compare~\cite{bdg2016}):

\begin{thm}[DPO-type stochastic mechanics framework]\label{thm:smf}
Let $\bfC$ be an $\cM$-adhesive category satisfying Assumption~\ref{ass:RAdpo}, and which in addition possesses an $\cM$-initial object. Let ${\{(\grule{O_j}{r_j}{I_j})\in \cR_{\bfC}\}}_{j\in \cJ}$ be a (finite) set of rule algebra elements, and ${\{\kappa_j\in \bR_{\geq 0}\}}_{j\in \cJ}$ a collection of non-zero parameters (called \emph{base rates}). Then one may construct a Hamiltonian $H$ from this data according to
\begin{equation}
H:=\hat{H}+\bar{H}\,,\quad
\hat{H}:=\sum_{j\in \cJ}\kappa_j\cdot \rho_{\bfC}\left(\grule{O_j}{r_j}{I_j}\right)\,,\quad
\bar{H}:=-\sum_{j\in \cJ}\kappa_j\cdot \rho_{\bfC}\left(\grule{I_j}{id_{dom(r_j)}}{I_j}\right)\,.
\end{equation}
Here, we define for arbitrary $(\GRule{O}{r}{I})\equiv(O\xleftarrow{o}K\xrightarrow{i}I)\in \Lin{\bfC}$
\begin{equation}
(\GRule{I}{id_{dom(r)}}{I}):=(I\xleftarrow{i}K\xrightarrow{i}I)\,.
\end{equation}
The \emph{observables} for the resulting CTMC are operators of the form
\begin{equation}
O_M^t=\rho_{\bfC}\left(\grule{M}{t}{M}\right)\,.
\end{equation}
We furthermore have the \emph{jump-closure property}, whereby for all $(\grule{O}{r}{I})\in \cR_{\bfC}$
\begin{equation}\label{eq:ojc}
\bra{}\rho_{\bfC}(\grule{O}{r}{I})=\bra{}O_I^{id_{dom(r)}}\,.
\end{equation}
\end{thm}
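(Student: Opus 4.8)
The plan is to establish the theorem's three assertions in turn --- that every $O_M^t = \rho_{\bfC}(\grule{M}{t}{M})$ is a genuine observable, that the jump-closure identity \eqref{eq:ojc} holds, and that $H = \hat H + \bar H$ meets the generator constraints \eqref{eq:defHctmc} --- with all three resting on a single \emph{counting lemma}: an identity-type rule $\grule{M}{t}{M} = (M \xleftarrow{t} K \xrightarrow{t} M)$ with coinciding legs acts under the canonical representation as a diagonal match-counting operator, $\rho_{\bfC}(\grule{M}{t}{M})\ket{C} = \lvert\Match{(\grule{M}{t}{M})}{C}\rvert\,\ket{C}$. I would prove this lemma first and then read off observables, jump-closure, and the generator conditions from it.

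To prove the lemma, recall from Definition~\ref{def:DPOr} that an admissible match is a mono $m\colon M \to C$ in $\cM$ for which the pushout complement of $K \xrightarrow{t} M \xrightarrow{m} C$ exists; the DPO diagram \eqref{eq:DPOr} is then assembled from that pushout-complement square, which exhibits $C$ as the pushout of $\overline{K} \leftarrow K \xrightarrow{t} M$, together with the left pushout taken over the \emph{same} span (the output leg again being $t$). By uniqueness of pushouts (and of pushout complements, Lemma~\ref{lem:convenient}) the produced object is $C$ up to isomorphism, so $\rho_{\bfC}(\grule{M}{t}{M})$ is diagonal with eigenvalue on $\ket{C}$ equal to the number of such matches. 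In particular each $O_M^t$ lies in $\cO_{\bfC}$, which is the observables assertion. Specialising to $M = I$, $t = i$ for a rule $r = (O \xleftarrow{o} K \xrightarrow{i} I)$, and noting that $r$ and its companion $\grule{I}{id_{dom(r)}}{I}$ share the input leg $K \xrightarrow{i} I$ so that their gluing conditions --- and hence their admissible-match sets --- coincide, gives $O_I^{id_{dom(r)}}\ket{C} = \lvert\Match{r}{C}\rvert\,\ket{C}$.

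Jump-closure \eqref{eq:ojc} now follows by testing both sides on a basis vector $\ket{C}$ and using $\braket{}{o} = 1$. By \eqref{eq:canRep}, $\bra{}\rho_{\bfC}(\grule{O}{r}{I})\ket{C} = \sum_{m\in\Match{r}{C}}\braket{}{r_m(C)} = \lvert\Match{r}{C}\rvert$, while the counting lemma gives $\bra{}O_I^{id_{dom(r)}}\ket{C} = \lvert\Match{r}{C}\rvert\,\braket{}{C} = \lvert\Match{r}{C}\rvert$. Since the two agree on every basis vector, linearity yields \eqref{eq:ojc}.

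For the generator constraints, write $h_{o,o'}$ for the matrix elements of $H$ in the basis $\{\ket{C}\}$. The off-diagonal entries come only from $\hat H = \sum_j \kappa_j\,\rho_{\bfC}(\grule{O_j}{r_j}{I_j})$, since $\bar H$ is diagonal, and are nonnegative combinations of match counts, giving (ii); the diagonal of $H$ is the nonnegative count of self-reproducing matches in $\hat H$ minus $\sum_j \kappa_j\lvert\Match{r_j}{C}\rvert$ contributed by $\bar H$, hence $\leq 0$, giving (i). Condition (iii), i.e. the probability-conservation identity $\bra{}H = 0$, is exactly where jump-closure is indispensable: applying \eqref{eq:ojc} term by term gives $\bra{}\hat H = \sum_j \kappa_j\,\bra{}O_{I_j}^{id_{dom(r_j)}} = -\bra{}\bar H$, so $\bra{}H = 0$. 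Finitarity (Assumption~\ref{ass:RAdpo}) ensures each object admits only finitely many admissible matches of each rule, so these sums are finite and $H$ is a well-defined operator on $\cS_{\bfC}$. I expect the one genuinely delicate point to be the counting lemma --- specifically the identification of the left DPO square as a pushout over the very span that characterises the pushout complement, so that an identity-type rule reproduces its input; everything downstream is a short computation.
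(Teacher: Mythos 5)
Your proposal is correct and follows essentially the same route as the paper's proof: the diagonal, match-counting action of identity-type rules $(M\xleftarrow{t}K\xrightarrow{t}M)$, the observation that a rule and its companion $(\grule{I}{id_{dom(r)}}{I})$ share the input leg and hence the same gluing condition and match set, and the evaluation of both sides of \eqref{eq:ojc} on basis vectors. The only differences are cosmetic: you organise everything around a single counting lemma and spell out the verification of the generator conditions (i)--(iii), which the paper leaves as "one may verify", while the paper additionally argues the converse characterisation that \emph{only} rules of identity type yield diagonal operators.
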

\begin{proof}
By definition, the DPO-type canonical representation of a generic rule algebra element $(\grule{O}{r}{I})\in \cR_{\bfC}$ is a row-finite linear operator, since for every $C\in \obj{\bfC}_{\cong}$ the set of admissible matches $\Match{p}{C}$ of the associated linear rule $p\equiv(\GRule{I}{r}{O})$ is finite. Consequently, $\rho_{\bfC}(\grule{O}{r}{I})$ lifts consistently from a linear operator in $End(\hat{\bfC})$ to a linear operator in $End(\cS_{\bfC})$. Let us prove next the claim on the precise structure of observables. Recall that according to Definition~\ref{def:obs}, an observable $O\in \cO_{\bfC}$ must be a linear operator in $End(\cS_{\bfC})$ that acts \emph{diagonally} on basis states $\ket{C}$ (for $C\in \obj{\bfC}_{\cong}$), whence that satisfies for all $C\in \obj{\bfC}_{\cong}$
\begin{equation*}
O\ket{C}=\omega_O(C)\ket{C}\quad (\omega_O(C)\in \bR)\,.
\end{equation*}
Comparing this equation to the definition of the DPO-type canonical representation (Definition~\ref{def:canRep}) of a generic rule algebra basis element $\delta(p)\in \cR_{\bfC}$ (for $p\equiv(I\xleftarrow{i}K\xrightarrow{o}O)\in \Lin{\bfC}$),
\begin{equation*}
\rho_{\bfC}(\delta(p))\ket{C}:=\begin{cases}
\sum_{m\in \Match{p}{C}}\ket{p_m(C)}\quad &\text{if }\Match{p}{C}\neq \emptyset\\
0_{\hat{\bfC}}&\text{else,}
\end{cases}
\end{equation*}
we find that in order for $\rho_{\bfC}(\delta(p))$ to be diagonal we must have
\begin{equation*}
 \forall C\in \obj{\bfC}:\forall m\in \Match{p}{C}:\quad p_m(C)\cong C\,.
\end{equation*}
But by definition of derivations of objects along admissible matches (Definition~\ref{def:DPOr}), the only linear rules $p\in \Lin{\bfC}$ that have this special property are precisely the rules of the form
\begin{equation*}
p^r_M= (M\xleftarrow{r}K\xrightarrow{r}M)\,.
\end{equation*}
In particular, defining $O_M^r:=\rho_{\bfC}(\delta(p^r_M))$, we find that the eigenvalue $\omega_{O^r_M}(C)$ coincides with the cardinality of the set $\Match{p^r_M}{C}$ of admissible matches,
\begin{equation*}
\forall C\in \obj{\bfC}_{\cong}:\quad  O_M^r\ket{C}=|\Match{p}{C}|\cdot\ket{C}\,.
\end{equation*}
This proves that the operators $O^r_M$ form a basis of diagonal operators on $End(\bfC)$ (and thus on $End(\cS_{\bfC})$) that arise from linear combinations of canonical representations of rule algebra elements.

To prove the jump-closure property, note that it follows from Definition~\ref{def:DPOr} that for an arbitrary linear rule $p\equiv(I\xleftarrow{i}K\xrightarrow{o}O)\in \Lin{\bfC}$, a generic object $C\in \obj{\bfC}$ and a $\cM$-morphism $m:I\rightarrow C$, the admissibility of $m$ as a match is determined by whether or not the match fulfils the gluing condition (Definition~\ref{def:gc}), i.e.\ whether or not the following pushout complement exists,
\begin{equation*}\gdef\mycdScale{0.85}
\begin{mycd}
I\ar[r,leftarrow,"i"]\ar[d,"m"'] & K\ar[d,"g",dashed]\ar[dl,phantom,"\mathsf{POC}"]\\
C\ar[r,leftarrow,dashed,"v"'] & E
\end{mycd}\,.
\end{equation*}
Thus we find that with $p'=(I\xleftarrow{i}K\xrightarrow{i}I)\in \Lin{\bfC}$, the set $\Match{p}{C}$ of admissible matches of $p$ in $C$ and $\Match{p'}{C}$ of $p'$ in $C$ have the same cardinality. Combining this with the definition of the projection operator $\bra{}$ (Definition~\ref{def:obs}),
\begin{equation*}
\forall C\in \obj{\bfC}_{\cong}:\quad \braket{}{C}:=1_{\bR}\,,
\end{equation*}
we may prove the claim of the jump-closure property via verifying it on arbitrary basis elements (with notations as above):
\begin{equation*}
\bra{}\rho_{\bfC}(\delta(p))\ket{C}=|\Match{p}{C}|=|\Match{p'}{C}|=\bra{}\rho_{\bfC}(\delta(p'))\ket{C}\,.
\end{equation*}
Since $C\in \obj{\bfC}_{\cong}$ was chosen arbitrarily, we thus have indeed that
\begin{equation*}
\bra{}\rho_{\bfC}(\delta(p))=\bra{}\rho_{\bfC}(\delta(p'))\,.
\end{equation*}
Finally, combining all of these findings, one may verify that $H$ as stated in the theorem fulfils all required properties in order to qualify as an infinitesimal generator of a continuous-time Markov chain.
\end{proof}

We illustrate the framework with an example of a stochastic rewriting system based on the category $\mathbf{uGraph}$ of finite undirected multigraphs and morphisms thereof, where we pick the two rule algebra elements $e_{+}$ and $e_{-}$ specified in~\eqref{eq:Adef} to define the transitions of the system. Together with two non-negative real parameters $\kappa_{+},\kappa_{-}\in \bR_{\geq0}$, the resulting Hamiltonian $H=\hat{H}+\bar{H}$ reads (with $E_{\pm}:=\rho(e_{\pm})$ and $O_{\bullet}$ as in~\eqref{eq:defOv})
\begin{equation}
\hat{H}=\kappa_{+}E_{+}+\kappa_{-} E_{-}\,,\quad \bar{H}=-\tfrac{1}{2}\kappa_{+}O_{\bullet}(O_{\bullet}-1)-\kappa_{-}O_E\,,
\quad O_E:=\tfrac{1}{2}\rho\left(\tP{%
  \vI{1}{1}{black}{}{black}{}
  \vI{1}{2}{black}{}{black}{}
  \eI{1}{1}{=}{black}{}{black}{}}\right)\,.
\end{equation}
Let us assume for simplicity that we start our evolution from an initial state $\ket{\Psi(0)}=\ket{G_0}$, with $G_0$ (the isomorphism class of) some finite undirected graph. We denote by $N_V$ and $N_E$ the number of vertices and edges of $G_0$, respectively, which may be computed as
\begin{equation}
\bra{}O_{\bullet}\ket{G_0}=N_V\,,\quad \bra{}O_E\ket{G_0}=N_E\,.
\end{equation}
Since the two linear rules that define the system create and delete edges, but do not modify the number of vertices, the time-dependent probability distribution $\ket{\Psi(t)}$ (for $t\geq 0$) with $\ket{\Psi(0)}=\ket{G_0}$ is supported on graph states that all have the same number of vertices $N_V$ as the initial graph $G_0$, which entails that
\begin{equation}\label{eq:OVstatic}
  \forall\; t\geq 0:\quad \bra{}O_{\bullet}\ket{\Psi(t)}=N_V\,.
\end{equation}
Let us thus focus on the dynamics of the edge-counting observable $O_E$. We follow the strategy put forward in~\cite{bdg2016,bdg2018} and consider the \emph{exponential moment generating function} $E(t;\epsilon)$ of $O_E$, defined as
\begin{equation}
E(t;\epsilon):=\bra{}e^{\epsilon O_E}\ket{\Psi(t)}\,,
\end{equation}
where $\epsilon$ is a formal variable. More explicitly, $E(t;\epsilon)$ encodes the \emph{statistical moments} of $O_E$, in the sense that for all (finite) $n\geq1$,
\begin{equation}
  \left[\tfrac{\partial^n}{\partial \varepsilon^n}E(t;\varepsilon)\right]\big\vert_{\varepsilon\to0}=\bra{}O_E^n\ket{\Psi(t)}\,.
\end{equation}
We may calculate the \emph{evolution equation} for $E(t;\varepsilon)$ as follows (compare~\cite{bdg2016,bdg2018}):
\begin{equation}\label{eq:appExAux1}
\begin{aligned}
  \tfrac{\partial}{\partial t}E(t;\varepsilon)&=
  \bra{}e^{\varepsilon O_E}H\ket{\Psi(t)}\\
  &=\bra{}\left(e^{\varepsilon O_E}He^{-\varepsilon O_E} \right)e^{\varepsilon O_E}\ket{\Psi(t)}\\
  &=\sum_{n\geq 0}\frac{1}{n!}\bra{}\left(ad_{\varepsilon O_E}^{\circ\:n}(H)\right)e^{\varepsilon O_E}\ket{\Psi(t)}\,.
\end{aligned}
\end{equation}
Here, in the last step, we have taken advantage of a variant of the BCH formula (see e.g.~\cite[Prop.~3.35]{hall2015lieGroups}), whereby for two composable linear operators $A$ and $B$ and for a formal variable $\lambda$,
\begin{equation}
  e^{\lambda A}Be^{-\lambda A}=\sum_{n\geq 0}\frac{1}{n!} ad_{\lambda A}^{\circ \:n}(B)\,,
\end{equation}
with the adjoint action defined via the so-called \emph{commutator} $[.,.]$,
\begin{equation}
  ad_A(B):=[A,B]=AB-BA \,.
\end{equation}
Moreover, we let $ad_A^{0}(B):= B$, and for $n\geq1$,
\begin{equation}
  ad_A^{\circ \:n}(B):=[A,[A,[\dotsc,[A,B]\dotsc]]]
\end{equation}
denotes the $n$-fold nested commutator. Taking advantage of the general fact that a Hamiltonian as constructed according to Theorem~\ref{thm:smf} verifies
\begin{equation}\label{eq:Haux}
\bra{}H=0\,,
\end{equation}
we may conclude that the term in~\eqref{eq:appExAux1} for $n=0$ vanishes identically. In order to compute the terms for $n\geq 1$, it is straightforward to verify that
\begin{equation}
  ad_{\varepsilon O_E}(\kappa_{+}E_{+})=\varepsilon\kappa_{+}E_{+}\,,\quad
  ad_{\varepsilon O_E}(\kappa_{-}E_{-})=-\varepsilon\kappa_{-}E_{-}\,,\quad
  ad_{\varepsilon O_E}(\bar{H})=0\,,
\end{equation}
which entails that\footnote{It may be worth emphasising that it is this particular type of calculation for which the rule algebra framework provides the technical prerequisites, as it would be otherwise impossible to reason about infinite series of causal interactions and rewriting steps.}
\begin{equation}
  \sum_{n\geq 1}\frac{1}{n!}\, ad_{\varepsilon O_E}^{\circ\:n}(H)
  =\kappa_{+}\left(e^{\varepsilon}-1\right)E_{+}
  +\kappa_{-}\left(e^{-\varepsilon}-1\right)E_{-}\,.
\end{equation}
To proceed, we invoke the \emph{jump-closure property} as described in~\eqref{eq:ojc} to conclude that
\begin{equation}\label{eq:appExJC}
  \bra{}E_{+}=\tfrac{1}{2}\bra{}O_{\bullet}(O_{\bullet}-1)\,,\quad
  \bra{}E_{-}=\bra{}O_E\,.
\end{equation}
Recalling our earlier result as presented in~\eqref{eq:OVstatic}, we find that
\begin{equation}
\tfrac{1}{2}\bra{}O_{\bullet}(O_{\bullet}-1)e^{\varepsilon O_E}\ket{\Psi(t)}
=\tfrac{1}{2}\bra{}e^{\varepsilon O_E}O_{\bullet}(O_{\bullet}-1)\ket{\Psi(t)}
\overset{\eqref{eq:OVstatic}}{=}\binom{N_V}{2} E(t;\varepsilon)\,,
\end{equation}
where in the first step we have made use of the fact that observables commute (i.e.\ in particular $[O_{\bullet},O_E]=0$). As for the contribution due to $\bra{}E_{-}$, note that
\begin{equation}
  \bra{}E_{-}e^{\varepsilon O_E}\ket{\Psi(t)}
  \overset{\eqref{eq:appExJC}}{=}\bra{}O_E e^{\varepsilon O_E}\ket{\Psi(t)}
  =\tfrac{\partial}{\partial \varepsilon} \bra{}e^{\varepsilon O_E}\ket{\Psi(t)}
  =\tfrac{\partial}{\partial \varepsilon} E(t;\varepsilon)\,.
\end{equation}
Assembling these results into~\eqref{eq:appExAux1}, and with $E(0;\varepsilon)=e^{\varepsilon N_E}$ (for $\ket{\Psi(0)}=\ket{G_0}$, and with $N_E$ edges in $G_0$), we obtain the following refined form for the evolution equation of $E(t;\varepsilon)$:
\begin{equation}\label{eq:appExAux2}
\begin{aligned}
  \tfrac{\partial}{\partial t}E(t;\varepsilon)&=
  \left[
    \kappa_{+}\binom{N_V}{2}\left(e^{\varepsilon}-1\right)
    +\kappa_{-}\left(e^{-\varepsilon}-1\right)\tfrac{\partial}{\partial\varepsilon}
  \right] E(t;\varepsilon)\,,\qquad E(0;\varepsilon)=e^{\varepsilon N_E}\,.
\end{aligned}
\end{equation}
In other words, we have thus transformed the problem of studying the dynamics of the edge-counting observable $O_E$ into the problem of studying the evolution-equation~\eqref{eq:appExAux2}. We may employ a standard technique well-known from the combinatorics literature to solve this problem in closed form, namely the so-called \emph{semi-linear normal-ordering technique} as introduced in~\cite{Dattoli:1997iz,blasiak2005boson,blasiak2011combinatorial} (and recently applied in~\cite{bdp2017} to semi-linear PDEs for chemical reaction systems). More concretely, we recognise that the differential operator in~\eqref{eq:appExAux2} has the ``semi-linear'' structure,
\begin{equation}
  h=q(\varepsilon)\tfrac{\partial}{\partial\varepsilon}+v(\varepsilon)\,,
  \quad
  q(\varepsilon)=\kappa_{-}\left(e^{-\varepsilon}-1\right)\,,\quad
  v(\varepsilon)=\kappa_{+}\binom{N_V}{2}\left(e^{\varepsilon}-1\right)\,.
\end{equation}
The general semi-linear normal-ordering formula then implies that given such a semi-linear differential operator and an evolution equation such as~\eqref{eq:appExAux2},
\begin{equation}
  \tfrac{\partial}{\partial t}E(t;\varepsilon)=\left[q(\varepsilon)\tfrac{\partial}{\partial\varepsilon}+v(\varepsilon)\right]E(t;\varepsilon)\,,\quad E(0;t)=E_0(\varepsilon)\,,
\end{equation}
the solution of this equation reads\footnote{To be fully precise, in a given problem one first has to compute the \emph{formal} solution (i.e.\ with $t$ a formal rather than a real-valued variable) using the normal-ordering formula, and then in a separate step verify that the solution thus obtained is convergent upon specialising $t$ to a real-valued variable.}
\begin{equation}\label{eq:evoResult}
\begin{aligned}
  E(t;\varepsilon)&=g(t;\varepsilon)E_0(T(t;\varepsilon))\,,\qquad
  \left\{
  \begin{array}{rcl}
  \tfrac{\partial}{\partial t}T(t;\varepsilon)&=&q(T(t;\varepsilon))\,,\quad T(0;\varepsilon)=\varepsilon\\
  \ln(g(t;\varepsilon))&=&\int_0^t dw\, v(T(w;\varepsilon))\,.
  \end{array}\right.
\end{aligned}
\end{equation}
Thus via solving the above PDE for $T(t;\varepsilon)$ and performing the integration to obtain $\ln(g(t;\varepsilon))$, we finally arrive at the following closed-form solution of the evolution equation~\eqref{eq:appExAux2}:
\begin{equation}
E(t;\varepsilon)=e^{\frac{\kappa_{+}}{\kappa_{-}}\binom{N_V}{2}\left(e^{\varepsilon}-1\right)\left(1-e^{-\kappa_{-}t}\right)}{\left(\left(e^{\varepsilon}-1\right)e^{-\kappa_{-}t}+1\right)}^{N_E}\,.
\end{equation}
For illustration, we present in Figure~\ref{fig:timeEv} the time-evolution of $\langle O_E\rangle(t)$ (whence of the first $\varepsilon$-derivative of $E(t;\varepsilon)$ evaluated at $\varepsilon=0$) for three different choices of parameters $\kappa_{+}$ and $\kappa_{-}$, and for four different choices each of initial number of edges $N_E$.

\medskip
As a further refinement, since $E(t;\varepsilon)$ is the moment-generating function of a univariate probability distribution, we may take advantage of the well-known relationship (see e.g.\ \cite{bdp2017} for further details) between the moment-generating function $E(t;\varepsilon)$ and the \emph{probability generating function (PGF)} $P(t;\lambda)$,
\begin{equation}
P(t;\lambda)=\sum_{n\geq 0}p_n(t)\lambda^n=E(t;\ln\lambda)\,,
\end{equation}
with $p_n(t)$ interpreted as the probability to count precisely $n$ edges at time $t$ (for $t\geq0$). Thus we may transform the result~\eqref{eq:evoResult} into the easier to interpret form
\begin{equation}
\begin{aligned}
  P(t;\lambda)&=Pois\left(\lambda;\frac{\kappa_{+}}{\kappa_{-}}\binom{N_V}{2}\left(1-e^{-\kappa_{-}t}\right)\right)Binom(\lambda;e^{-\kappa_{-}t},N_E)\\
  Pois(\lambda;\alpha)&=e^{\alpha(\lambda-1)}\,,\quad
  Binom(\lambda;\alpha,N)={\left(\alpha\lambda +(1-\alpha)\right)}^N\,,
\end{aligned}
\end{equation}
where $Pois(\lambda;\alpha)$ denotes the PGF of \emph{Poisson distribution} (of parameter $0\leq \alpha <\infty$), and where $Binom(\lambda;\alpha,N)$ denotes the PGF of a \emph{Binomial distribution} (of parameters $0\leq \alpha\leq 1$ and $N\in\mathbb{Z}_{\geq 0}$). Referring yet again to~\cite{bdp2017} for further details, since the PGF of the \emph{convolution} of two probability distributions is given by the product of their PGFs, we thus find that the dynamics of the edge-counting observable $O_E$ is described in terms of a \emph{convolution} of a Poisson-distribution with a binomial distribution. Moreover, in the limit $t\to\infty$ we simply find that the number of edges in the distribution over graph states is Poisson-distributed,
\begin{equation}
\lim\limits_{t\to\infty}P(t;\lambda)=Pois\left(\lambda;\frac{\kappa_{+}}{\kappa_{-}}\binom{N_V}{2}\right)\,.
\end{equation}
Interestingly, the coefficient $\binom{N_V}{2}$ in this equation is precisely the number of edges of a \emph{complete graph} on $N_V$ vertices. Another interesting observation concerns a special choice of base rates $\kappa_{\pm}$ and initial state $\ket{\Psi(0)}$: if $\kappa_{+}=\kappa_{-}$ and $N_E=N_{E*}=\binom{N_V}{2}$, one may compute from~\eqref{eq:evoResult} $\langle O_E\rangle(t)=N_{E*}=const$ for all $t\geq 0$. All of these findings combined entail that the edge creation and deletion process described here is in fact nothing else but a so-called \emph{birth-death process} of random deletion and creation of ``particles'', with the role of ``particles'' played in the present case by the edges of the graphs that the system evolves upon. This result might be somewhat anticipated, in that for the special case $N_V=2$ we found in the previous section that $E_{+}$ and $E_{-}$ acting on the states with two vertices effectively yield a representation of the Heisenberg-Weyl algebra, whence in this case the process reduces trivially to a birth-death process on edges with rates $\kappa_{+}$ and $\kappa_{-}$ (see~\cite{bdp2017} for further details on chemical reaction systems). As an outlook, we have conducted in~\cite{bdg2018} a full study of the interesting phenomenon of a stochastic rewriting system on state-spaces of graph-like structures exhibiting dynamics that is comparable in nature and mathematical structure to the dynamics of discrete transition systems such as chemical reaction systems and branching processes.

\begin{figure}
  \caption{Time-evolution of $\langle O_E\rangle(t)$ for $\ket{\Psi(0)}=\ket{G_0}$ with $N_V=100$.\label{fig:timeEv}}
  \centering
    \includegraphics[width=0.5\textwidth]{./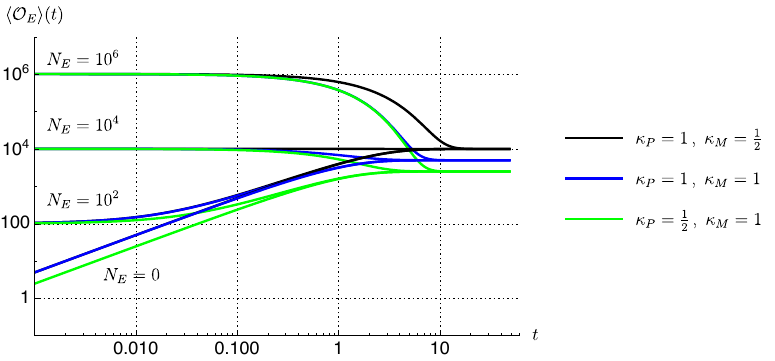}
\end{figure}

\section{Conclusion and Outlook}%
\label{sec:conclusion}

The elegance and effectiveness of traditional Double-Pushout (DPO) rewriting over $\cM$-adhesive categories~\cite{ehrig:2006aa} consists in describing in one uniform categorical framework a wide variety of rewriting systems (cf.\ e.g.\  Table~\ref{tab:adh}) of relevance in many practical applications. It must therefore be considered as somewhat of a striking surprise that techniques of rewriting are not more ubiquitous in the applied mathematics, combinatorics and life sciences literature, given that in many of these fields, models and systems of computations formulated in terms of manipulations of graph-like structures are very frequently utilised. Originally motivated~\cite{bdp2017,bdg2016} by studying problems in the setting of chemical reaction systems and in stochastic graph rewriting systems, we introduce in this paper what we believe to constitute a first essential stepping stone in order to overcome the aforementioned conceptual divide: the \emph{DPO rule algebra} framework. Intuitively, an individual DPO rewriting rule may typically be applied in a number of different ways (i.e.\ mediated by different admissible matches) to a given object; the key idea of the present paper consists in encoding this particular form of \emph{non-determinism} in a setting of vector spaces (with basis indexed by isomorphism classes of objects) and linear operators mapping a given basis vector to ``the sum over all possible applications'' of a given rule. Modulo certain technical implementation details, this approach however immediately raises a conceptual question about the precise nature of \emph{sequential applications} of rules to objects: it must be ensured that applying two particular rules in sequence to a given object ``in all possible ways'' is equivalent to applying the aforementioned linear operators (one after the other) onto the basis vector indexed by the object. Inspired by earlier work~\cite{bdg2016} on multigraph DPO rewriting systems implemented in a formulation based upon binary relations, we present here a general implementation of a consistent mathematical framework that reveals the notion of an \emph{associative unital algebra} that may be constructed based upon sequential compositions of DPO rules. As illustrated in the special case of DPO rewriting systems on \emph{discrete graphs} (Section~\ref{sec:HW}), defining a \emph{representation} for the aforementioned algebra amounts to precisely the construction of certain linear operators that encode faithfully the sequential applications of DPO rules to objects. Our general construction hinges upon our novel theorem on the associativity of the operation of forming DPO-type concurrent compositions of linear rewriting rules (Theorem~\ref{thm:assocDPO}), based upon which we introduce the concept of \emph{rule algebras} in Definition~\ref{def:RADPO}: each (isomorphism class of a) linear rule is mapped to an element of an abstract vector space, on which the sequential rule composition operation is implemented as a certain bilinear binary operation. For every $\cM$-adhesive category $\bfC$ that is finitary and that possesses $\cM$-effective unions, the associated rule algebra is associative, and if the category possesses an $\cM$-initial object, this algebra is in addition unital. Each rule algebra element in turn gives rise via the construction of the \emph{canonical representation} (Definition~\ref{def:canRep}) to a linear operator acting upon the vector space with basis indexed by isomorphism classes of objects, solving precisely the problem of encoding the non-determinism in rule applications in a principled and general fashion. We then hinted at the potential of our approach in the realm of combinatorics in Section~\ref{sec:RT}, and, as a first major application of our framework, we presented in Section~\ref{sec:SM} a \emph{universal construction of continuous-time Markov chains} based upon linear DPO rules of $\cM$-adhesive categories $\bfC$.

The general motivation of this paper consisted in rendering techniques of rewriting more accessible to practitioners and theoreticians beyond the traditional rewriting theory community. In line with these efforts, we have recently continued the development of our rule algebra framework to include another key framework for categorical rewriting over adhesive categories (so-called \emph{Sesqui-Pushout (SqPO) rewriting}) in~\cite{Behr_2019_sqpo}, and a variant for both DPO- and SqPO-type rewriting in the settings of rewriting with \emph{application conditions} and with \emph{constraints} on objects in~\cite{behr2019compositionality}. The latter contains a general associativity theorem for rules with conditions, thus providing the  prerequisites for achieving a general rule algebra framework (work currently in progress) that can capture the graph-like data structures of relevance in many practical applications that differ from idealised mathematical structures such as directed multigraphs via certain additional constraints on objects (such as e.g.\ in planar binary trees). In terms of analysing stochastic rewriting systems via rule-algebraic methods, we have recently reported in~\cite{bdg2018} a first set of custom techniques that permit to exploit the rule algebra structure in order to extract dynamical information from these stochastic systems. Finally, the associativity property of rule compositions itself has been demonstrated in~\cite{behr2019tracelets} to give rise to a novel concept of so-called \emph{tracelets}, which permit to minimally and fully characterise the compositional and causal properties underlying sequences of applications of rewriting rules (so-called derivation traces). While currently still in its early stages of development, we envision that our novel viewpoint on the study of rewriting systems will ultimately yield rich and fruitful interactions of the specialist field of categorical rewriting theory with the broader applied mathematics, computer and life sciences research fields.

\end{document}